\documentclass{sig-alternate}
\usepackage{algorithm}
\usepackage{algorithmic}

\def\vc#1{\mbox{\boldmath $#1$}}

\newcommand{\up}[2]{\smash{\lower-#2\hbox{#1}}}

\makeatletter
\makeatother
\newtheorem{defi}{Definition}[section]
  \newtheorem{theo}{Theorem}[section]
  \newtheorem{prop}{Proposition}[section]
  \newtheorem{lem}{Lemma}[section]
\newtheorem{kei}{Corollary}[section]
\newtheorem{remark}{Remark}[section]

\makeatletter
    
    \@addtoreset{equation}{section}
  \makeatother

\begin{document}
%

%\title{Single server Markovian retrial queues with multiple types of outgoing call}
%
%\title{Multiserver queue with guard channel for priority and retrial customers}
%
%\title{Asymptotic and Numerical Analysis of a Multiserver Retrial Queue with Guard Channel Arising from  Cellular Networks}

\title{Asymptotic and Numerical Analysis of Multiserver Retrial Queue with Guard Channel for  Cellular Networks}

\numberofauthors{2} %  in this sample file, there are a *total*
% of EIGHT authors. SIX appear on the 'first-page' (for formatting
% reasons) and the remaining two appear in the \additionalauthors section.
%
\author{
% 1st. author
\alignauthor
Kazuki Kajiwara \\
       \affaddr{Department of Mathematical and Computing Sciences}\\
       \affaddr{Tokyo Institute of Technology}\\
       \affaddr{Ookayama, Meguro-ku, Tokyo 152-8552, Japan}\\
       \email{kajiwara1220@gmail.com}
% 2nd. author
\alignauthor
Tuan Phung-Duc\thanks{Corresponding author}\\
       \affaddr{Department of Mathematical and Computing Sciences}\\
       \affaddr{Tokyo Institute of Technology}\\
       \affaddr{Ookayama, Meguro-ku, Tokyo 152-8552, Japan}\\
       \email{tuan@is.titech.ac.jp}
}

\maketitle

\begin{abstract}
This paper considers a retrial queueing model for a base station in cellular networks where fresh calls and handover calls are available. Fresh calls are initiated from the cell of the base station. On the other hand, a handover call has been connecting to a base station and moves to another one. In order to keep the continuation of the communication, it is desired that  an available channel in the new base station is immediately assigned to the handover call. 
To this end, a channel is reserved as the guard channel for handover calls in base stations. Blocked fresh and handover calls join a virtual orbit and repeat their attempts in a later time. We assume that a base station can recognize retrial calls and give them the same priority as that of handover calls. We model a base station by a multiserver retrial queue with priority customers for which a level-dependent QBD process is formulated. We obtain Taylor series expansion for the nonzero elements of the rate matrices of the level-dependent QBD. Using the expansion results, we obtain an asymptotic upper bound for the joint stationary distribution of the number of busy channels and that of customers in the orbit. Furthermore, we derive an efficient numerical algorithm to calculate the joint stationary distribution.
\end{abstract}

\section{Introduction}
In this paper, we consider multiserver retrial queues with a guard channel for priority and retrial customers. Retrial queues are characterized by the fact that a blocked customer repeats its request after a random time. During 
retrial intervals, the customer is said to be in the orbit. This type of queueing models is widely used in modelling and performance analysis of communication and service systems, especially in cellular networks~\cite{abst,approximation_cellular,Do10,Artalejo_Lopez}. 
For instance, Tran-Gia and Mandjes~\cite{abst} report the influence of retrials on the performance of cellular networks using retrial queueing models. Marsan et al.~\cite{approximation_cellular} carries out a fixed point approximation analysis for retrial queueing models arising from 
cellular networks. Artalejo and Lopez-Herrero analyze a multiserver queue for cellular networks operating under a random environment using four-dimensional Markov chains. 
Do~\cite{Do10} investigates the model presented in~\cite{abst} by a fixed point approximation method based on the corresponding model with constant retrial rate.

In cellular networks, users may move from one cell to another cell. A handover call is a call that arrives at the current cell from an adjacent cell where it has been connecting with the base station of that cell. Thus, in order to keep the continuation of the communication, 
it is desired that an available channel is immediately assigned to a handover call upon its arrival. On the other hand, a fresh call is a call that is initiated from inside the cell of the base station. 
Therefore, from a quality of service (QoS) point of view, blocking of a handover call has more negative impact than that of a fresh call. 

Tran-Gia and Mandjes~\cite{abst} propose some multiserver retrial queues with fresh and handover calls and guard channels for a base station in cellular networks. In~\cite{abst}, the orbit size is assumed to be finite and the same priority is given for both retrial calls and fresh calls. 
It should be noted that the analysis is simplified by the finite assumption for the orbit size. 
In contrast to this, we consider in this paper a model with infinite orbit size where retrial calls and handover calls have higher priority than fresh calls. Although the base station needs to distinguish retrial calls and new calls (that arrive for the first time), this allows to reduce 
the number of retrials per customer and to improve the QoS. 

The analysis of multiserver retrial queues is challenging due to the fact that the underlying Markov chain is state nonhomogeneous because the retrial rate is proportional to the number of customers in the orbit. 
Thus, even for the fundamental model with one type of traffic and without guard channels, an analytical solution is available for only some special cases, i.e., one or two servers~\cite{bernoulli}. 

For models with both retrial and guard channels, although some numerical methods~\cite{abst,approximation_cellular,Do10,Artalejo_Lopez} have been presented, there is no analytical result available. 
This motivates us to consider a novel model with both retrials and a guard channel for which we explore both new analytical and numerical results. From the modelling point of view, the novelty is the priority given to retrial calls. 
To the best of our knowledge, this paper is the first to consider priority for retrial calls. 

In this paper, we consider only one guard channel. This assumption is not restrictive because i) the model with one guard channel is complex enough, ii) the analysis for arbitrary number of guard channels is straightforward 
and iii) in a cellular network context, one guard channel is enough to guarantee QoS~\cite {abst}. 
We formulate the queueing system using a level-dependent QBD process where the level and the phase are referred to as the number of calls in the orbit and that of busy channels, respectively. 
As is well known, the stationary distribution for multiserver retrial queue is analytically tractable for the case of one or two servers only~\cite{Retrial_queues}. Thus, it is difficult to get analytical insights into retrial queueing models. 
We refer to~\cite{Pearce89,phung1,bernoulli} for some efforts in finding analytical expressions for the joint stationary distribution.

 The stationary distribution of level-dependent QBDs can be expressed in terms of a sequence of rate matrices~\cite{Rdef}. Thus, we can characterize the stationary distribution through the sequence of rate matrices. 
The QBD process of our model possesses some special structure, i.e., only the last two rows are nonzero allowing us to get some insights into the structure of the stationary distribution. 
Liu and Zhao~\cite{analying_retrial} use this property to obtain upper and lower asymptotic bounds for the stationary distribution of the fundamental retrial model without guard channels. 
Liu et al.~\cite{Liub} further extend their analysis to the model with nonpersistent customers. 
Phung-Duc~\cite{Perturbation} presents a perturbation analysis for a multiserver retrial queues with two type of nonpersistent customers. 
In~\cite{Perturbation}, the author derives Taylor series expansion formulae for the nonzero elements of the rate matrices. 
The different point of our model in comparison with the above work is that the last two rows of the rate matrices are nonzero in our model while for those in~\cite{analying_retrial,Perturbation} only the last row is nonzero. 
This makes the analysis more complex and challenging. 

The main contribution of our paper is threefold. First, using a censoring technique and a perturbation method, we obtain Taylor series expansion for the rate matrices in terms of the number of customers in the orbit. 
Our formula is general in the sense that we can obtain the expansion with arbitrary number of terms, what was not reported in Liu and Zhao~\cite{analying_retrial}. Second, using this result we obtain an asymptotic upper bound for the stationary distribution which 
is more challenging than~\cite{analying_retrial} and~\cite{Perturbation} due to the denseness of the rate matrices.  
Third, using the special structure of the rate matrix and a matrix continued fraction approach~\cite{matrix-conf}, we propose an efficient numerical algorithm extending that of Phung-Duc et al.~\cite{sutikeisan} for the rate matrices and then for the stationary distribution. 
The computational complexity of the algorithm is in the order of the number of channels. 

The rest of our paper is organized as follows. Section 2 presents the model and some preliminary results on the QBD formulation. 
Section 3 is devoted to the presentation of Taylor series expansion for the rate matrices. In Section 4, we show the asymptotic upper bound for the joint stationary distribution while a numerical algorithm for the joint stationary distribution 
is presented in Section 5. Section 6 provides some numerical examples. Section 7 concludes our paper and presents some future directions.

\section{Model and Formulation}

\subsection{Model}
In this paper, we consider a queueing model with two types of customers (types 1 and 2).  
There are $c$ servers among them one server is designed as the reserved server which corresponds to the guard channel in cellular networks. 
Customers of type 1 (high priority) and type 2 (low priority) arrive at the system according to the Poisson processes with rate $\lambda_1$ and 
$\lambda_2$, respectively. Customers of type 1 can use all $c$ servers while those of type 2 cannot use the guard server. Thus, if there are $c-1$ busy servers, 
the last server automatically becomes the guard server for customers of type 1. 
%The model is motivated from cellular networks where the base station has $c$ channels among them one is reserved as guard channel. 
Customers of types 1 and 2 correspond to handover calls and fresh calls, respectively. %The reserved server is referred to as the guard channel in base stations.
Furthermore, we assume that a blocked call (both types 1 and 2) redials after some exponentially distributed time with mean $1/\mu$. 
Upon retrial, if there is an idle channel the call occupies it immediately, otherwise it enters the orbit again. Thus, a redial call has the same priority as that of a handover call. 
In this paper, we assume that the base station can distinguish redials calls so as to give them the same priority as of handover calls. 
As a result, we may expect that decreasing the number of retrials by a customer improves the QoS. 
Service times for both fresh calls and handover calls are assumed to follow the same exponential distribution with mean $1/\nu$.

%%%%%%%%%%%%%%%%%%%%%%%%%%%%%%%%%%%%%%%%%%%%%%%%%%%%%%%%%%%%%
\subsection{Level-dependent QBD process}
Let $C(t)$ and $N(t)$ denote the number of busy channels and the number of redial calls in the orbit at time $t$. 
Letting $X(t) = (C(t),N(t)) \ (t \geq 0)$, the bivariate process $\{X(t);t \geq 0 \}$ is a Markov chain in the state space $\mathcal{S} = \{ 0,1,\dots ,c \} \times  \mathbb{Z}_{+} $, 
where $ \mathbb{Z}_{+} = \{  0,1,2,\dots \} $. We assume that $\{X(t)\}$ is positive recurrent. 
The necessary and sufficient condition for the positive recurrence of $\{X(t)\}$ is given in the following lemma. 
\begin{lem}\label{lem:stationary}
$\{X(t)\}$ is positive recurrent if and only if 
\[
	\frac{\lambda}{c\nu} < 1,
\]
\end{lem}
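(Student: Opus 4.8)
\noindent\emph{Proof strategy.} Write $\lambda=\lambda_1+\lambda_2$ for the total arrival rate and let $Q$ denote the generator of $\{X(t)\}=\{(C(t),N(t))\}$. The plan is to treat the two directions by complementary drift arguments: sufficiency via a Foster--Lyapunov negative-drift test function, and necessity via rate conservation. For the \emph{if} part I would apply the Foster--Lyapunov criterion with $V(C,N)=N+g(C)$, where $g$ depends on the phase only and is fixed by its increments $d_k:=g(k+1)-g(k)$, $0\le k\le c-1$ (with $g(0)=0$). Computing $QV$ state by state, the retrial transition (rate $N\mu$, effective only when $C<c$) moves $(C,N)\to(C+1,N-1)$ and contributes $N\mu(d_C-1)$ to the drift, while at the full phase $C=c$ retrials are blocked and only service acts, giving $QV(c,N)=\lambda-c\nu\,d_{c-1}$, independent of $N$. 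Choosing $d_{c-1}\in(\lambda/(c\nu),1)$ --- a nonempty interval precisely because $\lambda/(c\nu)<1$ --- and $d_k\in(0,1)$ for $k<c-1$, the drift at $C=c$ becomes a negative constant $-\varepsilon_0$ with $\varepsilon_0=c\nu d_{c-1}-\lambda>0$, whereas for every phase $C<c$ the coefficient $\mu(d_C-1)<0$ of $N$ forces $QV(C,N)\to-\infty$ as $N\to\infty$. Hence $QV\le-\varepsilon_0$ outside the finite set $\{C<c,\ N<N_0\}$ for a suitable $N_0$, and $QV$ is bounded on that set; since the chain is non-explosive (the only unbounded rate, the retrial rate, does not change $L:=C+N$ and can fire only in bursts of length at most $c$ before the phase reaches $C=c$), positive recurrence follows.

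For the \emph{only if} part I would argue by rate conservation. Assuming positive recurrence with stationary law $\pi$, the long-run arrival rate is $\lambda$ (all arrivals enter the system, none are lost), and the long-run departure rate is $\nu E_\pi[C]$, since departures occur at rate $\nu C(t)$. Writing the balance $L(t)-L(0)=A(t)-D(t)$ for the number in system $L=C+N$ and evaluating it along the return times $t_n\to\infty$ to a fixed reference state, where $L(t_n)$ stays bounded, the boundary term vanishes after dividing by $t_n$, and the ergodic theorem yields $\lambda=\nu E_\pi[C]$. Because the chain visits the phases $C<c$ with positive stationary probability (from $C=c$ a service completion leads to $C=c-1$ at rate $c\nu>0$), we have $E_\pi[C]<c$ strictly, hence $\lambda<c\nu$, i.e.\ $\lambda/(c\nu)<1$. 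The same conclusion can be read off from the converse Foster criterion applied to $V=L$, since $QL=\lambda-C\nu\ge\lambda-c\nu\ge0$ at every state whenever $\lambda\ge c\nu$, which (together with unit jumps of $L$) precludes positive recurrence.

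The crux --- and the place where the threshold $\lambda/(c\nu)<1$ is actually born --- is the full phase $C=c$ in the sufficiency argument. There the unbounded retrial rate gives no help, because retrials are blocked, so the drift is the genuine constant $\lambda-c\nu d_{c-1}$; making it negative demands $d_{c-1}>\lambda/(c\nu)$, while the neighbouring phase $C=c-1$ simultaneously requires $d_{c-1}<1$ so that its $N\mu$-term stays negative. Reconciling these two requirements on a single increment $d_{c-1}$ is the main technical point, and it is possible exactly when $\lambda/(c\nu)<1$. The unbounded retrial rate, which at first sight looks like the obstacle, in fact works in our favour by driving the drift to $-\infty$ in every non-full phase, so that only the single phase $C=c$ governs stability.
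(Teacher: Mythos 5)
Your proposal is correct and follows essentially the same route as the paper: sufficiency via Tweedie's drift criterion with a Lyapunov function of the form $N+g(C)$ (the paper takes $g(C)=aC$ with $\lambda/(c\nu)<a<1$, which is exactly your construction with all increments $d_k=a$ constant), and necessity via Little's law giving $\lambda=\nu {\rm E}[C]<c\nu$. Your explicit non-explosiveness remark and the converse-Foster alternative are refinements beyond what the paper states, but they do not change the underlying argument.
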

where $\lambda = \lambda_1 + \lambda_2$.

\begin{proof}
The proof is presented in Appendix~\ref{app:A0}.
\end{proof}

It is easy to see that $\{X(t); t \geq 0\}$ is a level-dependent QBD process whose infinitesimal generator $\vc{Q}$ is given as follows.
\begin{eqnarray}
\vc{Q} =
\left( 
\begin{array}{ccccc}
\vc{Q}_1^{(0)}&\vc{Q}_0^{(0)}&\vc{O}&\vc{O}&\cdots \\
\vc{Q}_2^{(1)} & \vc{Q}_1^{(1)} & \vc{Q}_0^{(1)}&\vc{O}&\cdots \\
\vc{O}&\vc{Q}_2^{(2)} & \vc{Q}_1^{(2)} & \vc{Q}_0^{(2)}&\cdots \\
\vc{O}&\vc{O}&\vc{Q}_2^{(3)} & \vc{Q}_1^{(3)} &\cdots \\
\vdots&\vdots&\vdots&\vdots&\ddots \\
\end{array} 
\right), \label{eq:Q}
\end{eqnarray}
where $\vc{O}$ is the zero matrix with appropriate dimension and  
$\{\vc{Q}_0^{(n)},\ \vc{Q}_1^{(n)}; n \in \mathbb{Z}_{+}\}$ ,\ 
$\{\vc{Q}_2^{(n)}; n \in \mathbb{N}\}$ are square matrices of size $c+1$ given as follows. 
\begin{eqnarray*}
\vc{Q}_{0}^{(n)} & =
\left( 
\begin{array}{ccccc}
0&\cdots &0&0&0 \\
\vdots & \ddots & \vdots&\vdots&\vdots \\
0&\cdots &0&0&0 \\
0& \cdots &0&\lambda_{2}&0 \\
0&\cdots&0&0&\lambda\\
\end{array} 
\right)
\\
\vc{Q}_{2}^{(n)} & = \left( 
\begin{array}{ccccc}
0 & n\mu&0&\cdots&0 \\
0&0&n\mu&\ddots& 0\\
0 &&\ddots&\ddots &0  \\
\vdots&&&0&n\mu\\
0&\cdots&\cdots&0&0\\
\end{array} 
\right),\ 
\end{eqnarray*}
\begin{eqnarray*}
\vc{Q}_{1}^{(n)} = 
\left(
\begin{array}{cccccc}
b_{0}^{(n)}&\lambda&0&\cdots&\cdots&0 \\
\nu&b_{1}^{(n)}&\lambda&\ddots&&\vdots \\
0&2\nu&b_{2}^{(n)}&\ddots&\ddots&\vdots \\
\vdots&\ddots&\ddots&\ddots&\lambda&0 \\
\vdots&&\ddots&\ddots&b_{c-1}^{(n)}&\lambda_{1} \\
0&\cdots&\cdots&0&c\nu&b_{c}^{(n)} \\
\end{array}
\right),
\end{eqnarray*} 
where $\mathbb{N} = \{1,2,\dots\}$, $b_{i}^{(n)} =  - (\lambda +i\nu + n\mu(1 -\delta_{i,c})) \ (i=0,1,2,\dots,c$) and $\delta_{i,c}$ is the Kronecker symbol. 
Let $\pi_{i,n}$ denote the stationary probability that there are $i$ busy servers and $n$ redial calls in the orbit, i.e., 
\begin{eqnarray}
\pi_{i,n} =\lim_{t \to \infty} {\rm Pr}(C(t)=i,N(t)=n), \label{eq:pi_teigi}
\end{eqnarray} 
for $i=0,1,\dots,c$ and  $n \in \mathbb{Z}_{+}$. 
Furthermore, let 
\[ \vc{\pi}_{n} = (\pi_{0,n}, \pi_{1,n},\dots,\pi_{c,n}), \quad \vc{\pi}=(\vc{\pi}_0, \vc{\pi}_1,\dots). \] 
%We have 
%\[
%\vc{\pi} \vc{Q} = \vc{0} ,\qquad \ \vc{\pi} \vc{e} = 1,
%\]
We have 
\begin{eqnarray}
\vc{\pi}_{0} \vc{Q}_{1}^{(0)}+\vc{\pi}_{1} \vc{Q}_{2}^{(1)} &=& \vc{0}, \ n=0, \label{eq:n=0} \\
\vc{\pi}_{n-1} \vc{Q}_{0}^{(n-1)} +\vc{\pi}_{n} \vc{Q}_{1}^{(n)}+\vc{\pi}_{n+1} \vc{Q}_{2}^{(n+1)} &=& \vc{0}, \ n\in \mathbb{N},  \qquad \label{eq:pi_Q}\\
 \vc{\pi e} &=& 1, \label{eq:pi*e=1}
\end{eqnarray}
where $\vc{e}$  and $\vc{0}$ are vectors with appropriate dimensions with all 1 elements and all zero elements, respectively. 
It is established in~\cite{Rdef} that the solution of (\ref{eq:n=0}), (\ref{eq:pi_Q}) and (\ref{eq:pi*e=1}) is given by
\[
\vc{\pi}_{n} = \vc{\pi}_{n-1} \vc{R}^{(n)} ,\qquad n \in \mathbb{N}, 
\]
where $\{ \vc{R}^{(n)};n \in \mathbb{N}\} $ is the minimal nonnegative solution of 
%ˆÈ‰º'Ì•û'öŽ®'ð–ž'½'·Å¬"ñ•‰‰ð'Å' '邱'Æ'ª\cite{Rdef}'É'æ'莦'³'ê'Ä'¢'é.\  
\begin{eqnarray}
\vc{\vc{Q}}_{0}^{(n-1)} +\vc{R}^{(n)} \vc{Q}_{1}^{(n)}+\vc{R}^{(n)}\vc{R}^{(n+1)} \vc{Q}_{2}^{(n+1)} = \vc{O}, \ n \in \mathbb{N}. \label{eq:Rteigi}
\end{eqnarray}
Furthermore, $\vc{\pi}_0$ is determined by   
\begin{eqnarray*}
\vc{\pi}_{0}( \vc{Q}_{1}^{(0)}+\vc{R}^{(1)} \vc{Q}_{2}^{(1)}) &=& \vc{0}, \\
\vc{\pi}_{0} (\vc{I} + \vc{R}^{(1)}  + \vc{R}^{(1)} \vc{R}^{(2)} + \dots ) \vc{e}&=&1.
\end{eqnarray*}
Thus the problem of finding the stationary distribution is equivalent to that of obtaining the rate matrices. 
However, the rate matrices do not have closed form in general leading to an algorithmic approach for numerical calculation. 
To this end, we present the following three lemmas. 
%%%%%%%%%%%%%%%%%%%%%%%%%%%%%%%%%%%%%%%%%%%%%%%%%%5%
\begin{lem} \label{prop:1}
{\rm (Proposition 1 in~\cite{Perturbation})
Let $\mathcal{M}$ denote the set of square matrices of size $c+1$. Furthermore, let $R_n:\mathcal{M} \to \mathcal{M}$ denote the following function. 
\[
R_n(\vc{X}) =  - \vc{Q}_{0}^{(n-1)} ( \vc{Q}_{1}^{(n)} + \vc{X} \vc{Q}_{2}^{(n+1)})^{-1}
,\qquad n\in \mathbb{N}.
\]
It is easy to see that $\{\vc{R}^{(n)}; n \in \mathbb{N} \}$ satisfies 
\[
\vc{R}^{(n)} = R_n(\vc{R}^{(n+1)})= R_n \circ R_{n+1} \circ R_{n+2} \circ \cdots
,\quad n \in \mathbb{N}, 
\]
where $f(g(\cdot))=f \circ g(\cdot)$.}
\end{lem}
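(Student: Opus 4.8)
The plan is to read off the one-step fixed-point relation $\vc{R}^{(n)} = R_n(\vc{R}^{(n+1)})$ directly from the defining equation (\ref{eq:Rteigi}) by factoring out $\vc{R}^{(n)}$ and inverting, and then to obtain the infinite composition by iterating this relation. The only substantive point is the nonsingularity of the matrix that must be inverted; everything else is algebraic bookkeeping.

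First I would rewrite (\ref{eq:Rteigi}). Since $\vc{R}^{(n)}$ is a common left factor of its second and third terms, (\ref{eq:Rteigi}) is equivalent to
\[
\vc{R}^{(n)}\left(\vc{Q}_1^{(n)} + \vc{R}^{(n+1)}\vc{Q}_2^{(n+1)}\right) = -\vc{Q}_0^{(n-1)}, \quad n \in \mathbb{N}.
\]
Writing $\vc{A}^{(n)} := \vc{Q}_1^{(n)} + \vc{R}^{(n+1)}\vc{Q}_2^{(n+1)}$, if $\vc{A}^{(n)}$ is nonsingular I may multiply on the right by $(\vc{A}^{(n)})^{-1}$ to obtain $\vc{R}^{(n)} = -\vc{Q}_0^{(n-1)}(\vc{A}^{(n)})^{-1} = R_n(\vc{R}^{(n+1)})$, which is exactly the claimed one-step relation.

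The main (indeed essentially the only) obstacle is to show that $\vc{A}^{(n)}$ is nonsingular, and I would do this by verifying that $-\vc{A}^{(n)}$ is a nonsingular M-matrix. Its off-diagonal entries are nonnegative: those of $\vc{Q}_1^{(n)}$ are among $\lambda, i\nu, \lambda_1, c\nu \ge 0$, and $\vc{R}^{(n+1)}\vc{Q}_2^{(n+1)}$ is entrywise nonnegative, being a product of the nonnegative matrices $\vc{R}^{(n+1)}$ and $\vc{Q}_2^{(n+1)}$. For the row-sum condition I would use conservativity of $\vc{Q}$: at an interior level $(\vc{Q}_0^{(n)} + \vc{Q}_1^{(n)} + \vc{Q}_2^{(n)})\vc{e} = \vc{0}$, whence $\vc{A}^{(n)}\vc{e} = -(\vc{Q}_0^{(n)} + \vc{Q}_2^{(n)})\vc{e} + \vc{R}^{(n+1)}\vc{Q}_2^{(n+1)}\vc{e}$. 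Using the standard relation $\vc{R}^{(n+1)}\vc{Q}_2^{(n+1)} = \vc{Q}_0^{(n)}\vc{G}^{(n+1)}$ together with $\vc{G}^{(n+1)}\vc{e} \le \vc{e}$ (see~\cite{Rdef}), the last term is dominated by $\vc{Q}_0^{(n)}\vc{e}$, so $\vc{A}^{(n)}\vc{e} \le \vc{0}$ with strict negativity in at least one coordinate. Since $\vc{A}^{(n)}$ is irreducible (the tridiagonal $\vc{Q}_1^{(n)}$ has strictly positive off-diagonals), $-\vc{A}^{(n)}$ is a nonsingular M-matrix and hence invertible. Probabilistically this is transparent: $\vc{A}^{(n)}$ coincides with the generator $\vc{U}^{(n)} = \vc{Q}_1^{(n)} + \vc{Q}_0^{(n)}\vc{G}^{(n+1)}$ of the process restricted to level $n$ and killed on its first passage to level $n-1$, and that passage occurs with positive probability from every phase.

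Finally I would iterate the one-step relation. Substituting $\vc{R}^{(n+1)} = R_{n+1}(\vc{R}^{(n+2)})$ into $\vc{R}^{(n)} = R_n(\vc{R}^{(n+1)})$ and repeating yields $\vc{R}^{(n)} = R_n \circ R_{n+1} \circ \cdots \circ R_{n+k}(\vc{R}^{(n+k+1)})$ for every finite $k$. The notation $R_n \circ R_{n+1} \circ \cdots$ is then made precise by letting $k \to \infty$: the finite compositions converge to $\vc{R}^{(n)}$ independently of the matrix seeded at the top level, by the monotone convergence of the natural functional iteration (equivalently, of the level-truncation approximations) to the minimal nonnegative solution of (\ref{eq:Rteigi}). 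Since the one-step relation and the invertibility of $\vc{A}^{(n)}$ are the only non-routine ingredients, this completes the argument.
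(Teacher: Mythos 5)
Your proof is correct, and since the paper itself offers no proof of this lemma (it is quoted verbatim from Proposition~1 of~\cite{Perturbation}), there is nothing in the text to diverge from: the factorization of (\ref{eq:Rteigi}) into $\vc{R}^{(n)}\bigl(\vc{Q}_1^{(n)}+\vc{R}^{(n+1)}\vc{Q}_2^{(n+1)}\bigr)=-\vc{Q}_0^{(n-1)}$, followed by invertibility of the bracketed matrix via its interpretation as the taboo generator $\vc{Q}_1^{(n)}+\vc{Q}_0^{(n)}\vc{G}^{(n+1)}$ (equivalently, your irreducible M-matrix argument), is exactly the standard argument underlying the cited result of~\cite{Rdef,Perturbation}. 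The only cosmetic caveat is that the infinite composition $R_n\circ R_{n+1}\circ\cdots$ is informal notation whose precise content is your finite iteration together with the convergence statement of Lemma~\ref{prop:2}, so your appeal to the minimal-solution convergence there is the right way to close it.
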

%%%%%%%%%%%%%%%%%%%%%%%%%%%%%%%%%%%%%%%%%%%%%%%%%%%%%%%%%%%%%%%%%%%
\begin{lem}\label{prop:2}{\rm
(Proposition 2 in~\cite{Perturbation})
$\{\vc{R}^{(n)}_{k}; k \in\mathbb{Z}_{+} \}$ is defined by the following recursive formulae.  
\begin{eqnarray*}
\vc{R}_0^{(n)} &=& \vc{O} ,\qquad k=0,\\
\vc{R}_{k}^{(n)} &=& R_n(\vc{R}_{k-1}^{(n+1)}) \\
& &  \vdots \\
& = & R_{n} \circ R_{n+1} \circ   \dots  \circ  {R}_{n+k-1}(\vc{O}), \qquad n,k \in \mathbb{N}.
\end{eqnarray*}
%\begin{eqnarray}
We have
\begin{equation*}
\lim_{k \to \infty} \vc{R}^{(n)}_{k} = \vc{R}^{(n)},\qquad n \in \mathbb{N}.\label{eq:Rnk}
\end{equation*}}
%\end{eqnarray}
\end{lem}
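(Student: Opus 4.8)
The plan is to treat the recursion $\vc{R}_{k}^{(n)} = R_n(\vc{R}_{k-1}^{(n+1)})$ as a monotone functional iteration on the cone of nonnegative matrices and to show that it increases to a fixed point that is squeezed, via minimality, against the true rate matrix $\vc{R}^{(n)}$ from Lemma~\ref{prop:1}. Concretely, I would first establish two structural properties of $R_n$, then run a double induction in $k$ (uniform in $n$) to get monotonicity together with the uniform bound $\vc{R}_k^{(n)}\le\vc{R}^{(n)}$, and finally pass to the limit and invoke minimality.

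First I would verify that $R_n$ is well defined, nonnegativity-preserving, and isotone. Since the diagonal entries $b_i^{(n)}$ are strictly negative and the off-diagonal entries of $\vc{Q}_1^{(n)}$ are nonnegative while $\vc{X}\vc{Q}_2^{(n+1)}\ge\vc{O}$ for $\vc{X}\ge\vc{O}$, the matrix $-(\vc{Q}_1^{(n)}+\vc{X}\vc{Q}_2^{(n+1)})$ is a Z-matrix that dominates the nonsingular M-matrix $-(\vc{Q}_1^{(n)}+\vc{R}^{(n+1)}\vc{Q}_2^{(n+1)})$ and is therefore itself a nonsingular M-matrix with nonnegative inverse. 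Hence $R_n(\vc{X})=\vc{Q}_0^{(n-1)}\bigl(-(\vc{Q}_1^{(n)}+\vc{X}\vc{Q}_2^{(n+1)})^{-1}\bigr)\ge\vc{O}$. For isotonicity, if $\vc{O}\le\vc{X}\le\vc{Y}$ then $\vc{X}\vc{Q}_2^{(n+1)}\le\vc{Y}\vc{Q}_2^{(n+1)}$, so $-(\vc{Q}_1^{(n)}+\vc{X}\vc{Q}_2^{(n+1)})\ge-(\vc{Q}_1^{(n)}+\vc{Y}\vc{Q}_2^{(n+1)})$ entrywise; by the order-reversing property of the inverse on nonsingular M-matrices their inverses satisfy the reverse inequality, and left multiplication by the nonnegative $\vc{Q}_0^{(n-1)}$ yields $R_n(\vc{X})\le R_n(\vc{Y})$.

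With these properties, I would prove by induction on $k$, uniformly in $n$, that $\vc{R}_k^{(n)}\le\vc{R}_{k+1}^{(n)}\le\vc{R}^{(n)}$. The case $k=0$ is immediate since $\vc{R}_0^{(n)}=\vc{O}$ and both $\vc{R}_1^{(n)}$ and $\vc{R}^{(n)}$ are nonnegative. For the inductive step, applying $R_n$ to the induction hypotheses written at level $n+1$ and using isotonicity gives the monotonicity at level $n$, while $\vc{R}_k^{(n)}=R_n(\vc{R}_{k-1}^{(n+1)})\le R_n(\vc{R}^{(n+1)})=\vc{R}^{(n)}$ uses the fixed-point identity of Lemma~\ref{prop:1}. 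Thus, for each fixed $n$ the sequence $\{\vc{R}_k^{(n)}\}_k$ is entrywise nondecreasing and bounded above by $\vc{R}^{(n)}$, and therefore converges to a limit $\vc{R}_\infty^{(n)}$ with $\vc{O}\le\vc{R}_\infty^{(n)}\le\vc{R}^{(n)}$.

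Finally I would identify the limit. Because matrix inversion is continuous on nonsingular M-matrices and the arguments remain in that set throughout the iteration, letting $k\to\infty$ in $\vc{R}_k^{(n)}=R_n(\vc{R}_{k-1}^{(n+1)})$ gives $\vc{R}_\infty^{(n)}=R_n(\vc{R}_\infty^{(n+1)})$; unwinding $R_n$ shows that $\{\vc{R}_\infty^{(n)}\}$ is a nonnegative solution of the defining equation~(\ref{eq:Rteigi}). Minimality of $\{\vc{R}^{(n)}\}$ then forces $\vc{R}^{(n)}\le\vc{R}_\infty^{(n)}$, and combined with the reverse inequality this yields $\vc{R}_\infty^{(n)}=\vc{R}^{(n)}$. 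The step I expect to be the main obstacle is the monotonicity/invertibility argument for $R_n$: one must carefully confirm that $-(\vc{Q}_1^{(n)}+\vc{X}\vc{Q}_2^{(n+1)})$ stays a nonsingular M-matrix for every iterate $\vc{X}=\vc{R}_k^{(n+1)}$ (and in the limit), so that the nonnegativity of the inverse and the order-reversal it provides are legitimately available at each stage.
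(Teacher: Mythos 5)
The first thing to note is that the paper itself never proves this lemma: it is imported verbatim as Proposition~2 of~\cite{Perturbation}, so there is no internal proof to compare against, and your argument must stand on its own. It essentially does, and it follows the standard route for level-dependent QBDs going back to Ramaswami and Taylor~\cite{Rdef}: isotonicity of each map $R_n$ on an order interval, a double induction giving $\vc{R}_k^{(n)}\le\vc{R}_{k+1}^{(n)}\le\vc{R}^{(n)}$ uniformly in $n$, monotone convergence, and the minimality squeeze. Three details need tightening before the argument closes, and you have correctly flagged the main one yourself. First, well-definedness and isotonicity of $R_n$ cannot be asserted for arbitrary $\vc{X}\ge\vc{O}$: for large $\vc{X}$ the Z-matrix $-(\vc{Q}_1^{(n)}+\vc{X}\vc{Q}_2^{(n+1)})$ can fail to be a nonsingular M-matrix (or even to be invertible); the domination argument is valid precisely on the interval $\vc{O}\le\vc{X}\le\vc{R}^{(n+1)}$, which is exactly where your induction keeps all iterates, so the proof is sound once the first paragraph is restricted to that interval. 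Second, the anchor of the domination, namely that $-(\vc{Q}_1^{(n)}+\vc{R}^{(n+1)}\vc{Q}_2^{(n+1)})$ is a nonsingular M-matrix, is asserted but not justified; it follows from Lemma~\ref{lem:2-3}, since (\ref{eq:5}) (with the index shifted) shows this Z-matrix has row sums equal to those of $\vc{Q}_2^{(n)}$, i.e.\ $n\mu>0$ in rows $0,\dots,c-1$ and $0$ in the last row, and the tridiagonal structure of $\vc{Q}_1^{(n)}$ makes it irreducible, so irreducible diagonal dominance with at least one strict row yields nonsingularity and a nonnegative inverse. Third, in the base case $k=0$ the bound $\vc{R}_1^{(n)}\le\vc{R}^{(n)}$ is not ``immediate from nonnegativity''; it is itself an isotonicity step, $R_n(\vc{O})\le R_n(\vc{R}^{(n+1)})=\vc{R}^{(n)}$, using the fixed-point identity of Lemma~\ref{prop:1}. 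With these repairs your proof is complete, and it supplies exactly the self-contained justification that the paper delegates to the external reference: the iterates increase to a nonnegative solution of (\ref{eq:Rteigi}), which minimality identifies with $\vc{R}^{(n)}$.
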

%%%%%%%%%%%%%%%%%%%%%%%%%%%%%%%%%%%%%%%%%%%%%%%%%%%%%%%
Lemmas~\ref{prop:1} and \ref{prop:2} allow deriving a numerical algorithm for calculating the rate matrices. 
They also show that the rate matrices are matrix continued fractions. 
However it is difficult to get insights into the rate matrices using the matrix continued fraction representation. 

In this paper, we show a Taylor series expansion of $\vc{R}^{(n)}$ in terms of $1/n$. 
It follows from Lemma~\ref{prop:1} that the first $c-1$ rows of $\vc{R}^{(n)}$ are zero.
Let $\vc{r}^{(0,n)}$ and $\vc{r}^{(1,n)}$ denote the $c$-th and $(c+1)$-th rows of $\vc{R}^{(n)}$, i.e.,  
\begin{eqnarray*}
\vc{r}^{(0,n)} & = & \left( r_{0}^{(0,n)}, r_{1}^{(0,n)}, \dots, r_{c}^{(0,n)} \right), \\ 
\vc{r}^{(1,n)} & = & \left( r_{0}^{(1,n)}, r_{1}^{(1,n)}, \dots, r_{c}^{(1,n)} \right). 
\end{eqnarray*}
Comparing the last two rows in both sides of (\ref{eq:Rteigi}) yields
\begin{eqnarray}
b_{0}^{(n)}r_{0}^{(0,n)} + \nu r_{1}^{(0,n)} & = & 0, \label{eq:r00}  \\
\lambda r_{i-1}^{(0,n)} + b_{i}^{(n)} r_{i}^{(0,n)} +(i+1)\nu  r_{i+1}^{(0,n)} + \tilde{r}_{i}^{(0,n)} & = & 0, \ \label{eq:r01} \\
\qquad i=1,2,\ldots,c-2   & & \nonumber \\
\lambda r_{c-2}^{(0,n)} + b_{c-1}^{(n)} r_{c-1}^{(0,n)} +c\nu  r_{c}^{(0,n)} +  \tilde{r}_{c-1}^{(0,n)} & = & -\lambda_{2}, \label{eq:r02}\ \qquad \\
\lambda_{1}r_{c-1}^{(0,n)}+ b_{c}^{(n)}r_{c}^{(0,n)}+ \tilde{r}_{c}^{(0,n)} & = & 0,	\label{eq:r03} \\  
%\end{eqnarray}
%where  
%\begin{eqnarray}
b_{0}^{(n)}r_{0}^{(1,n)} + \nu r_{1}^{(1,n)}&=&0,  \label{eq:r10}  \qquad \\
\lambda r_{i-1}^{(1,n)} + b_{i}^{(n)} r_{i}^{(1,n)} +(i+1)\nu  r_{i+1}^{(1,n)} + \tilde{r}_{i}^{(1,n)} &=&  0,   \\
\qquad                i=1,2,\ldots,c-1,  \label{eq:r11}   & &    \nonumber \\
\lambda_{1}r_{c-1}^{(1,n)}+ b_{c}^{(n)}r_{c}^{(1,n)}+ \tilde{r}_{c}^{(1,n)}&=& -\lambda,  \qquad \label{eq:r12}
\end{eqnarray}
where 
\begin{eqnarray*}
\tilde{r}_{i}^{(0,n)} & = & (n+1)\mu \left( r_{c-1}^{(0,n)}r_{i-1}^{(0,n+1)}+r_{c}^{(0,n)}r_{i-1}^{(1,n+1)}\right), \\
\tilde{r}_{i}^{(1,n)} & = & (n+1)\mu \left( r_{c-1}^{(1,n)}r_{i-1}^{(0,n+1)}+r_{c}^{(1,n)}r_{i-1}^{(1,n+1)}\right).
\end{eqnarray*}

%%%%%%%%%%%%%%%%%%%%%%%%%%%%%%%%%%%%%%%%%%%%%%%%%%%%%%%%%%%%%
\begin{lem} \label{lem:2-3}
{\rm (Proposition 3 in \cite{Perturbation})}
We have
\begin{eqnarray}
( \vc{Q}_{2}^{(n-1)} + \vc{Q}_{1}^{(n-1)}+\vc{R}^{(n)} \vc{Q}_{2}^{(n)} ) \vc{e} = \vc{0} ,\qquad n \in \mathbb{N}.\   \label{eq:5} 
\end{eqnarray}
Comparing the last two elements in both sides of (\ref{eq:5}) yields,  
\begin{eqnarray}
 \sum_{i=0}^{c-1} r_{i}^{(0,n)} & = & \frac{\lambda_2}{n\mu},  \label{sum1:eq} \\
\sum_{i=0}^{c-1} r_i^{(1,n)} & =  & \frac{\lambda}{n\mu} ,\qquad n \in \mathbb{N}.   \label{sum2:eq}
\end{eqnarray}
\end{lem}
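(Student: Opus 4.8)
The plan is to reduce the matrix identity (\ref{eq:5}) to a single vector identity and then prove the latter by a cut-balance argument. Since $\vc{Q}$ is a conservative infinitesimal generator, every row sums to zero; at level $n-1$ this reads
\[
(\vc{Q}_{2}^{(n-1)}+\vc{Q}_{1}^{(n-1)}+\vc{Q}_{0}^{(n-1)})\vc{e}=\vc{0},\qquad n\in\mathbb{N},
\]
with the convention $\vc{Q}_{2}^{(0)}=\vc{O}$ (the entries of $\vc{Q}_{2}^{(n)}$ are proportional to $n$). Subtracting this from (\ref{eq:5}) shows that (\ref{eq:5}) is equivalent to
\[
\vc{d}^{(n)}:=\vc{Q}_{0}^{(n-1)}\vc{e}-\vc{R}^{(n)}\vc{Q}_{2}^{(n)}\vc{e}=\vc{0},\qquad n\in\mathbb{N}.
\]
Granting $\vc{d}^{(n)}=\vc{0}$, the scalar identities (\ref{sum1:eq}) and (\ref{sum2:eq}) follow by inspection: $\vc{Q}_{0}^{(n-1)}\vc{e}$ has nonzero entries only $\lambda_{2}$ and $\lambda$ in positions $c-1$ and $c$, while $\vc{Q}_{2}^{(n)}\vc{e}=(n\mu,\dots,n\mu,0)^{\top}$ and the first $c-1$ rows of $\vc{R}^{(n)}$ vanish, so the $(c-1)$-st and $c$-th entries of $\vc{R}^{(n)}\vc{Q}_{2}^{(n)}\vc{e}$ are exactly $n\mu\sum_{i=0}^{c-1}r_{i}^{(0,n)}$ and $n\mu\sum_{i=0}^{c-1}r_{i}^{(1,n)}$.

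It remains to prove $\vc{d}^{(n)}=\vc{0}$. First I would invoke a level-crossing balance across the cut separating levels $\{0,\dots,n-1\}$ from $\{n,n+1,\dots\}$: because the QBD changes level only by one step at a time, in steady state the up-crossing rate $\vc{\pi}_{n-1}\vc{Q}_{0}^{(n-1)}\vc{e}$ equals the down-crossing rate $\vc{\pi}_{n}\vc{Q}_{2}^{(n)}\vc{e}$; substituting $\vc{\pi}_{n}=\vc{\pi}_{n-1}\vc{R}^{(n)}$ turns this into $\vc{\pi}_{n-1}\vc{d}^{(n)}=0$. Since $\{X(t)\}$ is irreducible and positive recurrent (Lemma~\ref{lem:stationary}), $\vc{\pi}_{n-1}$ is strictly positive. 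Hence, once we know that $\vc{d}^{(n)}\ge\vc{0}$ componentwise, the nonnegative terms of the inner product $\vc{\pi}_{n-1}\vc{d}^{(n)}$ sum to zero and must each vanish, giving $\vc{d}^{(n)}=\vc{0}$; as noted above only the two entries in positions $c-1$ and $c$ are in question.

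The step I expect to be the main obstacle is the nonnegativity $\vc{d}^{(n)}\ge\vc{0}$, equivalently $\vc{R}^{(n)}\vc{Q}_{2}^{(n)}\vc{e}\le\vc{Q}_{0}^{(n-1)}\vc{e}$, since this is where the self-referential (level-$n$ versus level-$(n+1)$) structure of the rate matrices must be controlled. I would prove it for the monotone iterates $\vc{R}_{k}^{(n)}$ of Lemma~\ref{prop:2} and pass to the limit. Concretely, I would show $\vc{R}_{k}^{(n)}\vc{Q}_{2}^{(n)}\vc{e}\le\vc{Q}_{0}^{(n-1)}\vc{e}$ for all $n$ by induction on $k$: the base case $\vc{R}_{0}^{(n)}=\vc{O}$ is immediate, and for the step I write $\vc{U}_{k}^{(n)}=\vc{Q}_{1}^{(n)}+\vc{R}_{k-1}^{(n+1)}\vc{Q}_{2}^{(n+1)}$, so that $\vc{R}_{k}^{(n)}=-\vc{Q}_{0}^{(n-1)}(\vc{U}_{k}^{(n)})^{-1}$ by Lemma~\ref{prop:1}. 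Here $-\vc{U}_{k}^{(n)}$ is a nonsingular $M$-matrix, hence $(-\vc{U}_{k}^{(n)})^{-1}\ge\vc{O}$; the row-sum identity together with the induction hypothesis give $\vc{U}_{k}^{(n)}\vc{e}\le-\vc{Q}_{2}^{(n)}\vc{e}$, and multiplying by the nonnegative matrix $(-\vc{U}_{k}^{(n)})^{-1}$ yields $(-\vc{U}_{k}^{(n)})^{-1}\vc{Q}_{2}^{(n)}\vc{e}\le\vc{e}$, which is exactly $\vc{R}_{k}^{(n)}\vc{Q}_{2}^{(n)}\vc{e}\le\vc{Q}_{0}^{(n-1)}\vc{e}$. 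Letting $k\to\infty$ and using $\vc{R}_{k}^{(n)}\to\vc{R}^{(n)}$ preserves the inequality and gives $\vc{d}^{(n)}\ge\vc{0}$. An equivalent route, which I would mention as a cross-check, introduces the first-passage matrix $\vc{G}^{(n)}$, uses $\vc{R}^{(n)}\vc{Q}_{2}^{(n)}=\vc{Q}_{0}^{(n-1)}\vc{G}^{(n)}$, and concludes from the stochasticity of $\vc{G}^{(n)}$ under recurrence that $\vc{d}^{(n)}=\vc{Q}_{0}^{(n-1)}(\vc{e}-\vc{G}^{(n)}\vc{e})=\vc{0}$ directly.
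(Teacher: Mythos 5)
Your proof is correct, but it follows a genuinely different route from the paper's. The paper's proof is a one-step censoring argument: $\widehat{\vc{Q}}^{(n-1)} = \vc{Q}_1^{(n-1)} + \vc{R}^{(n)}\vc{Q}_2^{(n)}$ is the last diagonal block of the generator of $\{X(t)\}$ censored on levels $\{0,\dots,n-1\}$, and since the censored chain is ergodic, its generator is conservative; the last block row then gives (\ref{eq:5}) immediately, and the scalar identities follow by comparing the last two entries. You instead subtract the conservativeness of the \emph{full} generator to reduce (\ref{eq:5}) to $\vc{d}^{(n)} := \vc{Q}_0^{(n-1)}\vc{e} - \vc{R}^{(n)}\vc{Q}_2^{(n)}\vc{e} = \vc{0}$, and prove this by combining (i) the stationary cut balance $\vc{\pi}_{n-1}\vc{Q}_0^{(n-1)}\vc{e} = \vc{\pi}_{n}\vc{Q}_2^{(n)}\vc{e}$ together with $\vc{\pi}_n = \vc{\pi}_{n-1}\vc{R}^{(n)}$, (ii) strict positivity of $\vc{\pi}_{n-1}$, and (iii) the componentwise bound $\vc{d}^{(n)} \ge \vc{0}$, obtained by induction on the iterates $\vc{R}_k^{(n)}$ of Lemma~\ref{prop:2} via an M-matrix argument. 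Both are valid; the paper's route is shorter but imports the censoring fact from~\cite{Rdef}, whereas yours is self-contained given Lemmas~\ref{prop:1} and~\ref{prop:2} and standard M-matrix theory, at the cost of invoking positive recurrence and irreducibility (for $\vc{\pi}_{n-1} > \vc{0}$). One presentational caution in your induction step: the claim that $-\vc{U}_k^{(n)}$ is a nonsingular M-matrix is not automatic, since adding the nonnegative matrix $\vc{R}_{k-1}^{(n+1)}\vc{Q}_2^{(n+1)}$ to $\vc{Q}_1^{(n)}$ could in principle destroy diagonal dominance; it follows from the row-sum bound $\vc{U}_k^{(n)}\vc{e} \le -\vc{Q}_2^{(n)}\vc{e}$ (which is exactly the induction hypothesis applied at level $n+1$) combined with irreducibility of the tridiagonal part of $\vc{Q}_1^{(n)}$, so within the step you should derive the row-sum bound first and the M-matrix conclusion second. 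Finally, note that your $\vc{G}^{(n)}$ cross-check (stochasticity of $\vc{G}^{(n)}$ under recurrence, with $\vc{R}^{(n)}\vc{Q}_2^{(n)} = \vc{Q}_0^{(n-1)}\vc{G}^{(n)}$) is essentially the probabilistic content of the paper's censoring argument, so that variant is the one closest to the published proof.
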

%
%\begin{proof}
%Equation (\ref{eq:5}) is established due to the fact that $\vc{Q}_{2}^{(n-1)} + \vc{Q}_{1}^{(n-1)}+\vc{R}^{(n)} \vc{Q}_{2}^{(n)} $ is the infinitesimal generator 
%of the censored Markov chain of $\{X(t);t\geq0\}$ on levels $\{ 0,1,\dots,n-1 \}$ (See \cite{M_PH_1,analying_retrial}).  
%\end{proof}

%
\begin{proof}
This proposition follows from the fact that the following matrix represents the infinitesimal generator of 
the ergodic Markov chain $\{X(t); t \geq 0 \}$ censored in levels $\{ l(i); i = 0,1,\dots, n-1\}$, where $l(i) = ((0,i),(1,i),\dots,(c,i))$.
\begin{eqnarray}
\lefteqn{ \vc{Q}^{\leq n-1} = } \nonumber \\
& & 
\left (
\begin{array}{llllll}
	\vc{Q}^{(0)}_1 \  & \vc{Q}^{(0)}_0  \  & \vc{O}  \  &  \cdots \ & \vc{O} \
	\\
	\vc{Q}^{(1)}_2 & \vc{Q}^{(1)}_1   & \vc{Q}^{(1)}_0 & \ddots &  \vc{O}
	\\
	\vc{O}   & \vc{Q}^{(2)}_2 & \vc{Q}^{(2)}_1 & \ddots &  \vdots
	\\
	\vdots   &  \vc{O}    & \ddots & \ddots  & \vc{O} 
	\\
	\vdots   & \ddots     & \ddots   & \vc{Q}^{(n-2)}_2  & \vc{Q}^{(n-2)}_0	\\
	\vc{O}   & \cdots     & \vc{O}   & \vc{Q}^{(n-1)}_2  & \widehat{\vc{Q}}^{(n-1)}
\end{array}
\right ), \label{sensored:chain}  \nonumber \qquad
\end{eqnarray}
where 
\[
	\widehat{\vc{Q}}^{(n-1)}  = \vc{Q}^{(n-1)}_1 + \vc{R}^{(n)} \vc{Q}^{(n)}_2.
\]
Therefore, 
\[
	(\vc{Q}^{(n-1)}_2 + \widehat{\vc{Q}}^{(n-1)}) \vc{e} = \vc{0}. 
\] 
By comparing the last elements of both sides, we obtain the announced result.
\end{proof}

\begin{kei}
We present explicit expressions for the rate matrices $\vc{R}^{(n)}$ for the case $c=2$. 
It follows from (\ref{eq:r00}) and (\ref{sum1:eq}) with $c=2$ that 
\[
	r^{(0,n)}_0 = \frac{\lambda_2 \nu}{ n\mu (\lambda + \nu + n \mu)}, \quad r^{(0,n)}_1 = \frac{\lambda_2 (\lambda + n \mu)}{n\mu (\lambda + \nu + n\mu)}.
\]
Similarly, combining (\ref{sum2:eq}) and (\ref{eq:r10}) with $c=2$ yields
\[
	r^{(1,n)}_0 = \frac{\lambda \nu}{ n\mu (\lambda + \nu + n \mu)}, \quad r^{(1,n)}_1 = \frac{\lambda (\lambda + n \mu)}{n\mu (\lambda + \nu + n\mu)}.
\]

Furthermore, substituting these explicit expressions into (\ref{eq:r03}) and arranging the result, we obtain 
\[
	r^{(0,n)}_2 = \frac{ \lambda_2 (\lambda+n\mu) [ \lambda_1 (\lambda + \nu + (n+1)\mu) + \lambda_2 \nu  ]   }{ n \mu (\lambda + \nu + n \mu) (3 \lambda + 2 \nu + 2 (n+1)\mu ) \nu  }.
\]
Similarly, we also obtain 
\[
	r^{(1,n)}_2 = \frac{\lambda}{\nu} \left[ \frac{ \lambda + \nu + (n+1)\mu  }{3\lambda + 2\nu + 2(n+1)\mu}    + \frac{(\lambda+n\mu) [ \lambda (\lambda + (n+1)\mu + \lambda_1 \nu ]  }{ n\mu (\lambda + \nu + n\mu) (3\lambda + 2\nu + 2(n+1)\mu) }  \right].
\]
\end{kei}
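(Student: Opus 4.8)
The plan is to specialise the general balance relations satisfied by the rate matrix to the case $c=2$ and to solve the resulting finite linear system one row at a time. By Lemma~\ref{prop:1} the first $c-1=1$ rows of $\vc{R}^{(n)}$ vanish, so only the two vectors $\vc{r}^{(0,n)}=(r_0^{(0,n)},r_1^{(0,n)},r_2^{(0,n)})$ and $\vc{r}^{(1,n)}=(r_0^{(1,n)},r_1^{(1,n)},r_2^{(1,n)})$ have to be found. Specialising (\ref{eq:Rteigi}) to $c=2$ gives three scalar equations per row: for $\vc{r}^{(0,n)}$ these are (\ref{eq:r00}), (\ref{eq:r02}) and (\ref{eq:r03}) (equation (\ref{eq:r01}) is vacuous, since its index range $1\le i\le c-2$ is empty), and for $\vc{r}^{(1,n)}$ they are (\ref{eq:r10}), (\ref{eq:r11}) and (\ref{eq:r12}). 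Each row thus yields a determined $3\times3$ system.

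First I would pin down the two ``low'' entries of each row. Equation (\ref{eq:r00}), namely $-(\lambda+n\mu)r_0^{(0,n)}+\nu r_1^{(0,n)}=0$, carries no coupling to level $n+1$ and gives $r_1^{(0,n)}=\frac{\lambda+n\mu}{\nu}r_0^{(0,n)}$. Pairing this with the conservation identity (\ref{sum1:eq}), $r_0^{(0,n)}+r_1^{(0,n)}=\lambda_2/(n\mu)$ supplied by Lemma~\ref{lem:2-3}, produces a $2\times2$ system whose solution is the stated $r_0^{(0,n)}$ and $r_1^{(0,n)}$. The identical argument applied to (\ref{eq:r10}) together with (\ref{sum2:eq}) (which differs only in that $\lambda_2$ is replaced by $\lambda$) yields $r_0^{(1,n)}$ and $r_1^{(1,n)}$. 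The point of invoking the conservation laws rather than the ``middle'' balance equations (\ref{eq:r02}) and (\ref{eq:r11}) is that the latter carry the coupling terms $\tilde r_1^{(\cdot,n)}$, whereas the conservation route keeps the computation entirely at level $n$.

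The remaining entries $r_2^{(0,n)}$ and $r_2^{(1,n)}$ come from the last balance equations (\ref{eq:r03}) and (\ref{eq:r12}). Here the coupling term $\tilde r_2^{(\cdot,n)}=(n+1)\mu\bigl(r_1^{(\cdot,n)}r_1^{(0,n+1)}+r_2^{(\cdot,n)}r_1^{(1,n+1)}\bigr)$ appears, so at first sight (\ref{eq:r03}) and (\ref{eq:r12}) are coupled to the entire tail $\vc{R}^{(n+1)},\vc{R}^{(n+2)},\dots$. The key observation that makes a closed form possible is that, for $c=2$, the only entries of $\vc{R}^{(n+1)}$ entering these equations are $r_1^{(0,n+1)}$ and $r_1^{(1,n+1)}$, and these were already obtained explicitly in the previous step, independently of any higher-level matrix. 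Substituting them, (\ref{eq:r03}) and (\ref{eq:r12}) become scalar linear equations in the single unknown $r_2^{(\cdot,n)}$ (the unknown also multiplies the known $r_1^{(1,n+1)}$ inside $\tilde r_2^{(\cdot,n)}$, but this merely shifts its coefficient from $b_2^{(n)}=-(\lambda+2\nu)$ to $-(\lambda+2\nu)+(n+1)\mu\,r_1^{(1,n+1)}$), which I would solve directly; inserting the explicit low entries at levels $n$ and $n+1$ and simplifying makes the common factor $\nu\bigl(3\lambda+2\nu+2(n+1)\mu\bigr)$ emerge. The one genuine subtlety, and the step I expect to require the most care, is precisely this decoupling: one must verify that the apparent infinite recursion inherited from (\ref{eq:Rteigi}) collapses because the tail enters only through the already-known entries $r_1^{(0,n+1)}$ and $r_1^{(1,n+1)}$, after which the derivation reduces to routine algebra, with (\ref{eq:r02}) and (\ref{eq:r11}) available as consistency checks.
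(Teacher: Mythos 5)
Your proposal is correct and follows exactly the paper's own derivation: the corollary is obtained by pairing (\ref{eq:r00}) with (\ref{sum1:eq}) and (\ref{eq:r10}) with (\ref{sum2:eq}) to get the first two entries of each row, and then substituting these into (\ref{eq:r03}) and (\ref{eq:r12}), using precisely the decoupling you identify --- for $c=2$ the tail enters only through $r_1^{(0,n+1)}$ and $r_1^{(1,n+1)}$, and the unknown's coefficient becomes
\[
(\lambda+2\nu)-(n+1)\mu\, r_1^{(1,n+1)}=\frac{\nu\,\bigl(3\lambda+2\nu+2(n+1)\mu\bigr)}{\lambda+\nu+(n+1)\mu}>0 .
\]
One caveat worth recording: carrying your plan through yields
\[
r_2^{(0,n)}=\frac{\lambda_2(\lambda+n\mu)\,\bigl[\lambda_1(\lambda+\nu+(n+1)\mu)+\lambda_2(\lambda+(n+1)\mu)\bigr]}{n\mu\,(\lambda+\nu+n\mu)\,\bigl(3\lambda+2\nu+2(n+1)\mu\bigr)\,\nu},
\]
whose bracket differs from the printed one ($\lambda_2(\lambda+(n+1)\mu)$ where the corollary prints $\lambda_2\nu$); the printed expression appears to be a typo, as your own suggested consistency check (\ref{eq:r02}) confirms (e.g.\ with $\lambda_1=\lambda_2=1$, $\nu=2$, $\mu=1$, $n=1$ the balance equations force $r_2^{(0,1)}=3/14$, whereas the printed formula gives $6/35$). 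The printed $r_2^{(1,n)}$, read with its bracket as $\lambda(\lambda+(n+1)\mu)+\lambda_1\nu$, agrees exactly with this computation, so the discrepancy lies in the statement of $r_2^{(0,n)}$, not in your derivation.
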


%%%%%%%%%%%%%%%%%%%%%%%%%%%%%%%%%%%%%%%%%%%%%%%%%%%%%%%%%%%%%%%
\section{Taylor series expansion}
In this section, we derive Taylor series expansion for all non-zero elements of the rate matrices. 
In particular, we find Taylor series expansion of $r_{i}^{(0,n)}$ and $\ r_{i}^{(1,n)}\ (i=0,1,\dots,c)$ in terms of $1/n$. 
We use $\{ \theta_{m}^{(0,k)};m \in Z_{+}\}$ and $\{ \theta_{m}^{(1,k)};m \in Z_{+}\}$ as the coefficients of Taylor series expansion, 
where $k$ denotes the number of idle servers. We use the convention that if $k<0$ or $c<k$ then $\theta_{m}^{(0,k)}=0$ and $\theta_{m}^{(1,k)}=0$.   
Furthermore, $o(x)$ implies $\lim_{x \to 0} o(x) / x = 0$ and $O(x)$ implies $\limsup_{x \to 0} | O(x) / x | < \infty$, respectively.

In this section, Lemma~\ref{lem:3-1} gives the one term expansion while Lemma~\ref{lem:3-2} improves Lemma~\ref{lem:3-1} by replacing the small order $o(\cdot)$ by the big order $O(\cdot)$. 
Furthermore, Theorem~\ref{theo:5} provides the general expansion formulae for higher order Taylor series expansion of $r_{i}^{(0,n)}$ and $\ r_{i}^{(1,n)}\ (i=0,1,\dots,c)$. 
%%%%%%%%%%%%%%%%%%%%%%%%%%%%%%%%%%%%%%%%%%%%%%%%%%%%%%%%%%%%

\begin{lem} \label{lem:3-1} {\rm
We have one term series expansion for the elements of $\vc{r}^{(0,n)},\ \vc{r}^{(1,n)}$ ($n  \to \infty $) as follows. 
\begin{eqnarray*}
 r_{c-k}^{(0,n)} &=& \theta_{0}^{(0,k)} \frac{1}{n^{k}} + o(\frac{1}{n^{k}}),\qquad k=0,1,\dots,c, 
\label{eq:3-1-1} \\
 r_{c-k}^{(1,n)} &=& \theta_{0}^{(1,k)} \frac{1}{n^{k}} + o(\frac{1}{n^{k}}), \qquad
 k=0,1,\dots,c, 
\label{eq:3-1-2} 
\end{eqnarray*}
where the sequences $\{ \theta_{0}^{(0,k)};k = 0,1,\dots,c \}$ and $\{ \theta_{0}^{(1,k)};k =  0,1,\dots,c \}$ are given as follows.
\begin{eqnarray*}
\theta_{0}^{(0,k)}=
\left\{ \begin{array}{ll}
0,&k=0, \\
 \displaystyle  \frac{\lambda_2 }{\mu},& k=1,\\
  \displaystyle   \frac{\lambda_2 }{\mu}
\prod_{i=1}^{k-1}\frac{(c-i)\nu}{\mu},  &k =2,\ \dots ,c. 
\end{array} \right.  
\end{eqnarray*}
\begin{eqnarray*}
\theta_{0}^{(1,k)}=
\left\{ \begin{array}{ll}
\vspace{2mm}
 \displaystyle \frac{\lambda}{c\nu},&k=0, \\
\displaystyle  \frac{\lambda}{\mu},&k=1,\\
  \displaystyle  \frac{\lambda}{\mu}
\prod_{i=1}^{k-1}\frac{(c-i)\nu}{\mu},\  &k =2,\ \dots ,c.  \\
\end{array} \right. 
\end{eqnarray*}
}
\end{lem}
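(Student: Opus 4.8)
The plan is to determine the one-term asymptotics component by component, organizing the argument as an induction on $k$ (equivalently, a descent in the index $c-k$ from $c$ down to $0$) driven by the balance equations (\ref{eq:r00})--(\ref{eq:r12}) together with the exact sum rules (\ref{sum1:eq}), (\ref{sum2:eq}) of Lemma~\ref{lem:2-3}. First I would record the a priori order estimates. Since every summand in Lemma~\ref{lem:2-3} is nonnegative, (\ref{sum1:eq}) and (\ref{sum2:eq}) give at once $r_i^{(0,n)},r_i^{(1,n)}=O(1/n)$ for $i\le c-1$; substituting these into (\ref{eq:r12}) and bounding $\tilde r_c^{(1,n)}$ from above yields $r_c^{(1,n)}\le \lambda/(c\nu)+O(1/n)$, so $r_c^{(1,n)}$ is bounded and $r_c^{(0,n)}=O(1/n)$ once its limit is shown to vanish. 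The crucial a priori input, used at every step below, is the \emph{sharp} order $r_{c-k}^{(\cdot,n)}=O(1/n^{k})$, which is what renders all the correction terms negligible at leading order.

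Second, I would solve the coupled base case $k\in\{0,1\}$ for the low-priority row first. Writing $u=\lim_n r_c^{(1,n)}$ and $v=\lim_n n\,r_{c-1}^{(1,n)}$ (subsequential limits exist by boundedness, and $v\le\lambda/\mu$ by (\ref{sum2:eq})), equation (\ref{eq:r12}) and equation (\ref{eq:r11}) at $i=c-1$ reduce, using $n\mu\gg1$ and $(n+1)\mu\,r_{c-1}^{(1,n+1)}\to\mu v$, to the system $-(\lambda+c\nu)u+\mu u v=-\lambda$ and $c\nu u=\mu v$. Eliminating $v$ gives the quadratic $c\nu u^2-(\lambda+c\nu)u+\lambda=0$, whose roots are $u=1$ and $u=\lambda/(c\nu)$ (here stability, $\lambda<c\nu$ from Lemma~\ref{lem:stationary}, resolves the discriminant). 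The root $u=1$ is discarded because it forces $v=c\nu/\mu>\lambda/\mu$, contradicting $v\le\lambda/\mu$; equivalently it is excluded by the minimality of $\vc{R}^{(n)}$ encoded in the monotone iteration of Lemma~\ref{prop:2}. Hence $\theta_0^{(1,0)}=\lambda/(c\nu)$ and $\theta_0^{(1,1)}=\lambda/\mu$. The analogous high-priority equations (\ref{eq:r03}) and (\ref{eq:r02}) are then \emph{linear}: (\ref{eq:r03}) with $(n+1)\mu\,r_{c-1}^{(1,n+1)}\to\lambda$ gives $-c\nu\,r_c^{(0,n)}\to0$, i.e.\ $\theta_0^{(0,0)}=0$, after which (\ref{eq:r02}) gives $n\mu\,r_{c-1}^{(0,n)}\to\lambda_2$, i.e.\ $\theta_0^{(0,1)}=\lambda_2/\mu$. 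Every subsequential limit being forced to these values, the full limits exist.

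Third, I would run the cascade for $k=2,\dots,c$. Solving for the middle unknown in (\ref{eq:r01}) (resp.\ (\ref{eq:r11})) at index $i=c-k$,
\[
r_{c-k}^{(0,n)}=\frac{(c-k+1)\nu\,r_{c-k+1}^{(0,n)}+\lambda\,r_{c-k-1}^{(0,n)}+\tilde r_{c-k}^{(0,n)}}{\lambda+(c-k)\nu+n\mu},
\]
the denominator is $n\mu(1+O(1/n))$ while, by the sharp orders, the numerator is dominated by its first term, of order $1/n^{k-1}$: indeed $\lambda\,r_{c-k-1}^{(\cdot,n)}=O(1/n^{k+1})$, and $\tilde r_{c-k}^{(\cdot,n)}=O(1/n^{k})$ since the prefactor $(n+1)\mu$ multiplies products of two entries whose orders, even in the worst case $r_c^{(1,n)}=O(1)$, leave a net $O(1/n^{k})$ — both negligible against $1/n^{k-1}$. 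Multiplying by $n^{k}$ and letting $n\to\infty$ (with $n^{k-1}r_{c-k+1}^{(\cdot,n)}\to\theta_0^{(\cdot,k-1)}$) yields $\mu\,\theta_0^{(\cdot,k)}=(c-k+1)\nu\,\theta_0^{(\cdot,k-1)}$ for $k=2,\dots,c-1$, while the boundary relation (\ref{eq:r00}) (resp.\ (\ref{eq:r10})) supplies the same factor $\nu/\mu$ at $k=c$. Unwinding this recursion from $\theta_0^{(0,1)}=\lambda_2/\mu$ and $\theta_0^{(1,1)}=\lambda/\mu$ reproduces exactly the announced products $\prod_{i=1}^{k-1}(c-i)\nu/\mu$.

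The main obstacle is the term $\tilde r_i^{(\cdot,n)}$, which couples the two rows $\vc{r}^{(0,n)},\vc{r}^{(1,n)}$ to each other and ties level $n$ to level $n+1$; this destroys any purely triangular induction and is precisely why the high-priority base case must borrow $\theta_0^{(1,1)}=\lambda/\mu$ from the low-priority row. More delicate still is the sharp bound $r_{c-k}^{(\cdot,n)}=O(1/n^{k})$ on which the cascade rests but which does \emph{not} follow from the sum rules alone (they give only $O(1/n)$): I would establish it by a simultaneous induction that propagates the order down the index \emph{together with} the leading coefficient, using the cancellation $\mu\theta_0^{(\cdot,k)}=(c-k+1)\nu\theta_0^{(\cdot,k-1)}$ to harvest the extra powers of $1/n$, exactly in the spirit of the perturbation scheme of~\cite{Perturbation}. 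Once this bookkeeping is in place, the limits are routine.
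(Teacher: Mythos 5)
Your proposal has a genuine circularity in its logical organization, concentrated exactly where the model is hardest. To reduce (\ref{eq:r11}) at $i=c-1$ to the second limit equation $c\nu u=\mu v$, you must show that the coupling term $\tilde{r}_{c-1}^{(1,n)}=(n+1)\mu\bigl(r_{c-1}^{(1,n)}r_{c-2}^{(0,n+1)}+r_{c}^{(1,n)}r_{c-2}^{(1,n+1)}\bigr)$ vanishes; since $r_c^{(1,n)}=O(1)$, this requires $r_{c-2}^{(1,n)}=o(1/n)$, which is strictly stronger than the $O(1/n)$ that the sum rules (\ref{sum1:eq})--(\ref{sum2:eq}) provide. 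You correctly flag that the sharp orders do not follow from the sum rules and promise to supply them by a ``simultaneous induction,'' but as organized that induction cannot start: your cascade (step three) consumes the base case, and each cascade step $k$ invokes the order $O(1/n^{k+1})$ of the index $c-k-1$, i.e.\ the output of step $k+1$, which lies later in your descending cascade. So both the base case and the cascade rest on bounds your scheme only produces afterwards, and the proposed mechanism (``using the cancellation $\mu\theta_0^{(\cdot,k)}=(c-k+1)\nu\theta_0^{(\cdot,k-1)}$ to harvest powers of $1/n$'') is not what breaks this circle --- leading-coefficient cancellations identify the constants but do not by themselves yield order estimates.

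The missing idea, which is what the paper's proof is built on, is an \emph{inner} induction on the phase index $i$ increasing from $0$, run \emph{before} any leading coefficients are touched. The coupling term $\tilde{r}_i^{(1,n)}$ involves only the index $i-1$ (at orbit level $n+1$); equation (\ref{eq:r10}) gives $r_0^{(1,n)}=o(1/n)$ outright; and the balance equation for index $i$ has the form $n\mu\, r_i^{(1,n)}=\lambda r_{i-1}^{(1,n)}-(\lambda+i\nu)r_i^{(1,n)}+(i+1)\nu r_{i+1}^{(1,n)}+\tilde{r}_i^{(1,n)}$, whose right side is $o(1)$ as soon as $r_{i-1}^{(1,n)}=o(1/n)$ is known (the dangerous term being $(n+1)\mu\,r_c^{(1,n)}r_{i-1}^{(1,n+1)}=O(1)\cdot o(1)$); dividing by $n\mu$ then propagates $o(1/n)$ from $i-1$ to $i$, sweeping $i=1,\dots,c-2$ using only the a priori bounds. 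After this sweep, (\ref{sum2:eq}) immediately gives $n\mu\, r_{c-1}^{(1,n)}\to\lambda$, and (\ref{eq:r12}) becomes \emph{linear}: substituting $(n+1)\mu r_{c-1}^{(1,n+1)}=\lambda+o(1)$ and cancelling $\lambda r_c^{(1,n)}$ from both sides leaves $c\nu\, r_c^{(1,n)}=\lambda+o(1)$, so $\theta_0^{(1,0)}=\lambda/(c\nu)$ with no quadratic and no root selection (your quadratic is correct arithmetic, but it only becomes available once the very bounds that make it unnecessary are in place). The same pattern --- small-$o$ sweep over low indices within each outer step $k$, then extraction of the leading coefficient of $r_{c-k}$ --- is what renders the higher-$k$ induction non-circular. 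A further, more minor, point you should handle if you keep subsequential limits: your $u$ and $v$ are extracted along a subsequence in $n$, while the defining relations couple level $n$ to level $n+1$, so convergence along a subsequence does not transfer to the shifted sequence without an additional liminf/limsup argument.
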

\begin{proof}
The technical details are provided in Appendix~\ref{app:A}. 
\end{proof}
%%%%%%%%%%%%%%%%%%%%%%%%%%%%%%%%%%%%%%%%%%%%%%%%%%%%%%%%%

%
\begin{lem} \label{lem:3-2}
The expansion formulae in Lemma~\ref{lem:3-1} for $\vc{r}^{(0,n)}$ and $\vc{r}^{(1,n)}$ ($n  \to \infty $) can be improved as
\begin{eqnarray}
 r_{c-k}^{(0,n)} &=& \theta_{0}^{(0,k)} \frac{1}{n^{k}} + O(\frac{1}{n^{k+1}}),\qquad
 k=0,1,\dots,c, \qquad 
\label{eq:55} \\
 r_{c-k}^{(1,n)} &=& \theta_{0}^{(1,k)} \frac{1}{n^{k}} + O(\frac{1}{n^{k+1}}),\qquad
 k=0,1,\dots,c. \qquad
\label{eq:56} 
\end{eqnarray}
%where $\{ \theta_{0}^{(0,k)};k = 0,1,\dots,c \}$ and $\{ \theta_{0}^{(1,k)};k =  0,1,\dots,c \}$ are defined in Lemma~\ref{lem:3-1}.
\end{lem}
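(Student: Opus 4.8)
The plan is to upgrade each $o(\cdot)$ remainder in Lemma~\ref{lem:3-1} to an $O(\cdot)$ remainder by feeding the known leading-order terms back into the defining relations (\ref{eq:r00})--(\ref{eq:r12}) together with the exact identities (\ref{sum1:eq})--(\ref{sum2:eq}), and tracking one additional order in $1/n$. Writing $s\in\{0,1\}$ for the customer type, I will use throughout the crude consequence of Lemma~\ref{lem:3-1} that $r_{c-k}^{(s,n)}=O(n^{-k})$ for all $k$, and I will repeatedly exploit that $b_{c-k}^{(n)}=-(n\mu+O(1))=\Theta(n)$ for $1\le k\le c$ whereas $b_c^{(n)}=-(\lambda+c\nu)=O(1)$. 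The argument is an induction on the number of idle servers $k$, but the order in which the cases must be treated is dictated by a coupling obstacle described next.

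The hard part is the base layer $k\in\{0,1\}$. The terms $\tilde r_{i}^{(s,n)}$ are quadratic in the rate-matrix entries and carry a factor $(n+1)\mu$, so they couple level $n$ to level $n+1$ and amplify errors by a factor of $n$. Concretely, in (\ref{eq:r12}) the contribution $(n+1)\mu\, r_c^{(1,n)} r_{c-1}^{(1,n+1)}$ equals $r_c^{(1,n)}\big((n+1)\mu\, r_{c-1}^{(1,n+1)}\big)$, and Lemma~\ref{lem:3-1} only gives $(n+1)\mu\, r_{c-1}^{(1,n+1)}=\lambda+o(1)$; this $o(1)$ is exactly what prevents sharpening $r_c^{(1,n)}=\lambda/(c\nu)+o(1)$ into $r_c^{(1,n)}=\lambda/(c\nu)+O(1/n)$. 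Thus the $k=0$ improvement for $r_c^{(1,n)}$ seems to require the $k=1$ improvement $r_{c-1}^{(1,n+1)}=\lambda/(\mu(n+1))+O(1/n^2)$, while the natural route to the latter through (\ref{eq:r11}) in turn needs the former --- a genuine cross-level circularity.

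I would break this circularity using the \emph{exact} identities of Lemma~\ref{lem:2-3}. Since (\ref{sum2:eq}) reads $\sum_{k=1}^{c} r_{c-k}^{(1,n)}=\lambda/(n\mu)$ and Lemma~\ref{lem:3-1} already gives $\sum_{k=2}^{c} r_{c-k}^{(1,n)}=O(1/n^2)$, subtraction yields $r_{c-1}^{(1,n)}=\lambda/(n\mu)+O(1/n^2)$ with no appeal to the $k=0$ case; (\ref{sum1:eq}) gives $r_{c-1}^{(0,n)}=\lambda_2/(n\mu)+O(1/n^2)$ in the same way. This settles the $k=1$ layer first. With these sharpened $k=1$ estimates at levels $n$ and $n+1$, I then return to (\ref{eq:r03}) and (\ref{eq:r12}): grouping the coefficient of $r_c^{(s,n)}$ gives $-c\nu+O(1/n)$ (using $b_c^{(n)}+\lambda=-c\nu$ and $(n+1)\mu\,r_{c-1}^{(1,n+1)}=\lambda+O(1/n)$), while the remaining terms are $O(1/n)$, which delivers $r_c^{(0,n)}=O(1/n)$ and $r_c^{(1,n)}=\lambda/(c\nu)+O(1/n)$, i.e.\ the $k=0$ layer.

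Finally I would run the induction on $k=2,3,\dots,c$. For $2\le k\le c-1$ I solve (\ref{eq:r01})/(\ref{eq:r11}) for $r_{c-k}^{(s,n)}$ by dividing through by $b_{c-k}^{(n)}=-(n\mu+O(1))$, and for $k=c$ I use (\ref{eq:r00})/(\ref{eq:r10}). At each step the dominant contribution comes from $(c-k+1)\nu\, r_{c-(k-1)}^{(s,n)}/(n\mu)$, which reproduces the recursion $\theta_{0}^{(s,k)}=\frac{(c-k+1)\nu}{\mu}\theta_{0}^{(s,k-1)}$; the neighbour $r_{c-(k+1)}^{(s,n)}=O(1/n^{k+1})$ enters only at order $O(1/n^{k+2})$, and a routine order count --- using the already-established $r_c^{(0,n)}=O(1/n)$, $r_c^{(1,n)}=O(1)$, $r_{c-1}^{(s,n)}=O(1/n)$, and Lemma~\ref{lem:3-1} for the level-$(n+1)$ factors --- gives $\tilde r_{c-k}^{(s,n)}=O(1/n^{k+1})$. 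Combined with the inductive hypothesis $r_{c-(k-1)}^{(s,n)}=\theta_{0}^{(s,k-1)}/n^{k-1}+O(1/n^k)$ and the expansion $1/b_{c-k}^{(n)}=-1/(n\mu)+O(1/n^2)$, this yields $r_{c-k}^{(s,n)}=\theta_{0}^{(s,k)}/n^k+O(1/n^{k+1})$, completing the induction and hence (\ref{eq:55})--(\ref{eq:56}). The only real subtlety is the base layer; everything above it is bookkeeping of orders of magnitude.
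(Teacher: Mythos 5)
Your proposal is correct and follows essentially the same route as the paper's own proof in Appendix~\ref{app:B}: settle $k=1$ first by combining the exact identities of Lemma~\ref{lem:2-3} with the $o(\cdot)$ bounds of Lemma~\ref{lem:3-1}, then get the $k=0$ layer from (\ref{eq:r03}) and (\ref{eq:r12}) using the sharpened $k=1$ estimates at level $n+1$, then induct upward through $k=2,\dots,c-1$ via (\ref{eq:r01})/(\ref{eq:r11}) and finish $k=c$ via (\ref{eq:r00})/(\ref{eq:r10}); the paper merely treats $k=0$ last rather than second, which is immaterial since the middle induction only needs $r_c^{(0,n)}=o(1)$ and $r_c^{(1,n)}=O(1)$. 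One small slip worth fixing: your intermediate claim $\tilde{r}_{c-k}^{(1,n)}=O(1/n^{k+1})$ is off by one order, because the term $(n+1)\mu\, r_c^{(1,n)} r_{c-k-1}^{(1,n+1)}$ is genuinely $\Theta(1/n^{k})$ (as $r_c^{(1,n)}\to\lambda/(c\nu)>0$); this is harmless for your conclusion, since $\tilde{r}_{c-k}^{(1,n)}$ is subsequently divided by $|b_{c-k}^{(n)}|=n\mu+O(1)$, so its contribution to $r_{c-k}^{(1,n)}$ is still $O(1/n^{k+1})$, exactly as required.
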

\begin{proof}
The technical details are provided in Appendix~\ref{app:B}. 
\end{proof}
%
%
%%%%%%%%%%%%%%%%%%%%%%%%%%%%%%%%%%%%%%%%%%%%%%%%%%%%%%%%%
\begin{theo}\label{theo:5} 
The elements of $\vc{r}^{(0,n)}$ and $\vc{r}^{(1,n)}$ ($n \to \infty$) are given by  
\begin{eqnarray}
r_{c-k}^{(0,n)} = \sum_{i=0}^{m} \theta_{i}^{(0,k)} (-1)^i \frac{1}{n^{k+i}} + O(\frac{1}{n^{k+m+1}})
,\quad% k \in {0,1,\dots,c} ,\ n \in \mathbb{N},\
 m \in \mathbb{N}, \label{eq:theo3.1}\\
r_{c-k}^{(1,n)} = \sum_{i=0}^{m} \theta_{i}^{(1,k)} (-1)^i \frac{1}{n^{k+i}} + O(\frac{1}{n^{k+m+1}})
,\quad% k \in {0,1,\dots,c} ,\ n \in \mathbb{N},\
 m \in \mathbb{N}, \label{eq:theo3.2}
\end{eqnarray}
where $\{ \theta_{m}^{(0,k)} , \theta_{m}^{(1,k)};k = 0,1,\dots,c,\ m \in \mathbb{N} \}$ are recursively defined as follows.
\begin{eqnarray*}
\theta_{m}^{(0,0)} &=& - \frac{\lambda_1}{c\nu} \theta_{m-1}^{(0,1)} 
+ \frac{\mu}{c\nu} \sum_{j=0}^{m-1} \Phi_{j}^{(0,0)} \theta_{m-j-1}^{(0,1)} (-1)^{j+1} \nonumber \\
                        &  & \mbox{} + \frac{\mu}{c\nu} \sum_{j=1}^{m} \widetilde{\Phi}_{j}^{(1,0)} \theta_{m-j}^{(0,0)} (-1)^j ,\\
\theta_{m}^{(0,1)} &=& \sum_{j=2}^{\min(c,m+1)} \theta^{(0,j)}_{m+1-j}(-1)^{j}
,\\ %\qquad k=1,\ \\
\theta_m^{(0,k)} &=& \frac{(c-k+1)\nu}{\mu} \theta_m^{(0,k-1)} + \frac{\lambda}{\mu} \theta_{m-2}^{(0,k+1)}   \\ 
 && + \frac{\lambda+(c-k)\nu}{\mu} \theta_{m-1}^{(0,k)} + \sum_{j=0}^{m-2} \Phi_{j}^{(0,k)} \theta_{m-j-2}^{(0,1)} (-1)^{j} \\
 && + \sum_{j=0}^{m-1} \Phi_{j}^{(1,k)} \theta_{m-j-1}^{(0,0)} (-1)^{j+1}, \\ 
 & & k=2,3,\dots,c,\ \\
\end{eqnarray*}
\begin{eqnarray*}
\theta_{m}^{(1,0)} &=& - \frac{\lambda_1}{c\nu} \theta_{m-1}^{(1,1)} 
+ \frac{\mu}{c\nu} \sum_{j=0}^{m-1} \Phi_{j}^{(0,0)} \theta_{m-j-1}^{(1,1)} (-1)^{j+1} \nonumber \\
                        &   & \mbox{} + \frac{\mu}{c\nu} \sum_{j=1}^{m} \widetilde{\Phi}_{j}^{(1,0)} \theta_{m-j}^{(1,0)} (-1)^j ,\\% k=0,\ \\
\theta_{m}^{(1,1)} &=& \sum_{j=2}^{\min(c,m+1)} \theta^{(1,j)}_{m+1-j}(-1)^{j},\\
% k=1,\\\
\theta_m^{(1,k)} &=&  \frac{(c-k+1)\nu}{\mu} \theta_m^{(1,k-1)}
+ \frac{\lambda}{\mu} \theta_{m-2}^{(1,k+1)} 
 \\
 && \mbox{} + \frac{\lambda+(c-k)\nu}{\mu} \theta_{m-1}^{(1,k)} + \sum_{j=0}^{m-2} \Phi_{j}^{(0,k)} \theta_{m-j-2}^{(1,1)} (-1)^{j} \\
 & & \mbox{} + \sum_{j=0}^{m-1} \Phi_{j}^{(1,k)} \theta_{m-j-1}^{(1,0)} (-1)^{j+1},\\ 
 & & k=2,3,\dots,c.
\end{eqnarray*}
Furthermore, 
\begin{eqnarray*}
\Phi_{j}^{(0,k)} &=& \sum_{i=0}^{j} \theta_{i}^{(0,k+1)} (-1)^{j} \frac{(k+i)_{j-i}}{(j-i)!}, \\
\Phi_{j}^{(1,k)} & = &  \sum_{i=0}^{j} \theta_{i}^{(1,k+1)} (-1)^{j}  \frac{(k+i)_{j-i}}{(j-i)!} ,\\ 
\widetilde{\Phi}_{j}^{(1,0)} &=& \sum_{i=1}^{j} \theta_{i}^{(1,1)} (-1)^{j} \frac{(i)_{j-i}}{(j-i)!},
\end{eqnarray*}
where $(\phi)_n \ (-\infty < \phi < \infty, n \in \mathbb{Z}_{+})$ denotes the Pochhammer symbol defined by
\begin{eqnarray*}
(\phi)_n = \left \{
\begin{array}{ll}
1, & n =0,\\
\phi (\phi + 1) \dots (\phi + n -1), & n \in \mathbb{N}.
\end{array}
\right.
\end{eqnarray*}
\end{theo}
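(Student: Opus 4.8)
The plan is to prove the two expansions (\ref{eq:theo3.1}) and (\ref{eq:theo3.2}) simultaneously by induction on the number of terms $m$, taking Lemma~\ref{lem:3-2} (which is precisely the assertion for $m=0$) as the base case and handling the two rows $\vc{r}^{(0,n)}$ and $\vc{r}^{(1,n)}$ jointly, since the quadratic ``tilde'' corrections couple them. The induction hypothesis $H(m-1)$ is that, for every $k=0,1,\dots,c$ and both rows $s\in\{0,1\}$,
\[
r_{c-k}^{(s,n)} = \sum_{i=0}^{m-1}\theta_{i}^{(s,k)}(-1)^{i}\frac{1}{n^{k+i}} + O\Big(\frac{1}{n^{k+m}}\Big).
\]
To carry out the inductive step I would substitute these truncated expansions into the governing relations (\ref{eq:r00})--(\ref{eq:r03}) and (\ref{eq:r10})--(\ref{eq:r12}) together with the sum relations (\ref{sum1:eq})--(\ref{sum2:eq}) of Lemma~\ref{lem:2-3}, and then match the coefficient of the appropriate power of $1/n$ on each side to read off $\theta_{m}^{(s,k)}$.

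The crucial device is the treatment of the level-$(n+1)$ factors occurring in $\tilde{r}_{i}^{(s,n)}$. After absorbing the prefactor $(n+1)\mu$ into the adjacent power of $(n+1)$, each such factor has the form $(n+1)^{-p}$, which I re-expand as a series in $1/n$ through the binomial series $(n+1)^{-p}=\sum_{\ell\ge0}(-1)^{\ell}\frac{(p)_{\ell}}{\ell!}n^{-(p+\ell)}$. Collecting terms by total order then produces exactly the Pochhammer-weighted convolutions defining $\Phi_{j}^{(s,k)}$ and $\widetilde{\Phi}_{j}^{(1,0)}$. The shift by one in the Pochhammer base (namely $(k+i)$ rather than $(k+1+i)$) is the bookkeeping effect of the absorbed $(n+1)$ prefactor, and the fact that $\widetilde{\Phi}_{j}^{(1,0)}$ starts at $i=1$ reflects that the $i=0$ contribution of $(n+1)\mu\,r_{c-1}^{(1,n+1)}$ equals $\lambda$ and is transposed to the left-hand side of (\ref{eq:r03}) and (\ref{eq:r12}), turning the coefficient $\lambda+c\nu$ into $c\nu$.

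A point I would make explicit is that the resulting relations are well posed because, at each fixed $m$, they can be solved in the order: first $\theta_{m}^{(s,0)}$ from (\ref{eq:r03}) and (\ref{eq:r12}); then $\theta_{m}^{(s,1)}$ from the sum relations (\ref{sum1:eq})--(\ref{sum2:eq}); and finally $\theta_{m}^{(s,k)}$ for $k=2,\dots,c$ in ascending $k$ from (\ref{eq:r01}) and its row-$1$ analogue, with the boundary equation (\ref{eq:r00}) recovered as the case $k=c$ (there both $\Phi$-sums vanish because $\theta^{(s,c+1)}\equiv0$ by convention). Each of these expressions refers only to coefficients of strictly lower order in $1/n$, or to $\theta_{m}^{(s,k-1)}$ already computed at the current level, so no circularity arises; in particular the apparent self-reference of $\theta_{m}^{(s,0)}$ inside the $\widetilde{\Phi}_{j}^{(1,0)}$-sum involves only indices $j\ge1$, hence strictly lower order.

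The main obstacle is not the formal coefficient matching but the rigorous control of the remainders. Because $\tilde{r}_{i}^{(s,n)}$ is quadratic in the unknowns and mixes levels $n$ and $n+1$, I must show that substituting the $O(1/n^{k+m})$ remainders of $H(m-1)$ and re-expanding the $(n+1)$-factors degrades the error by at most one order, so that after subtracting the newly determined $\theta_{m}$-term the remainder is genuinely $O(1/n^{k+m+1})$, uniformly in $k$. This requires estimating the tail of the binomial expansion of $(n+1)^{-p}$ and the products of two truncated series, and verifying that the coupling between the two rows and between neighbouring $k$ does not amplify the error. I would dispatch this with the same censoring and perturbation estimates that already underlie Lemma~\ref{lem:3-2}, relegating the detailed remainder bookkeeping to an appendix.
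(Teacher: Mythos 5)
Your overall strategy is the same as the paper's: induction on $m$ with Lemma~\ref{lem:3-2} as the base case, substitution of the truncated expansions into (\ref{eq:r00})--(\ref{eq:r12}) and the sum relations of Lemma~\ref{lem:2-3}, and binomial re-expansion of the $(n+1)$-factors to produce the Pochhammer convolutions $\Phi_{j}^{(s,k)}$ and $\widetilde{\Phi}_{j}^{(1,0)}$. However, the resolution order that you single out as the guarantee of well-posedness is wrong, and the step ``first $\theta_{m}^{(s,0)}$ from (\ref{eq:r03}) and (\ref{eq:r12})'' would fail for the row $s=1$. When you match the coefficient of $1/n^{m}$ in (\ref{eq:r12}), the term $r_{c}^{(1,n)}\cdot(n+1)\mu\, r_{c-1}^{(1,n+1)}$ inside $\tilde{r}_{c}^{(1,n)}$ contributes the $j=m$ term of the convolution, namely $\mu\,\widetilde{\Phi}_{m}^{(1,0)}\theta_{0}^{(1,0)}$ (up to sign), and $\widetilde{\Phi}_{m}^{(1,0)}$ contains $\theta_{m}^{(1,1)}$ as its $i=m$ summand (with Pochhammer weight $(m)_{0}/0!=1$). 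Since $\theta_{0}^{(1,0)}=\lambda/(c\nu)\neq 0$, the equation that is supposed to determine $\theta_{m}^{(1,0)}$ genuinely involves the not-yet-computed \emph{current-order} coefficient $\theta_{m}^{(1,1)}$. Your parenthetical defence only disposes of the self-reference through the factors $\theta_{m-j}^{(s,0)}$, $j\geq 1$, which are indeed of lower order; it does not address the current-order content of $\widetilde{\Phi}_{m}^{(1,0)}$ itself. (For the row $s=0$ this obstruction happens to vanish because $\theta_{0}^{(0,0)}=0$, but your scheme treats the two rows jointly, so the circularity is real.)

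The fix is exactly the ordering the paper uses: $k=1$ must come first, not $k=0$. The sum relations (\ref{sum1:eq})--(\ref{sum2:eq}) give $\theta_{m}^{(s,1)}=\sum_{j=2}^{\min(c,m+1)}\theta_{m+1-j}^{(s,j)}(-1)^{j}$, which involves only coefficients of order at most $m-1$, hence only the induction hypothesis $H(m-1)$; then $k=2,\dots,c-1$ are handled in ascending order from (\ref{eq:r01}) and (\ref{eq:r11}), then $k=c$ from the boundary equations (\ref{eq:r00}) and (\ref{eq:r10}), and $k=0$ comes \emph{last}, at which point $\theta_{m}^{(1,1)}$ is available. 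Apart from this ordering error (and the remainder bookkeeping, which you defer in roughly the same way the paper does), your proposal reproduces the paper's argument.
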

\begin{proof}
The technical details are provided in Appendix~\ref{app:C}. 
\end{proof}

%\newpage

%%%%%%%%%%%%%%%%%%%%%%%%%%%%%%%%%%%%%%%%%%%%
%%%%%%%%%%%%%%%%%%%%%%%%%%%%%%%%%%%%%%%%%%%%%%%%%%%%%
\section{Asymptotic upper bound}
In this section, we present the asymptotic upper bound for the stationary distribution. To this end, we use Lemmas~\ref{lem:1} and \ref{lem:2}.   
%%%%%%%%%%%%%%%%%%%%%%%%%%%%%%%%%%%%%%%%%%%%%%
\begin{lem} \label{lem:1}
For a square matrix 
$\vc{A}=\left( 
\begin{array}{ccc}
a_{1,1} & \cdots & a_{1,n} \\
\vdots & \ddots & \vdots \\
a_{n,1} & \cdots & a_{n,n} \\
\end{array} 
\right)$,\ 
and a vector $\vc{x}=(x_1,x_2\dots,x_n)$, we have
\begin{eqnarray*}
||\vc{xA}||_1 \leq ||\vc{x}||_{1}  ||\vc{A}||_{\infty},  
\end{eqnarray*}
where 
$
||\vc{x}||_{1} = \sum_{i=1}^{n} |x_i| ,\ 
 ||\vc{A}||_{\infty} = \max_{1\leq i \leq n} \sum_{j=1}^{n} |a_{ij}| .\ 
$
\end{lem}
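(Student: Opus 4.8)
The plan is to prove the inequality by a direct componentwise computation, relying only on the triangle inequality and an interchange of the order of two finite summations. Since $\vc{x}$ is a row vector, the product $\vc{xA}$ is again a row vector whose $j$-th entry is $\sum_{i=1}^{n} x_i a_{ij}$, and the whole argument consists in bounding the absolute value of each such entry and then summing over $j$.

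First I would write the $\ell_1$ norm of the product explicitly as
\[
||\vc{xA}||_1 = \sum_{j=1}^{n} \left| \sum_{i=1}^{n} x_i a_{ij} \right|,
\]
and apply the triangle inequality inside each term to get the upper bound $\sum_{j=1}^{n} \sum_{i=1}^{n} |x_i|\,|a_{ij}|$. Next I would interchange the two sums (legitimate since all terms are nonnegative and the sums are finite), regrouping the bound as $\sum_{i=1}^{n} |x_i| \bigl( \sum_{j=1}^{n} |a_{ij}| \bigr)$, that is, each $|x_i|$ weighted by the absolute row sum of the $i$-th row of $\vc{A}$. Finally I would dominate every row sum by the maximum row sum, $\sum_{j=1}^{n} |a_{ij}| \le \max_{1\le i\le n} \sum_{j=1}^{n} |a_{ij}| = ||\vc{A}||_{\infty}$, pull this constant out of the sum over $i$, and recognize the remaining factor $\sum_{i=1}^{n} |x_i|$ as $||\vc{x}||_1$, which yields the claim.

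There is no substantive obstacle here: the statement is the standard compatibility of the vector $\ell_1$ norm with the max-row-sum matrix norm, which is precisely the operator norm induced on left multiplication by row vectors. The only point requiring a little care is the bookkeeping in the interchange step, namely making sure that $\vc{xA}$ is read as a row vector so that the summation index $i$ runs over the \emph{rows} of $\vc{A}$; this is exactly what makes the max-row-sum norm $||\vc{A}||_{\infty}$, rather than the max-column-sum norm, the correct quantity to appear on the right-hand side.
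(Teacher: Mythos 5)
Your proposal is correct and follows essentially the same route as the paper's own proof: expand the entries of $\vc{xA}$, apply the triangle inequality, interchange the two finite sums, and dominate each absolute row sum of $\vc{A}$ by the maximum row sum $||\vc{A}||_{\infty}$. The paper merely compresses the triangle-inequality and sum-interchange steps into a single displayed inequality, so there is no substantive difference.
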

\begin{proof}
\begin{eqnarray*}
||\vc{xA}||_1 &=& \sum_{j=1}^{n} |x_{1}  a_{1j}+ x_{2} a_{2j} + \dots +  x_{n} a_{nj}| \\
&\leq& \sum_{i=1}^{n} |x_{i}|  \sum_{j=1}^{n} |a_{ij}|   \\
&\leq& \left( \sum_{i=1}^{n} |x_{i}| \right) \left( \max_{1\leq i \leq n} \sum_{j=1}^{n} |a_{ij}| \right) \\
&=&  ||\vc{x}||_{1}||\vc{A}||_{\infty} .
\end{eqnarray*}
\end{proof}
%%%%%%%%%%%%%%%%%%%%%%%%%%%%%%%%%%%%%%%%%%%%

%%%%%%%%%%%%%%%%%%%%%%%%%%%%%%
\begin{lem}[Fact 5 in \cite{analying_retrial}]\label{lem:2}
For an integer $N \ (\geq 1$) and $\hat{a}>0,\ \hat{b}$ satisfying $\hat{b} \neq \hat{a}m -m^2\ (m=0,1,\dots)$, we have
\begin{eqnarray*}
\prod_{j=N}^{n} \left( 1+ \frac{\hat{a}}{j} + \frac{\hat{b}}{j^2} \right) =
O\left( n^{\widehat{a}}\right) 
,\qquad  n\to\infty.
\end{eqnarray*}
\end{lem}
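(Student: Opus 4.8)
The plan is to pass to logarithms and reduce the estimate to the asymptotics of the harmonic numbers. Because $\hat{a}$ and $\hat{b}$ are fixed while $j\to\infty$, there is an index $N_0\ge N$ such that $|\hat{a}/j+\hat{b}/j^2|<1/2$ for all $j\ge N_0$; each factor $1+\hat{a}/j+\hat{b}/j^2$ then lies in $(1/2,3/2)$ and is in particular positive. I would first write
\[ \prod_{j=N}^{n}\left(1+\frac{\hat{a}}{j}+\frac{\hat{b}}{j^2}\right) = \left(\prod_{j=N}^{N_0-1}\left(1+\frac{\hat{a}}{j}+\frac{\hat{b}}{j^2}\right)\right)\prod_{j=N_0}^{n}\left(1+\frac{\hat{a}}{j}+\frac{\hat{b}}{j^2}\right), \]
where the first factor is a fixed constant independent of $n$ (the hypothesis $\hat{b}\neq\hat{a}m-m^2$ is the non-degeneracy condition ensuring this finite partial product is a genuine nonzero constant). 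The growth in $n$ is therefore carried entirely by the tail product over $j\ge N_0$, and it suffices to estimate its magnitude.

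For the tail I would take logarithms and Taylor-expand each summand. Writing $x=\hat{a}/j+\hat{b}/j^2\to 0$ uniformly for $j\ge N_0$, the expansion $\log(1+x)=x-\frac{1}{2}x^2+O(x^3)$ gives
\[ \log\left(1+\frac{\hat{a}}{j}+\frac{\hat{b}}{j^2}\right) = \frac{\hat{a}}{j} + \frac{\hat{b}-\hat{a}^2/2}{j^2} + O\!\left(\frac{1}{j^3}\right). \]
Summing over $N_0\le j\le n$ and invoking the harmonic asymptotics $\sum_{j=1}^{n}1/j=\ln n+O(1)$ together with the convergence of $\sum 1/j^2$ and $\sum 1/j^3$, I obtain
\[ \sum_{j=N_0}^{n}\log\left(1+\frac{\hat{a}}{j}+\frac{\hat{b}}{j^2}\right) = \hat{a}\ln n + O(1),\qquad n\to\infty. \]
Exponentiating, the tail product equals $\exp(\hat{a}\ln n+O(1))=O(n^{\hat{a}})$, and multiplying by the bounded constant from the first step yields $\prod_{j=N}^{n}(1+\hat{a}/j+\hat{b}/j^2)=O(n^{\hat{a}})$, as claimed.

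An equivalent but slightly slicker route avoids the logarithm: factoring $1+\hat{a}/j+\hat{b}/j^2=(j+\alpha)(j+\beta)/j^2$ with $\alpha+\beta=\hat{a}$ and $\alpha\beta=\hat{b}$, each partial product telescopes through the Gamma function as $\prod_{j=N}^{n}(j+\alpha)=\Gamma(n+1+\alpha)/\Gamma(N+\alpha)$, so that
\[ \prod_{j=N}^{n}\left(1+\frac{\hat{a}}{j}+\frac{\hat{b}}{j^2}\right) = \frac{\Gamma(n+1+\alpha)\,\Gamma(n+1+\beta)}{\Gamma(n+1)^2}\cdot\frac{\Gamma(N)^2}{\Gamma(N+\alpha)\,\Gamma(N+\beta)}. \]
The asymptotic $\Gamma(n+1+\alpha)/\Gamma(n+1)\sim n^{\alpha}$ then reads off the order $n^{\alpha+\beta}=n^{\hat{a}}$ at once, the hypothesis on $\hat{b}$ again keeping the representation non-degenerate.

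The main obstacle is not the leading term — that is forced by the $\hat{a}/j$ contribution — but the careful bookkeeping of the lower-order pieces: I must verify that the remainder $O(1/j^3)$ is uniform in $j$ so that its sum is truly $O(1)$, and I must handle the finitely many initial factors (which may be negative or close to zero) separately rather than feeding them into the logarithm, invoking the non-degeneracy hypothesis on $\hat{b}$ where needed. Once these uniformities are secured, the conclusion follows from the single elementary estimate $\sum_{j=1}^{n}1/j=\ln n+O(1)$.
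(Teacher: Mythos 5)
Your proof is correct, but there is nothing in the paper to compare it against: the authors do not prove Lemma~\ref{lem:2} at all, they import it verbatim as Fact~5 from Liu and Zhao~\cite{analying_retrial}. Your logarithmic argument is therefore a genuine, self-contained substitute for the missing proof, and it is sound: after discarding the finitely many initial factors, the expansion $\log(1+x)=x-\tfrac{1}{2}x^{2}+O(x^{3})$ has a remainder that is uniformly $O(1/j^{3})$ because $|\hat{a}/j+\hat{b}/j^{2}|\leq(\hat{a}+|\hat{b}|)/j$, the sums $\sum 1/j^{2}$ and $\sum 1/j^{3}$ converge, and $\sum_{j=1}^{n}1/j=\ln n+O(1)$ gives the tail product $\exp(\hat{a}\ln n+O(1))=O(n^{\hat{a}})$. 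The Gamma-function variant you sketch is the standard route in this literature and, under the appropriate nondegeneracy condition, yields the sharper two-sided statement $\Theta(n^{\hat{a}})$, which is what Liu and Zhao actually exploit for lower bounds.

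One caveat: your parenthetical remarks misread the hypothesis $\hat{b}\neq\hat{a}m-m^{2}$. That condition says no nonnegative integer is a root of $t^{2}-\hat{a}t+\hat{b}=0$, i.e.\ neither $\alpha$ nor $\beta$ in your factorization is a nonnegative integer. By contrast, a factor $1+\hat{a}/j+\hat{b}/j^{2}$ vanishes at the integer $j=m$ precisely when $\hat{b}=-\hat{a}m-m^{2}$ (equivalently, when $\alpha$ or $\beta$ is a negative integer), and it is also negative integer roots that create poles of $\Gamma(N+\alpha)$ or $\Gamma(N+\beta)$. So the stated hypothesis guarantees neither that your finite prefactor is nonzero nor that the Gamma representation is nondegenerate. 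Fortunately this is harmless for what you are proving: the one-sided bound $O(n^{\hat{a}})$ needs no hypothesis on $\hat{b}$ whatsoever, since a vanishing factor only helps an upper bound, and your argument goes through with the constant prefactor possibly equal to zero. The hypothesis is evidently inherited from the two-sided formulation in~\cite{analying_retrial}; if you keep the parenthetical remarks, they should be corrected or deleted.
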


%%%%%%%%%%%%%%%%%%%%%%%%%%%%%%%%%%%%%%%%%%%%
\begin{theo}\label{theo:4-1}
We define $\vc{\pi}'_{n} = (\pi_{c-1,n}, \pi_{c,n})$ in order to obtain  
\begin{eqnarray*} 
||\vc{\pi}'_{n}||_1 =
O\left( n^{a} \times {\left(\frac{\lambda}{c\nu}\right)}^n\right),\qquad n\to\infty,
\end{eqnarray*}
where $a= (c^2\nu+\lambda)/c\mu$.
\end{theo}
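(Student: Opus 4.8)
# Proof Proposal for Theorem 4.1

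The plan is to bound the tail of the marginal distribution by exploiting the special structure of the QBD, namely that only the last two rows of each rate matrix $\vc{R}^{(n)}$ are nonzero. Because $\vc{\pi}_n = \vc{\pi}_{n-1} \vc{R}^{(n)}$ and only rows $c-1$ and $c$ of $\vc{R}^{(n)}$ contribute, the vector $\vc{\pi}_n$ is determined entirely by the two-dimensional vector $\vc{\pi}'_{n-1} = (\pi_{c-1,n-1}, \pi_{c,n-1})$. First I would extract from $\vc{\pi}_n = \vc{\pi}_{n-1}\vc{R}^{(n)}$ the coupled recursion relating $\vc{\pi}'_n$ to $\vc{\pi}'_{n-1}$, writing it as $\vc{\pi}'_n = \vc{\pi}'_{n-1} \vc{B}^{(n)}$, where $\vc{B}^{(n)}$ is the $2\times 2$ matrix formed from the relevant entries of the last two rows of $\vc{R}^{(n)}$, i.e. built from $r_{c-1}^{(0,n)}, r_{c}^{(0,n)}, r_{c-1}^{(1,n)}, r_{c}^{(1,n)}$.

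Next I would apply Lemma~\ref{lem:1} to obtain the scalar inequality
\begin{eqnarray*}
||\vc{\pi}'_{n}||_1 \leq ||\vc{\pi}'_{n-1}||_1 \, ||\vc{B}^{(n)}||_{\infty},
\end{eqnarray*}
and then iterate it down to some fixed index $N$, giving
\begin{eqnarray*}
||\vc{\pi}'_{n}||_1 \leq ||\vc{\pi}'_{N}||_1 \prod_{j=N+1}^{n} ||\vc{B}^{(j)}||_{\infty}.
\end{eqnarray*}
The heart of the argument is therefore to estimate $||\vc{B}^{(j)}||_{\infty}$ accurately as $j\to\infty$. This is exactly where the Taylor expansions from Theorem~\ref{theo:5} (or at least the refined one-term bounds of Lemma~\ref{lem:3-2}) enter. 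Using $r_{c}^{(0,n)} = \theta_0^{(0,0)}/n^0 + O(1/n) = O(1/n)$ (since $\theta_0^{(0,0)}=0$), $r_c^{(1,n)} = \lambda/(c\nu) + O(1/n)$, and the corresponding expansions for the $r_{c-1}$ entries, I would compute the row sums of $\vc{B}^{(j)}$ and show that the dominant row gives
\begin{eqnarray*}
||\vc{B}^{(j)}||_{\infty} = \frac{\lambda}{c\nu}\left(1 + \frac{a}{j} + \frac{\hat b}{j^2} + o\!\left(\frac{1}{j^2}\right)\right),
\end{eqnarray*}
with $a = (c^2\nu + \lambda)/(c\mu)$ emerging as the coefficient of the $1/j$ term. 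Pinning down this constant $a$ correctly is the crux: it requires combining the leading term $\lambda/(c\nu)$ of $r_c^{(1,n)}$ with the $O(1/j)$ corrections coming from both $r_{c-1}^{(1,n)}$ and the subleading term of $r_c^{(1,n)}$, and checking that the type-0 contributions do not inflate the maximum row sum.

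Finally, I would substitute this asymptotic expansion into the product and invoke Lemma~\ref{lem:2} with $\hat a = a$: factoring out $(\lambda/c\nu)^{n-N}$, the residual product $\prod_{j=N+1}^{n}(1 + a/j + \hat b/j^2 + \cdots)$ is $O(n^{a})$, yielding the claimed bound
\begin{eqnarray*}
||\vc{\pi}'_{n}||_1 = O\!\left(n^{a}\left(\tfrac{\lambda}{c\nu}\right)^{n}\right).
\end{eqnarray*}
I expect the main obstacle to be the careful bookkeeping in the previous step: verifying that the error terms beyond $1/j^2$ are summable (so they contribute only a bounded multiplicative constant), confirming which of the two rows of $\vc{B}^{(j)}$ actually achieves the maximum $\infty$-norm for all large $j$, and ensuring the hypothesis $\hat b \neq \hat a m - m^2$ of Lemma~\ref{lem:2} is satisfiable or can be circumvented by a harmless perturbation. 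The denseness of the rate matrices—both $r^{(0,n)}$ and $r^{(1,n)}$ rows being nonzero, in contrast to the single-row case of~\cite{analying_retrial}—is what makes the row-sum estimate genuinely two-dimensional and is the source of the extra difficulty noted in the introduction.
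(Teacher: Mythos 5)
Your proposal follows essentially the same route as the paper's own proof: you form the same $2\times 2$ matrix from the entries $r_{c-1}^{(0,n)}, r_{c}^{(0,n)}, r_{c-1}^{(1,n)}, r_{c}^{(1,n)}$ (the paper calls it $\vc{R}^{(n)'}$), iterate the inequality of Lemma~\ref{lem:1}, expand the dominant (last-row) sum as $\frac{\lambda}{c\nu}\left(1+\frac{a}{n}+\frac{b}{n^2}\right)+O\left(\frac{1}{n^3}\right)$ via the Taylor coefficients of Theorem~\ref{theo:5}, and invoke Lemma~\ref{lem:2} to bound the product by $O\left(n^{a}\left(\frac{\lambda}{c\nu}\right)^{n}\right)$. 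The technical caveats you flag (summability of the higher-order error terms, identifying the maximizing row for large $n$, and the non-degeneracy hypothesis of Lemma~\ref{lem:2}) are exactly the points the paper treats only implicitly, so your plan is, if anything, slightly more careful on those details.
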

%%%%%%%%%%%%%%%%%%%%%%%%%%%%%%%%%%%%%%%%%%%%%%

\begin{proof}
The proof uses Lemmas~\ref{lem:1} and \ref{lem:2}. We define some new notations as follows.
\begin{eqnarray*}
\vc{R}^{(n)'} &=& \left( 
\begin{array}{cc}
r_{c-1}^{(0,n)} & r_{c}^{(0,n)}   \\
r_{c-1}^{(1,n)}  & r_{c}^{(1,n)} \\
\end{array} 
\right),
\end{eqnarray*}
where 
\begin{eqnarray*}
r_{c-1}^{(0,n)} & = & \theta_0^{(0,1)}\frac{1}{n} - \theta_1^{(0,1)}\frac{1}{n^2} + O(\frac{1}{n^3}), \\
r_{c}^{(0,n)}     & =  & \theta_0^{(0,0)} -\theta_1^{(0,0)}\frac{1}{n} + \theta_2^{(0,0)}\frac{1}{n^2} + O(\frac{1}{n^3}),  \\
r_{c-1}^{(1,n)} & =  & \theta_0^{(1,1)}\frac{1}{n} - \theta_1^{(1,1)}\frac{1}{n^2} + O(\frac{1}{n^3}), \\
r_{c}^{(1,n)}    & =  & \theta_0^{(1,0)} -\theta_1^{(1,0)}\frac{1}{n} + \theta_2^{(1,0)}\frac{1}{n^2} + O(\frac{1}{n^3}),
\end{eqnarray*}
and 
\[
	\vc{\pi}'_{n} = (\pi_{c-1,n}, \pi_{c,n}).
\]

It follows from $\vc{\pi}_n = \vc{\pi}_{n-1} \vc{R}^{(n)}$ that $\vc{\pi}_n^\prime = \vc{\pi}_{n-1}^\prime {\vc{R}^{(n)}}^\prime$. Thus, applying Lemma~\ref{lem:1} repeatedly, we obtain
\begin{eqnarray*}
||\vc{\pi}'_{n}||_1
&\leq &  ||\vc{\pi}'_{0}||_{1}  ||\vc{R}^{(1)'}||_{\infty} \dots ||\vc{R}^{(n-1)'}||_{\infty}  ||\vc{R}^{(n)'}||_{\infty}.\ 
\end{eqnarray*}

For sufficiently large $n$, $||\vc{R}^{(n)'}||_{\infty}$ is given by
\begin{eqnarray*}
||\vc{R}^{(n)'}||_{\infty}
&=& |{r}_{c-1}^{(1,n)}| + |{r}_{c}^{(1,n)}| \\
&=& \theta_0^{(1,0)} + (\theta_0^{(1,1)} - \theta_1^{(1,0)})  \frac{1}{n} \\
&  & \mbox{} + ( \theta_2^{(1,0)} - \theta_1^{(1,1)}) \frac{1}{n^2} + O(\frac{1}{n^3}) \\
&=& \theta_0^{(1,0)} \left(  1+ \frac{\theta_0^{(1,1)} - \theta_1^{(1,0)}}{\theta_0^{(1,0)} n}
+\frac{\theta_2^{(1,0)} - \theta_1^{(1,1)}}{\theta_0^{(1,0)} n^2} \right)  \\
&  & \mbox{} + O(\frac{1}{n^3})\\
&=& \frac{\lambda}{c\nu} \left( 1 + \frac{a}{n} + \frac{b}{n^2}\right) 
+ O(\frac{1}{n^3}),
\end{eqnarray*}
where 
\[
	a = \frac{\theta_0^{(1,1)} - \theta_1^{(1,0)}}{\theta_0^{(1,0)}}, \qquad b = \frac{\theta_2^{(1,0)} - \theta_1^{(1,1)}}{\theta_0^{(1,0)}}.
\]

Thus, for parameters that satisfy Lemma~\ref{lem:2}, we have 
\begin{eqnarray*}
 ||\vc{\pi}'_{0}||_{1} \prod_{i=1}^{n} ||\vc{R}^{(i)'}||_{\infty} 
=  O\left( n^{a} \times {\left(\frac{\lambda}{c\nu}\right)}^n\right),\qquad n\to\infty.  
\end{eqnarray*}
implying the desired result.
\end{proof}

\begin{kei}
We have 
\[
\pi_{i,n} = O \left(n^{a-c+i} \times \left(\frac{\lambda }{c\nu}\right)^n\right),\ i=0,1,\dots,c, \quad n\to \infty.
\]
\begin{proof}
From $\vc{\pi}_n = \vc{\pi}_{n-1} \vc{R}^{(n)}$, we have
\[
\pi_{i,n} = \pi_{c-1,n-1} r_i^{(0,n)} + \pi_{c,n-1} r_i^{(1,n)},\qquad i=0,1,\dots,c.
\]
It follows from Theorem~\ref{theo:5} that 
\[
r_i^{(0,n)}=O(\frac{1}{n^{c-i}}),\qquad r_i^{(1,n)}=O(\frac{1}{n^{c-i}}),\qquad n\to\infty. 
\]
Theorem~\ref{theo:4-1} yields
\[
||\vc{\pi}'_{n}||_1 =
O\left( n^{a} \times {\left(\frac{\lambda}{c\nu}\right)}^n\right),\qquad n\to\infty.
\]
Thus, 
\[
\pi_{i,n} = O \left(n^{a-c+i} \times \left(\frac{\lambda }{c\nu}\right)^n\right),\qquad n\to \infty.
\]
\end{proof}
\end{kei}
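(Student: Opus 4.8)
The plan is to read the componentwise bound directly off the recursion $\vc{\pi}_n = \vc{\pi}_{n-1}\vc{R}^{(n)}$, combining two facts already established: the entrywise orders of the rate matrix from Theorem~\ref{theo:5}, and the aggregate bound on the last two components of $\vc{\pi}_n$ from Theorem~\ref{theo:4-1}.

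First I would exploit the sparsity of $\vc{R}^{(n)}$. Since Lemma~\ref{prop:1} forces the first $c-1$ rows of every rate matrix to vanish, expanding $\vc{\pi}_n = \vc{\pi}_{n-1}\vc{R}^{(n)}$ leaves only the contributions of the last two rows, yielding the scalar identity
\[
\pi_{i,n} = \pi_{c-1,n-1}\,r_i^{(0,n)} + \pi_{c,n-1}\,r_i^{(1,n)},\qquad i=0,1,\dots,c.
\]
This is the single structural ingredient; everything that follows is order bookkeeping.

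Next I would insert the asymptotic orders. Setting $k=c-i$ in the expansion of Theorem~\ref{theo:5} gives $r_i^{(0,n)} = O(1/n^{c-i})$ and $r_i^{(1,n)} = O(1/n^{c-i})$ as $n\to\infty$. Simultaneously, Theorem~\ref{theo:4-1} yields $||\vc{\pi}'_{n-1}||_1 = O(n^a (\lambda/c\nu)^n)$, and since the $\ell_1$-norm dominates each of its entries, both $|\pi_{c-1,n-1}|$ and $|\pi_{c,n-1}|$ inherit this bound. Multiplying the two orders and summing the two terms of the scalar identity then gives $\pi_{i,n} = O(n^{a-(c-i)}(\lambda/c\nu)^n) = O(n^{a-c+i}(\lambda/c\nu)^n)$, which is exactly the assertion.

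The only subtlety---and it is minor---is the index shift: Theorem~\ref{theo:4-1} is naturally stated at level $n$, whereas the recursion evaluates $\vc{\pi}'$ at level $n-1$. This causes no difficulty, because $(n-1)^a \sim n^a$ and $(\lambda/c\nu)^{n-1}$ differs from $(\lambda/c\nu)^n$ only by the constant factor $c\nu/\lambda$; both are harmlessly absorbed into the $O(\cdot)$ notation, leaving the leading growth rate unchanged. I do not anticipate any genuine obstacle here, since the corollary is a direct reading of the two theorems it invokes.
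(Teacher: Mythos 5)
Your proposal is correct and follows essentially the same route as the paper's own proof: the scalar identity $\pi_{i,n} = \pi_{c-1,n-1} r_i^{(0,n)} + \pi_{c,n-1} r_i^{(1,n)}$ from the zero structure of $\vc{R}^{(n)}$, the entrywise orders from Theorem~\ref{theo:5}, and the bound on $\|\vc{\pi}'_{n}\|_1$ from Theorem~\ref{theo:4-1}. Your explicit treatment of the index shift from $n$ to $n-1$ (absorbed into the $O(\cdot)$) is a small detail the paper leaves implicit, but it changes nothing substantive.
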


\begin{remark}
In~\cite{analying_retrial}, only the last row of the rate matrices is nonzero. This fact allows us to evaluate the 
tail probability using the product of a sequence of scalars. However, since the last two rows of the rate matrices are nonzero in our model,  
we need to deal with the product of a sequence of matrices.  Thus, in order to apply the technique given in~\cite{analying_retrial}, i.e., Lemma~\ref{lem:2}, we need 
to use Lemma~\ref{lem:1}.   
\end{remark}

%%%%%%%%%%%%%%%%%%%%%%%%%%%%%%%%%%%%%%%%%%%%%

\section{Numerical algorithm}

In this section, we propose a computational algorithm for the stationary distribution of our model 
extending that proposed by~Phung-Duc et al.~\cite{sutikeisan} for the fundamental M/M/$c$/$c$ retrial queues without guard channels.  
In Section~\ref{subsec:5.1}, we show some results which are the basis for the algorithm. 
Section~\ref{sec:5.2} presents algorithms for the rate matrices and the stationary distribution. 
Section~\ref{sec:N0} proposes a simple method for determining the truncation point used in an algorithm in Section~\ref{sec:5.2}. 
Section~\ref{sec:5.4} derives some performance measures such as the blocking probability for fresh calls and that for handover and retrial calls. 

%%%%%%%%%%%%%%%%%%%%%%%%%%%%%%%%%%%%%%%%%%

\subsection{Efficient computation} \label{subsec:5.1}
Due to Lemma~\ref{prop:1}, we need to compute $k$ inverse matrices in order to obtain $\vc{R}_{k}^{(n)}$. 
It may take a long time when the number of servers is large. 
Thus, instead of computing the inverse matrices, we propose a new method exploiting the fact that only the last two rows are nonzero. 
The computational complexity of our new method is only $O(c)$. In particular, the computational complexity in all the theorems and lemmas below are $O(c)$. 

It should be noted that the computation of $\vc{R}^{(n)}$ and $\vc{R}_{k}^{(n)}$ is equivalent to that of their 
last two rows $\vc{r}^{(n)}$ and $\vc{r}_k^{(n)}$, i.e.,
%Furthermore, let $\vc{r}_{k}^{(0,n)}$ and $\vc{r}_{k}^{(1,n)}$ denote the first and the second row of $\vc{r}_{k}^{(n)}$, i.e., 
%
\begin{eqnarray}
\vc{r}^{(n)} = 
\left( 
\begin{array}{c}
\vc{r}_{}^{(0,n)}\\
\vc{r}_{}^{(1,n)}\\
\end{array} 
\right)
,\qquad 
%  \label{eq:5.0} \\ 
\vc{r}_k^{(n)}  
 = 
\left( 
\begin{array}{c}
\vc{r}_{k}^{(0,n)}\\
\vc{r}_{k}^{(1,n)}\\
\end{array} 
\right), \label{eq:5.1}
\end{eqnarray}
where $\vc{r}_{}^{(i,n)}$ and $\vc{r}_{k}^{(i,n)}$ ($i=0,1$) are vectors of $c+1$ elements.
 
\begin{defi} \label{defi:5.1}{\rm
We define the function $r_{n}$ as follows. Let
\vc{X}(\vc{x},\ \vc{y}) =
$\left( 
\begin{array}{c}
\vc{O}\\
\vc{x} \\
\vc{y} \\
\end{array} 
\right)$ and 
{\rm Lr (\vc{Y})}=$\left( 
\begin{array}{c}
\vc{y}_0 \\
\vc{y}_1 \\
\end{array} 
\right)$,\
($\vc{y}_0 ,\ \vc{y}_1$ are the second last and the last rows of \vc{Y})
and
\begin{eqnarray*}
r_n
\left( 
\begin{array}{c}
\vc{x} \\
\vc{y} \\
\end{array} 
\right) =
{\rm Lr}(R_n(\vc{X}(\vc{x},\ 
\vc{y} ))).\ 
\end{eqnarray*} 
where $\vc{x}$ and $\vc{y}$ are vectors with an appropriate dimension. 

}\end{defi}
It is easy to see that $\vc{r}^{(n)}$ and $\vc{r}^{(n)}_{k}$ satisfies the following equations. 
\begin{eqnarray*}
\vc{r}^{(n)} & = & r_n(\vc{r}^{(n+1)}), \\
\vc{r}_k^{(n)} & = & r_n( \vc{r}_{k-1}^{(n+1)} )= r_n \circ r_{n+1} \circ \dots \circ r_{n+k-1}
\left( \vc{O}
\right), 
\end{eqnarray*}
for $n,k \in \mathbb{N}$.
%
%Thus, the computation of function $r_n (\cdot) $ is important.   
Lemmas~\ref{lem:5-1} and~~\ref{lem:5-2} compute $\vc{r}_{k}^{(0,n)}$ and $\vc{r}_{k}^{(1,n)}$ using $\vc{r}_{k-1}^{(n+1)}$, respectively. 
Furthermore, Lemma~\ref{lem:5-3} computes the stationary distribution of the censored Markov chain on level 0 using $\vc{r}^{(1)}$. 

\begin{lem}\label{lem:5-1}
{\rm
For arbitrary $n,k$, we have 
\begin{eqnarray*}
r_{k,i}^{(0,n)} = \alpha_{i} + \beta_{i} r_{k,c}^{(0,n)},\qquad i=0,1,\dots,c-1, 
\end{eqnarray*} 
where $\{ \alpha_{i} ,\beta_{i} ; i=0,1,\dots,c\}$ and $ r_{k,c}^{(0,n)}$ are given as follows.   
\begin{eqnarray*}
\alpha_c &=&0,\ \beta_c =1, \\ 
\alpha_{c-1} & = & 0, \quad \beta_{c-1}= - \frac{b_c^{(n)} + (n+1)\mu r_{k-1,c-1}^{(1,n+1)}}{\lambda_1 +(n+1)\mu r_{k-1,c-1}^{(0,n+1)}},\  \\
\alpha_{c-2} & = & - \frac{\lambda_2}{\lambda} ,\ \\
\beta_{c-2} &=& - \frac{b_{c-1}^{(n)}\beta_{c-1} +c\nu}{\lambda}  \\ 
                 & & \mbox{} -  \frac{(n+1)\mu r_{k-1,c-2}^{(0,n+1)}\beta_{c-1} +(n+1)\mu r_{k-1,c-2}^{(1,n+1)}}{\lambda}, \\
\alpha_{i-1} &=&- \frac{b_i^{(n)}\alpha_i +(i+1)\nu\alpha_{i+1}}{\lambda},  \\
                 &  & i=c-2,c-3,\dots,1, \\
\beta_{i-1}  &=&- \frac{b_i^{(n)}\beta_i + (i+1)\nu\beta_{i+1}}{\lambda} \\
                 &  & \mbox{} - \frac{(n+1)\mu r_{k-1,i-1}^{(0,n+1)} \beta_{c-1} +(n+1) \mu r_{k-1,i-1}^{(1,n+1)}}{\lambda}, \\
               &  & i = c-2, c-3,\dots,1,
\end{eqnarray*} 
and 
\[
r_{k,c}^{(0,n)} =- \frac{b_0^{(n)}\alpha_0+\nu\alpha_1}{b_0^{(n)}\beta_0+\nu\beta_1}.
\]}

\end{lem}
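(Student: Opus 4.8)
The plan is to read off the equations satisfied by $\vc{r}_{k}^{(0,n)}$ from the defining relation of the iterate and then solve them by structured back-substitution. By Lemma~\ref{prop:1}, the relation $\vc{R}_k^{(n)} = R_n(\vc{R}_{k-1}^{(n+1)})$ is equivalent to
\[
\vc{Q}_0^{(n-1)} + \vc{R}_k^{(n)}\vc{Q}_1^{(n)} + \vc{R}_k^{(n)}\vc{R}_{k-1}^{(n+1)}\vc{Q}_2^{(n+1)} = \vc{O}.
\]
Since the first $c-1$ rows of every iterate vanish, only the last two rows carry information. Comparing the $c$-th row (that is, the row $\vc{r}_k^{(0,n)}$) on both sides, column by column, reproduces exactly the system (\ref{eq:r00})--(\ref{eq:r03}) with $\vc{r}^{(0,n)}$ replaced by $\vc{r}_k^{(0,n)}$ and the coupling term $\tilde{r}_i^{(0,n)}$ replaced by
\[
\tilde{r}_{k,i}^{(0,n)} = (n+1)\mu\bigl(r_{k,c-1}^{(0,n)} r_{k-1,i-1}^{(0,n+1)} + r_{k,c}^{(0,n)} r_{k-1,i-1}^{(1,n+1)}\bigr),
\]
where the quantities $r_{k-1,\cdot}^{(\cdot,n+1)}$ are known from the previous iterate. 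The crucial structural observation is that, among the $c+1$ unknowns $r_{k,0}^{(0,n)},\dots,r_{k,c}^{(0,n)}$, the coupling terms $\tilde{r}_{k,i}^{(0,n)}$ involve \emph{only} $r_{k,c-1}^{(0,n)}$ and $r_{k,c}^{(0,n)}$.

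Next I would solve this system by back-substitution, starting from the last column. The column-$c$ equation (the analogue of (\ref{eq:r03})) carries no constant term, so collecting the coefficients of $r_{k,c-1}^{(0,n)}$ and $r_{k,c}^{(0,n)}$ yields $r_{k,c-1}^{(0,n)} = \beta_{c-1}\, r_{k,c}^{(0,n)}$ with $\alpha_{c-1}=0$ and $\beta_{c-1}$ as stated. This is the key step: because $r_{k,c-1}^{(0,n)}$ is \emph{purely proportional} to $r_{k,c}^{(0,n)}$, substituting it into each $\tilde{r}_{k,i}^{(0,n)}$ turns every coupling term into a function that is linear in the single unknown $r_{k,c}^{(0,n)}$ with no constant part. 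I would then propagate downward: the column-$(c-1)$ equation (analogue of (\ref{eq:r02})), which contributes the constant $-\lambda_2$, gives $\alpha_{c-2}=-\lambda_2/\lambda$ and the stated $\beta_{c-2}$; and the generic columns $i=c-2,\dots,1$ (analogue of (\ref{eq:r01})) give the one-step recursion for $(\alpha_{i-1},\beta_{i-1})$ after inserting $r_{k,i}^{(0,n)}=\alpha_i+\beta_i r_{k,c}^{(0,n)}$, $r_{k,i+1}^{(0,n)}=\alpha_{i+1}+\beta_{i+1} r_{k,c}^{(0,n)}$ and $r_{k,c-1}^{(0,n)}=\beta_{c-1} r_{k,c}^{(0,n)}$ and separating constant and linear parts. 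A finite downward induction on $i$ then establishes $r_{k,i}^{(0,n)}=\alpha_i+\beta_i r_{k,c}^{(0,n)}$ for every $i=0,1,\dots,c-1$.

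Finally, the free parameter $r_{k,c}^{(0,n)}$ is fixed by the one equation not yet used, the column-$0$ balance (\ref{eq:r00}), namely $b_0^{(n)} r_{k,0}^{(0,n)} + \nu r_{k,1}^{(0,n)} = 0$. Substituting the affine expressions for $r_{k,0}^{(0,n)}$ and $r_{k,1}^{(0,n)}$ and solving the resulting scalar linear equation gives $r_{k,c}^{(0,n)} = -(b_0^{(n)}\alpha_0+\nu\alpha_1)/(b_0^{(n)}\beta_0+\nu\beta_1)$, as claimed.

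I expect the main difficulty to be book-keeping rather than conceptual. The one point that genuinely requires attention is the role of the identity $\alpha_{c-1}=0$: it is exactly what keeps $r_{k,c-1}^{(0,n)}$ \emph{strictly proportional} to $r_{k,c}^{(0,n)}$ (rather than merely affine in it), and this is what prevents the coupling terms $\tilde{r}_{k,i}^{(0,n)}$ from injecting extra constants into the $\alpha$-recursion, so that the $\alpha_{i-1}$ recursion depends only on $\alpha_i,\alpha_{i+1}$ while all $\tilde{r}$-contributions flow into the $\beta_{i-1}$ recursion. One must also check that the denominators are nonzero: $\lambda_1+(n+1)\mu r_{k-1,c-1}^{(0,n+1)}\ge\lambda_1>0$ by the nonnegativity of the iterates from Lemma~\ref{prop:2}, while $b_0^{(n)}\beta_0+\nu\beta_1\neq 0$ follows from the nonsingularity of $\vc{Q}_1^{(n)}+\vc{R}_{k-1}^{(n+1)}\vc{Q}_2^{(n+1)}$ that underlies the definition of $R_n$ in Lemma~\ref{prop:1}, which guarantees that the linear system for $\vc{r}_k^{(0,n)}$ has a unique solution and that the back-substitution terminates with well-defined quantities.
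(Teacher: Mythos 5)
Your proposal is correct and follows essentially the same route as the paper's proof in Appendix~\ref{app:D}: both extract the linear system for the row $\vc{r}_k^{(0,n)}$ from $\vc{R}_k^{(n)}\bigl(\vc{Q}_1^{(n)}+\vc{R}_{k-1}^{(n+1)}\vc{Q}_2^{(n+1)}\bigr)=-\vc{Q}_0^{(n-1)}$, exploit that the coupling terms involve only $x_{c-1}$ and $x_c$, posit the affine form $x_i=\alpha_i+\beta_i x_c$ solved by back-substitution from column $c$ down, and finally fix $x_c$ from the column-$0$ equation. Your added remarks on $\alpha_{c-1}=0$ forcing strict proportionality and on the nonvanishing of the denominators correspond to the paper's appeal to ${\rm rank}(\vc{U}_k^{(n)})=c+1$ and are a welcome, if minor, elaboration.
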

%%%%%%%%%%%%%%%%%%%%%%%%%%%%%%%%%%%%%%%%
\begin{proof}
The technical details are provided in Appendix~\ref{app:D}
\end{proof}

%%%%%%%%%%%%%%%%%%%%%%%%%%%%%%%%%%%%%%%%%%%
\begin{lem} \label{lem:5-2}
For arbitrary $n$ and $k$, we have 
\begin{eqnarray*}
r_{k,i}^{(1,n)} = \alpha_{i} + \beta_{i} r_{k,c}^{(1,n)} ,\qquad i=0,1,\dots,c-1, 
\end{eqnarray*} 
where $\{ \alpha_{i} ,\beta_{i} ; i=0,1,\dots,c\}$ and $ r_{k,c}^{(1,n)}$ are given as follows. 
\begin{eqnarray*}
\alpha_c &=&0,\ \beta_c =1, \\
 \alpha_{c-1} & = &  - \frac{\lambda}{\lambda_1 +(n+1)\mu r_{k-1,c-1}^{(0,n+1)}}, \\
\beta_{c-1} &= & - \frac{b_c^{(n)} + (n+1)\mu r_{k-1,c-1}^{(1,n+1)}}{\lambda_1 +(n+1)\mu r_{k-1,c-1}^{(0,n+1)}}, \\
\alpha_{i-1} &=&- \frac{b_i^{(n)}\alpha_i +(i+1)\nu\alpha_{i+1}}{\lambda}  \\
                 &  & \mbox{} - \frac{(n+1)\mu r_{k-1,i-1}^{(0,n+1)}\alpha_{c-1}}{\lambda}, \\ 
                 &  & i=c-1,c-2,\dots,1, \\
\beta_{i-1} &=& - \frac{b_i^{(n)}\beta_i +(i+1)\nu\beta_{i+1}}{\lambda} \\
                &  & \mbox{} - \frac{(n+1)\mu r_{k-1,i-1}^{(0,n+1)}\beta_{c-1} +(n+1)\mu r_{k-1,i-1}^{(1,n+1)}}{\lambda}, \\ 
               &  & i=c-1,c-2,\dots,1.
\end{eqnarray*} 
Furthermore, 
\[
	r_{k,c}^{(1,n)} =  - \frac{b_0^{(n)}\alpha_0+\nu\alpha_1}{b_0^{(n)}\beta_0+\nu\beta_1}.
\]
\end{lem}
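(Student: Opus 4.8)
The plan is to reproduce, for the last row of the iterate $\vc{R}_k^{(n)}$, the same backward-elimination argument that proves Lemma~\ref{lem:5-1}. The starting point is the relation $\vc{r}_k^{(n)} = r_n(\vc{r}_{k-1}^{(n+1)})$, which is equivalent to $\vc{Q}_0^{(n-1)} + \vc{R}_k^{(n)}\vc{Q}_1^{(n)} + \vc{R}_k^{(n)}\vc{R}_{k-1}^{(n+1)}\vc{Q}_2^{(n+1)} = \vc{O}$. Comparing the $(c+1)$-th rows of both sides yields the iterate analogues of (\ref{eq:r10})--(\ref{eq:r12}): the bottom equation $b_0^{(n)}r_{k,0}^{(1,n)}+\nu r_{k,1}^{(1,n)}=0$, the interior equations $\lambda r_{k,i-1}^{(1,n)}+b_i^{(n)}r_{k,i}^{(1,n)}+(i+1)\nu r_{k,i+1}^{(1,n)}+\tilde r_{k,i}^{(1,n)}=0$ for $i=1,\dots,c-1$, and the top equation $\lambda_1 r_{k,c-1}^{(1,n)}+b_c^{(n)}r_{k,c}^{(1,n)}+\tilde r_{k,c}^{(1,n)}=-\lambda$, where the coupling term is $\tilde r_{k,i}^{(1,n)}=(n+1)\mu\bigl(r_{k,c-1}^{(1,n)}r_{k-1,i-1}^{(0,n+1)}+r_{k,c}^{(1,n)}r_{k-1,i-1}^{(1,n+1)}\bigr)$.

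First I would postulate the affine form $r_{k,i}^{(1,n)}=\alpha_i+\beta_i r_{k,c}^{(1,n)}$, with base case $\alpha_c=0,\ \beta_c=1$ holding trivially. Next I would solve the top equation for $r_{k,c-1}^{(1,n)}$: grouping the two terms that carry $r_{k,c-1}^{(1,n)}$ (the explicit $\lambda_1 r_{k,c-1}^{(1,n)}$ and the $(n+1)\mu r_{k,c-1}^{(1,n)}r_{k-1,c-1}^{(0,n+1)}$ hidden inside $\tilde r_{k,c}^{(1,n)}$) gives $r_{k,c-1}^{(1,n)}=\bigl(-\lambda-[b_c^{(n)}+(n+1)\mu r_{k-1,c-1}^{(1,n+1)}]r_{k,c}^{(1,n)}\bigr)\big/\bigl(\lambda_1+(n+1)\mu r_{k-1,c-1}^{(0,n+1)}\bigr)$, from which the announced $\alpha_{c-1}$ and $\beta_{c-1}$ are read off. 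The crucial structural observation is that, once $r_{k,c-1}^{(1,n)}$ is known as an affine function of $r_{k,c}^{(1,n)}$, every coupling term $\tilde r_{k,i}^{(1,n)}$ also becomes affine in $r_{k,c}^{(1,n)}$, namely $\tilde r_{k,i}^{(1,n)}=(n+1)\mu\alpha_{c-1}r_{k-1,i-1}^{(0,n+1)}+(n+1)\mu\bigl(\beta_{c-1}r_{k-1,i-1}^{(0,n+1)}+r_{k-1,i-1}^{(1,n+1)}\bigr)r_{k,c}^{(1,n)}$.

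I would then run a downward induction on the interior equations for $i=c-1,c-2,\dots,1$: assuming $r_{k,i}^{(1,n)}$ and $r_{k,i+1}^{(1,n)}$ are affine in $r_{k,c}^{(1,n)}$, solving the $i$-th equation for $r_{k,i-1}^{(1,n)}$ and separately matching the constant part and the part proportional to $r_{k,c}^{(1,n)}$ produces exactly the stated recursions for $\alpha_{i-1}$ and $\beta_{i-1}$. Finally, substituting $r_{k,0}^{(1,n)}=\alpha_0+\beta_0 r_{k,c}^{(1,n)}$ and $r_{k,1}^{(1,n)}=\alpha_1+\beta_1 r_{k,c}^{(1,n)}$ into the bottom equation $b_0^{(n)}r_{k,0}^{(1,n)}+\nu r_{k,1}^{(1,n)}=0$ closes the system and yields $r_{k,c}^{(1,n)}=-(b_0^{(n)}\alpha_0+\nu\alpha_1)/(b_0^{(n)}\beta_0+\nu\beta_1)$, completing the proof.

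The main point requiring care — and the only genuine difference from Lemma~\ref{lem:5-1} — is that here the top equation carries the inhomogeneous term $-\lambda$ rather than $0$, so $\alpha_{c-1}$ is now nonzero and must be carried through the entire backward sweep. This is precisely why the $\alpha_{i-1}$ recursion in this lemma contains the extra term $-(n+1)\mu r_{k-1,i-1}^{(0,n+1)}\alpha_{c-1}/\lambda$ that is absent in Lemma~\ref{lem:5-1}, and it also removes the special treatment at index $c-2$ that appeared there. The remaining verification is that the denominators used in the elimination do not vanish: $\lambda_1+(n+1)\mu r_{k-1,c-1}^{(0,n+1)}$ is strictly positive since $\lambda_1>0$ and the iterates are nonnegative, and the final denominator $b_0^{(n)}\beta_0+\nu\beta_1$ is nonzero because $\vc{Q}_1^{(n)}+\vc{R}_{k-1}^{(n+1)}\vc{Q}_2^{(n+1)}$ is nonsingular by the definition of $R_n$ in Lemma~\ref{prop:1}, which guarantees that $\vc{r}_k^{(1,n)}$ is uniquely determined and hence that the elimination is legitimate.
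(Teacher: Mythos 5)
Your proof is correct and follows essentially the same route as the paper: the paper proves Lemma~\ref{lem:5-2} by invoking "the same technique as in Lemma~\ref{lem:5-1}," whose proof (Appendix~\ref{app:D}) is exactly your argument — write out $\vc{R}_k^{(n)}\bigl(\vc{Q}_1^{(n)}+\vc{R}_{k-1}^{(n+1)}\vc{Q}_2^{(n+1)}\bigr)=-\vc{Q}_0^{(n-1)}$ elementwise for the relevant row, posit the affine form $\alpha_i+\beta_i x_c$, and eliminate backwards, closing the system with the $i=0$ equation. Your observation about the only structural difference (the inhomogeneity $-\lambda$ sitting in the $i=c$ equation here versus $-\lambda_2$ in the $i=c-1$ equation there, which is why $\alpha_{c-1}\neq 0$ and propagates through the sweep) correctly identifies what changes between the two lemmas, and your nonsingularity remark matches the paper's rank$(\vc{U}_k^{(n)})=c+1$ justification.
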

\begin{proof}
This lemma can be proved using the same technique as in Lemma~\ref{lem:5-1}.
\end{proof}

%%%%%%%%%%%%%%%%%%%%%%%%%%%%%%%%%%%%%%%%%%%%%
\begin{lem}\label{lem:5-3}
The solution $\vc{x}_0 = (x_0,\ x_1,\ \dots,x_c)$ for 
\begin{eqnarray*}
\vc{x}_{0} \left( \vc{Q}_{1}^{(0)}+\vc{R}^{(1)} \vc{Q}_{2}^{(1)} \right) &=& \vc{0},\qquad \vc{x}_{0}\vc{e} =1,\ 
\end{eqnarray*} 
is given by $x_{i}= \beta_{i} x_{c}\ (i=0,1,\dots,c)$, where 
$\{\beta_{i};i=0,1,\dots,c-1,c \}$ is recursively defined as   
\begin{eqnarray*}
\beta_c &=&1,\qquad 
\beta_{c-1}=\frac{\lambda+c\nu-\mu r_{c-1}^{(1,1)}}{\lambda_{1}+\mu r_{c-1}^{(0,1)}} ,\\
\beta_{i-1} &=&\frac{(\lambda+i\nu)\beta_{i}-(i+1)\nu \beta_{i+1}-\mu(r_{i-1}^{(0,1)}\beta_{c-1}+r_{i-1}^{(1,1)})}{\lambda}\\ 
                &  &  i=c-1,c-2,\dots,
\end{eqnarray*} 
and then 
\[
	{x}_{c} = \frac{1}{\beta_0 + \beta_1 + \beta_2 + \dots + \beta_c}. 
\]
\end{lem}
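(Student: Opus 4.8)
The plan is to expand the matrix equation $\vc{x}_0\vc{M}=\vc{0}$, with $\vc{M}=\vc{Q}_1^{(0)}+\vc{R}^{(1)}\vc{Q}_2^{(1)}$, into its $c+1$ scalar column equations and to read the stated recursion off them directly. First I would record the structure of the two summands of $\vc{M}$. Putting $n=0$ gives $b_i^{(0)}=-(\lambda+i\nu)$, so $\vc{Q}_1^{(0)}$ is tridiagonal with diagonal $b_i^{(0)}$, subdiagonal $(\vc{Q}_1^{(0)})_{i,i-1}=i\nu$, and superdiagonal equal to $\lambda$ except for the single entry $(\vc{Q}_1^{(0)})_{c-1,c}=\lambda_1$. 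By Lemma~\ref{prop:1}, only the two rows $\vc{r}^{(0,1)}$ and $\vc{r}^{(1,1)}$ of $\vc{R}^{(1)}$ are nonzero, and $\vc{Q}_2^{(1)}$ equals $\mu$ times the one-step upper shift; hence $\vc{R}^{(1)}\vc{Q}_2^{(1)}$ is nonzero only in rows $c-1$ and $c$, with $(\vc{R}^{(1)}\vc{Q}_2^{(1)})_{c-1,j}=\mu r_{j-1}^{(0,1)}$ and $(\vc{R}^{(1)}\vc{Q}_2^{(1)})_{c,j}=\mu r_{j-1}^{(1,1)}$ for $j=1,\dots,c$.

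With this, the $j$-th column equation of $\vc{x}_0\vc{M}=\vc{0}$ becomes, for $j=1,\dots,c-1$,
\[
\lambda x_{j-1}-(\lambda+j\nu)x_j+(j+1)\nu x_{j+1}+\mu\left(r_{j-1}^{(0,1)}x_{c-1}+r_{j-1}^{(1,1)}x_c\right)=0,
\]
whereas for $j=c$ the term $(j+1)\nu x_{j+1}$ is absent and $\lambda x_{j-1}$ is replaced by $\lambda_1 x_{c-1}$, and for $j=0$ the equation is simply $-\lambda x_0+\nu x_1=0$. The point to emphasize is that, although rows $c-1$ and $c$ of $\vc{M}$ are dense, the dense part enters every column equation only through the fixed combination $\mu(r_{j-1}^{(0,1)}x_{c-1}+r_{j-1}^{(1,1)}x_c)$, i.e. only through the two unknowns $x_{c-1}$ and $x_c$.

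I would then solve the system top-down. Writing $\beta_i=x_i/x_c$ with $\beta_c=1$, the $j=c$ equation rearranges to $\beta_{c-1}=(\lambda+c\nu-\mu r_{c-1}^{(1,1)})/(\lambda_1+\mu r_{c-1}^{(0,1)})$, matching the lemma. For $j=c-1,c-2,\dots,1$ I would solve the displayed equation for $x_{j-1}$, divide by $x_c$, and substitute $x_{c-1}=\beta_{c-1}x_c$; this produces exactly $\beta_{i-1}=[(\lambda+i\nu)\beta_i-(i+1)\nu\beta_{i+1}-\mu(r_{i-1}^{(0,1)}\beta_{c-1}+r_{i-1}^{(1,1)})]/\lambda$. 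It remains to check that the $j=0$ equation imposes nothing new: by Lemma~\ref{lem:2-3} with $n=1$, together with $\vc{Q}_2^{(0)}=\vc{O}$, we have $\vc{M}\vc{e}=\vc{0}$, so the $c+1$ column equations sum to the zero equation and the $j=0$ equation holds automatically once columns $1,\dots,c$ do. Since $\vc{M}$ is the (irreducible) generator of the censored chain on level $0$, its left null space is one-dimensional, so the vector just constructed is, up to scaling, the unique solution; the normalization $\vc{x}_0\vc{e}=1$ then reads $x_c\sum_{i=0}^c\beta_i=1$, giving $x_c=1/(\beta_0+\dots+\beta_c)$.

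The argument is essentially routine bookkeeping once the column equations are written down. The only places demanding care are the index handling of the dense fill-in (verifying it contributes precisely $\mu(r_{j-1}^{(0,1)}x_{c-1}+r_{j-1}^{(1,1)}x_c)$ to column $j\ge 1$ and nothing to column $0$) and the special superdiagonal entry $\lambda_1$ at position $(c-1,c)$, which is what produces the asymmetric denominator in $\beta_{c-1}$. The one genuinely conceptual step, as opposed to a calculation, is recognizing the redundancy of the column-$0$ equation through $\vc{M}\vc{e}=\vc{0}$, which both licenses dropping one equation and, via one-dimensionality of the null space, guarantees that the top-down recursion recovers the full solution.
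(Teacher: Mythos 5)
Your proof is correct and is essentially the argument the paper intends: the paper states this lemma without proof, but your column-by-column expansion of $\vc{x}_0(\vc{Q}_1^{(0)}+\vc{R}^{(1)}\vc{Q}_2^{(1)})=\vc{0}$, with the dense fill-in entering each column $j\geq 1$ only through $\mu(r_{j-1}^{(0,1)}x_{c-1}+r_{j-1}^{(1,1)}x_c)$, is precisely the technique the paper uses in Appendix~\ref{app:D} for Lemma~\ref{lem:5-1} and invokes again for Lemma~\ref{lem:5-2}. Your two supplementary observations --- that the column-$0$ equation is redundant because $(\vc{Q}_1^{(0)}+\vc{R}^{(1)}\vc{Q}_2^{(1)})\vc{e}=\vc{0}$ (Lemma~\ref{lem:2-3} with $n=1$, noting $\vc{Q}_2^{(0)}=\vc{O}$), and that irreducibility of the censored generator makes the normalized left null vector unique --- are correct and fill in the steps the paper leaves implicit.
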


\begin{remark}
$\vc{x}_0$ is proportional to $\vc{\pi}_{0}$.
\end{remark} 

\begin{remark}
Computation of $\vc{r}_{k}^{(0,n)}$ and $\vc{r}_{k}^{(1,n)}$ using Lemmas \ref{lem:5-1} and \ref{lem:5-2} might be numerically unstable due to overflow. Thus, we use recursive formulae in Theorem~\ref{theo:5.3} to obtain a numerically stable scheme. 
\end{remark}

%%%%%%%%%%%%%%%%%%%%%%%%%%%%%%%%%%%%%%%%%%%%%%%%%%%
\begin{theo}\label{theo:5.3}{\rm
Sequence $\{x_i;i=0,1,\dots,c\}$ represents either $\{r_{i,k}^{(0,n)};i=0,1,\dots,c\}$ or $\{r_{i,k}^{(1,n)};i=0,1,\dots,c\}$.   
$\{ x_{i};i=0,1,\dots,c-2 \}$ is calculated in terms of $x_{c-1}$ and $x_c$ as follows. 
\begin{eqnarray*}
x_i = \frac{(i+1)\nu x_{i+1} + D_i}{B_i} ,\qquad i=0,1,\dots,c-2, 
\end{eqnarray*} 
where $\{ B_i,D_i ;i=0,1,\dots,c-2\}$ are given as follows.   
\begin{eqnarray*}
B_0&=&\lambda + n\mu,\qquad D_0 =0,\\
B_i&=&(\lambda +i\nu +n\mu) -\frac{\lambda i \nu}{B_{i-1}},\qquad
 i=1,2,\dots,c-2,\\  
D_i &=& (n+1)\mu(r_{k-1,i-1}^{(0,n+1)} x_{c-1} +r_{k-1,i-1}^{(1,n+1)} x_{c}) +\frac{\lambda D_{i-1}}{B_{i-1}} , \\ 
     &  & i=1,2,\dots,c-2.
\end{eqnarray*}}
In addition, 
\[
B_i >\lambda ,\qquad D_i>0. 
\]
\end{theo}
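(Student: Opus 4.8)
The plan is to derive the recursion directly from the matrix equation satisfied by the $k$-th approximation and then to establish the two inequalities by parallel inductions. Writing $\vc{R}_k^{(n)} = R_n(\vc{R}_{k-1}^{(n+1)})$ from Lemma~\ref{prop:2} in the equivalent form $\vc{Q}_0^{(n-1)} + \vc{R}_k^{(n)}\vc{Q}_1^{(n)} + \vc{R}_k^{(n)}\vc{R}_{k-1}^{(n+1)}\vc{Q}_2^{(n+1)} = \vc{O}$ and comparing the last two rows, I obtain exactly the systems (\ref{eq:r00})--(\ref{eq:r03}) and (\ref{eq:r10})--(\ref{eq:r12}), but with $r_i^{(0,n)}$ (resp.\ $r_i^{(1,n)}$) replaced by $x_i = r_{k,i}^{(0,n)}$ (resp.\ $r_{k,i}^{(1,n)}$) and with the coupling term now reading $\tilde{r}_i = (n+1)\mu(r_{k-1,i-1}^{(0,n+1)} x_{c-1} + r_{k-1,i-1}^{(1,n+1)} x_c)$. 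The crucial observation is that for the index range $i=0,1,\dots,c-2$ the two systems have an identical shape: the equation at $i=0$ reads $b_0^{(n)} x_0 + \nu x_1 = 0$ in both cases, and for $1\le i\le c-2$ both read $\lambda x_{i-1} + b_i^{(n)} x_i + (i+1)\nu x_{i+1} + \tilde{r}_i = 0$ with $b_i^{(n)} = -(\lambda+i\nu+n\mu)$. Hence a single forward elimination expressing $x_i$ through the two boundary unknowns $x_{c-1}, x_c$ covers both the $(0,n)$ and the $(1,n)$ case, which is why the statement can be phrased uniformly.

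I would then prove $x_i = ((i+1)\nu x_{i+1} + D_i)/B_i$ by induction on $i$. For the base case, the $i=0$ equation gives $x_0 = \nu x_1/(\lambda+n\mu)$, i.e.\ $B_0 = \lambda+n\mu$ and $D_0 = 0$. For the inductive step I substitute $x_{i-1} = (i\nu x_i + D_{i-1})/B_{i-1}$ into the $i$-th equation, collect the coefficient of $x_i$, and read off
\[
B_i = (\lambda+i\nu+n\mu) - \frac{\lambda i\nu}{B_{i-1}}, \qquad D_i = \tilde{r}_i + \frac{\lambda D_{i-1}}{B_{i-1}},
\]
which, after inserting the explicit form of $\tilde{r}_i$, is precisely the claimed recursion. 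This is a routine elimination and carries no difficulty.

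The two inequalities follow by further inductions. For $B_i > \lambda$: the base case $B_0 = \lambda+n\mu > \lambda$ holds since $n\ge 1$, and if $B_{i-1} > \lambda$ then $\lambda i\nu/B_{i-1} < i\nu$, so $B_i > (\lambda+i\nu+n\mu) - i\nu = \lambda+n\mu > \lambda$. For $D_i \ge 0$: every approximation entry $r_{k-1,i-1}^{(0,n+1)}, r_{k-1,i-1}^{(1,n+1)}$ is nonnegative, being an entry of the nonnegative matrix $\vc{R}_{k-1}^{(n+1)}$ produced by the standard nonnegative iteration of Lemma~\ref{prop:2}, and $x_{c-1}, x_c \ge 0$ as entries of $\vc{R}_k^{(n)}$; combining this with $D_0 = 0$ and $B_{i-1} > \lambda > 0$ gives $D_i \ge 0$ inductively, with strict positivity as soon as any contributing entry is positive.

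The main obstacle is organizational rather than analytical: one must set up the approximation system correctly, tracking that the coupling term $\tilde{r}_i$ uses the previous iterate $\vc{R}_{k-1}^{(n+1)}$ while the boundary values $x_{c-1}, x_c$ come from the current iterate $\vc{R}_k^{(n)}$, and verify that the first $c-1$ equations genuinely coincide across the $(0,n)$ and $(1,n)$ cases so that one recursion suffices. Once this alignment is in place, both the recursion and the inequalities are immediate inductions.
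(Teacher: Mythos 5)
Your proposal is correct and follows essentially the same route as the paper: you set up the same linear system for the last two rows of $\vc{R}_k^{(n)}$ (the system (\ref{eq:rk00})--(\ref{eq:rk03}) derived in Appendix~\ref{app:D} from $\vc{R}_k^{(n)}\vc{U}_k^{(n)} = -\vc{Q}_0^{(n-1)}$), observe that the equations for $i=0,1,\dots,c-2$ have identical form in the $(0,n)$ and $(1,n)$ cases, and run the identical forward-elimination induction, with the base case $x_0=\nu x_1/(\lambda+n\mu)$ and the substitution of $x_{i-1}=(i\nu x_i+D_{i-1})/B_{i-1}$ into the $i$-th equation. As a bonus you also prove the inequalities $B_i>\lambda$ and the sign condition on $D_i$, which the paper states but never proves; note only that the paper's strict claim $D_i>0$ is actually false in general ($D_0=0$ always, and $\vc{R}_0^{(n+1)}=\vc{O}$ forces every $D_i=0$ when $k=1$), so your more careful conclusion $D_i\ge 0$, with strictness only when some contributing entry is positive, is the accurate one.
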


\begin{proof}
We prove using mathematical induction. Let $\{x_i;i=0,1,\dots,c\}$ denote $\{r_{i,k}^{(0,n)};i=0,1,\dots,c\}$ defined in (\ref{eq:5.1}). 
We have
\begin{eqnarray*}
 x_0 = \frac{\nu}{\lambda + n\mu} x_1. 
\end{eqnarray*}
Thus, $B_0=\lambda + n\mu$ and $D_0 =0$. For $i=1,2,\dots,c-2$, we prove by mathematical induction. 
For $j=1,2,\dots,i-1$, assuming that 
\begin{eqnarray*}
B_j&=&(\lambda +j\nu +n\mu) -\frac{\lambda j \nu}{B_{j-1}},\\
D_j &=& (n+1)\mu(r_{k-1,j-1}^{(0,n+1)} x_{c-1} +r_{k-1,j-1}^{(1,n+1)} x_{c}) +\frac{\lambda D_{j-1}}{B_{j-1}}, 
\end{eqnarray*}
are true, we show that it is also true for $j=i$. Using the assumption of mathematical induction, we have
\begin{eqnarray*}
\lambda  \frac{i \nu x_{i} + D_{i-1}}{B_{i-1}} - (\lambda +i \nu + n \mu) x_i +
(i+1) \nu x_{i+1} + \tilde{x}_{i} =0,
\end{eqnarray*}
where $\tilde{x}_{i} = (n+1)\mu(r_{k-1,i-1}^{(0,n+1)} x_{c-1} +r_{k-1,i-1}^{(1,n+1)} x_{c})$. 
Arranging this formula yields 
\begin{eqnarray*}
x_i &=& \frac{(i+1)\nu x_{i+1} + (\lambda D_{i-1}/B_{i-1} 
+ \tilde{x}_{i}) }{(\lambda + i\nu + n\mu) - \lambda i \nu / B_{i-1}} \\
&=& \frac{(i+1)\nu x_{i+1} + D_{i}}{B_{i}}, 
\end{eqnarray*}
implying that the case $j=i$ is also true. Thus, for any $i=1,2,\dots,c-2$, the desired result is established. 
We can show similar result for $\{r_{i,k}^{(1,n)};i=0,1,\dots,c\}$. 
\end{proof}

\begin{remark}
Using Theorem~\ref{theo:5.3}, we can calculate $r_{k,i}^{(0,n)}$ ($i=0,1,\dots,c-2$) in terms of $r_{k,c-1}^{(0,n)}$ and $r_{k,c}^{(0,n)}$, and $r_{k,i}^{(1,n)}$ ($i=0,1,\dots,c-2$) in terms of $r_{k,c-1}^{(1,n)}$ and $r_{k,c}^{(1,n)}$. 
Furthermore, $r_{k,c-1}^{(0,n)}$, $r_{k,c}^{(0,n)}$, $r_{k,c-1}^{(1,n)}$ and $r_{k,c}^{(1,n)}$ are obtained from Lemmas~\ref{lem:5-1} and \ref{lem:5-2}. 
\end{remark}

%%%%%%%%%%%%%%%%%%%%%%%%%%%%%%%%%%%%%%%

\subsection{Computational algorithm} \label{sec:5.2}
In this section, we present an algorithm for computing the rate matrices and then a procedure for the computation of the stationary distribution. 
Algorithm 1 shows a method for $\vc{r}^{(n)}$ while Algorithm 2 computes an approximation $\widehat{\vc{\pi}}
=(\widehat{\vc{\pi}}_0,\ \widehat{\vc{\pi}}_1,\ \dots,\ \widehat{\vc{\pi}}_{N})$ to the stationary distribution, where $\{ k_l ; l \in \mathbb{Z}_{+} \}$ is an arbitrary increasing sequence and $N$ is the truncation point given in advance. We will discuss how to choose the truncation point in Section~\ref{sec:N0}.
\begin{algorithm}[htbp]
\caption{ Computation of $\vc{r}^{(n)}$}\label{alg:1}
\begin{algorithmic}
\STATE {\bf Input:} $\{\vc{Q}_0^{(n)},\ \vc{Q}_1^{(n)},\ \vc{Q}_2^{(n)},\  k_n; n \in \mathbb{Z}_{+}\}
,\ \epsilon$
\STATE {\bf Output:} $\{ \widehat{\vc{r}}^{(n)} \}$
\STATE $l :=1;$
\STATE Compute $\vc{r}^{(n)}_{k_1}$ and $\vc{r}^{(n)}_{k_0}$ using Lemmas~\ref{lem:5-1}, \ref{lem:5-2} and Theorem~\ref{theo:5.3}. 
\WHILE{$||\vc{r}_{k_l}^{(n)} - \vc{r}_{k_{l-1}}^{(n)}||_{\infty} $ $>$ $\epsilon$}
\STATE $l := l+1;$
\STATE Compute $\vc{r}_{k_l}^{(n)}$ and $\vc{r}_{k_{l-1}}^{(n)}$ using Lemmas~\ref{lem:5-1}, \ref{lem:5-2} and Theorem~\ref{theo:5.3}. 
\STATE $\widehat{\vc{r}}^{(n)} := \vc{r}_{k_l}^{(n)}$;
\ENDWHILE
\end{algorithmic}
\end{algorithm}

\begin{algorithm}[htbp]
\caption{Stationary distribution}\label{alg:2}
\begin{algorithmic}
\STATE {\bf Input:} $  \lambda, \ \mu,\ \nu,\ c,\ \{k_n; n \in \mathbb{Z}_{+} \},\ \epsilon ,\ N$
\STATE {\bf Output:} $\{ \widehat{\vc{\pi}}_{n} ; n =0,1,\dots,N\}$
\STATE Compute $\widehat{\vc{r}}^{(N)}$ using Algorithm 1.\ 
\FOR{$n=1$ to $N-1$}
\STATE $\widehat{\vc{r}}^{(N-n)} := r_{N -n} (\widehat{\vc{r}}^{(N-n+1)})$;
\ENDFOR
\STATE Compute $\vc{x}_{0}$ using Lemma~\ref{lem:5-3}. 
\FOR{$n=1$ to $N$}
\STATE $\vc{x}_n := x_{c-1,n-1} \widehat{\vc{r}}^{(0,n)} + x_{c,n-1} \widehat{\vc{r}}^{(1,n)}$;
\ENDFOR
\FOR{$n=0$ to $N$}
\STATE $\widehat{\vc{\pi}}_{n} := \frac{\vc{x}_n}{\sum_{n=0}^{N} \vc{x}_n \vc{e}}$;
\ENDFOR 
\end{algorithmic}
\end{algorithm}
%%%%%%%%%%%%%%%%%%%%%%%%%%%%%%%%%%%%%%%%%%%%%%

\subsection{Determination of the truncation point $N$} \label{sec:N0}
In Algorithm 2, the truncation point is given in advance and it should be large enough such that the tail probability is sufficiently small, i.e., 
\[
\sum_{n=N+1}^{\infty} \vc{\pi}_n \vc{e} < \epsilon,
\]
where $\epsilon$ is given in advance.

However, since $\vc{\pi}_n$ is not explicitly obtained for general M/M/$c$/$c$ retrial queues, a direct determination of such an $N$ is difficult. 
In this paper, we use the explicit results for an M/M/$1$/$1$ retrial queue to determine this truncation point. In particular, we consider an M/M/1/1 retrial queue with arrival rate $\lambda/c$, retrial rate $\mu$ and service rate $\nu$. 
This queue is stable since $\rho=\lambda/(c\nu)<1$ due to the stability condition of our original model. 

Let $p_{i,n}\ (i=0,1,n\in\mathbb{Z}_{+})$ denote the probability that the number of busy servers is $i$ and the number of customers in the orbit is $n$ in the 
M/M/1/1 retrial queue. It is shown in~\cite{sutikeisan} that
\begin{eqnarray*}
p_{0,n} & = & \frac{\rho^n}{n!}(1-\rho)^{\frac{\lambda}{c\mu}+1} \left( \frac{\lambda}{c\mu} \right)_{n}, \\ 
p_{1,n} & = & \frac{\rho^{n+1}}{n!}(1-\rho)^{\frac{\lambda}{c\mu}+1} \left(1+ \frac{\lambda}{c\mu} \right)_{n},
\end{eqnarray*} 
where $n \in \mathbb{Z}_{+}$ and $( \phi )_n \ ( - \infty < \phi < \infty,\ n \in \mathbb{Z}_{+})$ denotes the Pochhammer symbol defined by 
\begin{eqnarray*}
(\phi)_n = \left \{
\begin{array}{ll}
1, & n =0,\\
\phi (\phi + 1) \dots (\phi + n -1), & n \in \mathbb{N}.
\end{array}
\right.
\end{eqnarray*}

Using this result, we set the truncation point as follows.
\[
N = \inf \{n \mid \sum_{i=0}^{n} (p_{0,i} +p_{1,i}) > 1- \epsilon \},\qquad \epsilon >0.
\]
%
%
%We also expect that 
%\[
%\sum_{n=N+1}^{\infty} \vc{\pi}_n\vc{e} \approx \sum_{n=N+1}^{\infty} (p_{0,n}+p_{1,n}) < \epsilon.
%\]
We verify the accuracy of this choice using numerical results.

%%%%%%%%%%%%%%%%%%%%%%%%%%%%%%%%%%%%%%%%%%%%%%%%%%%%%
\subsection{Blocking probability} \label{sec:5.4}
We derive blocking probabilities as performance measures. In our model, priority (handover) and retrial customers are blocked when all the servers are occupied while non-priority customers are blocked when $c-1$ servers are occupied.  
The blocking probability of low priority customers is given by 
\[	
\pi_{c-1} +\pi_c \ := \sum_{n=0}^{\infty}\pi_{c-1,n} +\sum_{n=0}^{\infty} \pi_{c,n},
\]
and the blocking probability of priority and retrial customers is given by 
\[
\pi_c := \sum_{n=0}^{\infty} \pi_{c,n}.
\]

%%%%%%%%%%%%%%%%%%%%%%%%%%%%%%%%%%%%
\section{Numerical results}
In this section, we show some numerical examples. 
\subsection{Accuracy of Taylor series expansion}
The rate matrix is calculated using Algorithm 1 where 
the $\epsilon$ in Algorithm 1 is set to be sufficiently small and $k_n = 2^n$. 
Thus, we can say that the rate matrix obtained by Algorithm 1 is exact. 

First, we present some numerical examples to show the effectiveness of Taylor series expansion. 
Tables 1 and 2 show numerical results of $\vc{r}^{(n)}$ for $n=100$ and $n=1000$, respectively. 
Other parameters are given by $c=5,\mu=1,\ \nu=1,\ \lambda_2 / \lambda_1 =4$ and $\lambda$ is 
calculated from the traffic intensity $\rho  \ (= \lambda / c \nu)$. We obtain exact value for the 
rate matrices using the matrix continued fraction approach, i.e., Algorithm 1 with enough accuracy. 
The one, two and three term expansions ($m=1,2,3$) are expressed by $\vc{r}^{(n,1)}$, $\vc{r}^{(n,2)}$ and $\vc{r}^{(n,3)}$. 
In these tables, we show the relative errors, i.e., 
$||\vc{r}^{(n,1)} - \vc{r}^{(n)}||_{\infty}/ 
||\vc{r}^{(n)}||_{\infty}$,\ $||\vc{r}^{(n,2)} - \vc{r}^{(n)}||_{\infty}/ \||\vc{r}^{(n)}||_{\infty}$ and $||\vc{r}^{(n,3)} - \vc{r}^{(n)}||_{\infty}/ ||\vc{r}^{(n)}||_{\infty}$. We observe that Taylor series expansion gives a good approximation. 
The relative errors for the case $n=1000$ are smaller than those for the case $n=100$ which agrees with Taylor series expansion formulae. 
We also observe that the relative error increases with the traffic intensity. This suggests that we need more computational effort for the cases of relatively heavy load in comparison with those of relatively light load.

Figures~\ref{fig:r0} and~\ref{fig:r1} represent $r_c^{(0,n)}$ and $r_c^{(1,n)}$ against the number of expansion terms. The parameters are given by
$n=1000,\ c=100,\ \mu=1,\ \nu=1,\ \lambda_2 / \lambda_1 =24$ and $\rho=0.9$. 
We observe that Taylor series expansion converges to the exact value after about $5$ terms.

\begin{table}[htbp]
  \begin{center}
    \caption{Relative error for $\vc{r}^{(n)}$ ($n=100$)}
    \begin{tabular}{|c||c|c|c|} \hline
       ($\rho$) & One term & Two terms & Three terms \\ \hline \hline
 0.1&	0.0051053401&	0.0003425140&	0.0000228094\\ \hline
0.2& 0.0086100661&	0.0006446694&	0.0000491957\\ \hline
0.3&	0.0120849796&	0.0009702635&	0.0000821267\\ \hline
0.4&	0.0155304303&	0.0013188638&	0.0001219509\\ \hline
0.5&	0.0189467632&	0.0016900430&	0.0001690102\\ \hline
0.6&	0.0223343192&	0.0020833798&	0.0002236397\\ \hline
0.7&	0.0256934342&	0.0024984580&	0.0002861679\\ \hline
0.8&	0.0290244403&	0.0029348670&	0.0003569166\\ \hline
0.9&	0.0323276648&	0.0033922015&	0.0004362009\\ \hline
 \end{tabular}
 \end{center}
\label{table:1}
\end{table}

\begin{table}[htbp]
  \begin{center}
    \caption{Relative error for $\vc{r}^{(n)}$ ($n=1000$)}
    \begin{tabular}{|c||c|c|c|} \hline
    ($\rho$) & One term & Two terms & Three terms \\ \hline \hline
0.1&	0.0004109342&	0.0000030754&	0.0000000215\\ \hline
0.2&	0.0008055116&	0.0000063974&	0.0000000500\\ \hline
0.3&	0.0011997010&	0.0000100293&	0.0000000863\\ \hline
0.4&	0.0015935030&	0.0000139704&	0.0000001309\\ \hline
0.5&	0.0019869182&	0.0000182201&	0.0000001843\\ \hline
0.6&	0.0023799470&	0.0000227778&	0.0000002472\\ \hline
0.7&	0.0027725901&	0.0000276429&	0.0000003200\\ \hline
0.8&	0.0031648480&	0.0000328146&	0.0000004033\\ \hline
0.9&	0.0035567214&	0.0000382924&	0.0000004976\\ \hline
 \end{tabular}
 \end{center}
\label{table:2}
\end{table}

\begin{figure}[htbp]
	\begin{center} 
		\includegraphics[width=80mm]{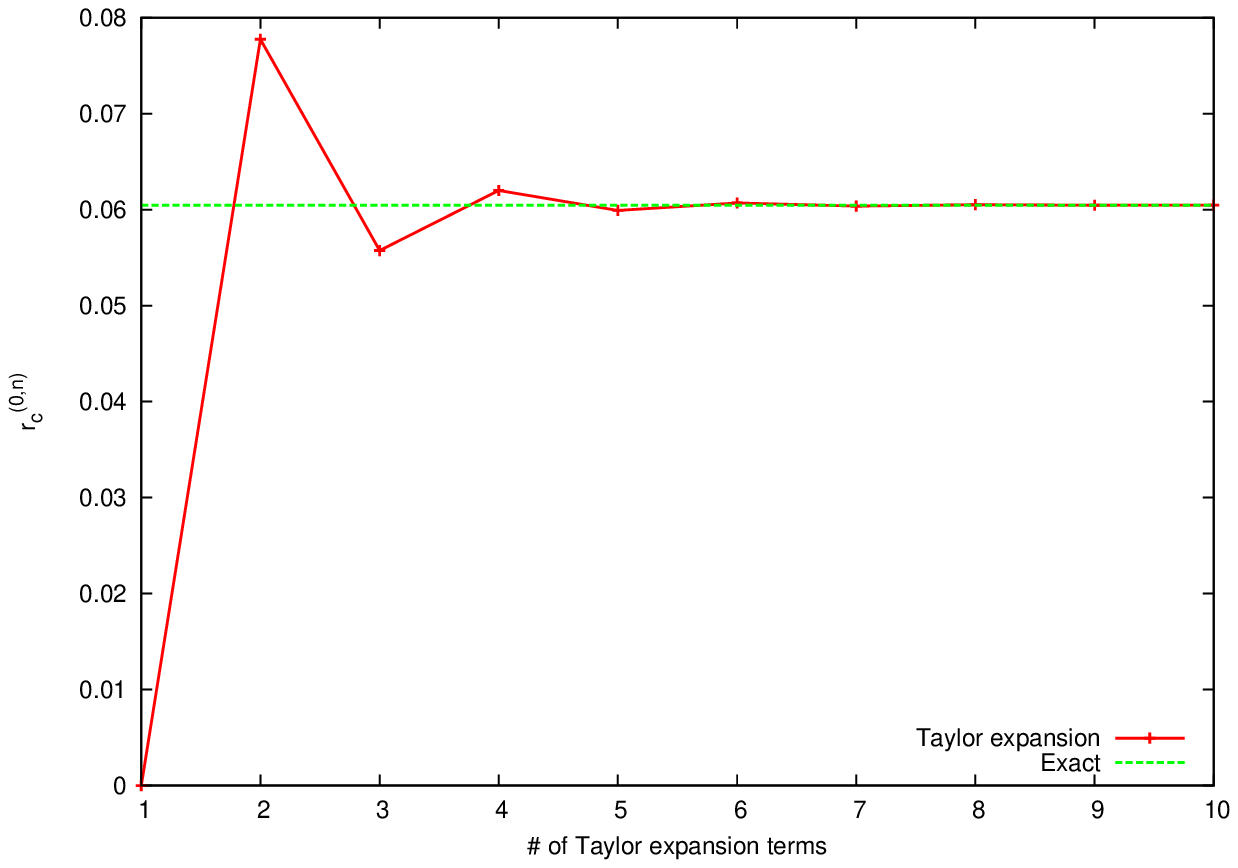}
    \end{center}
  \caption{$r_c^{(0,n)}$ vs. the \# of Taylor expansion terms.}
  \label{fig:r0}
\end{figure}

\begin{figure}[htbp]
   \begin{center}
		\includegraphics[width=80mm]{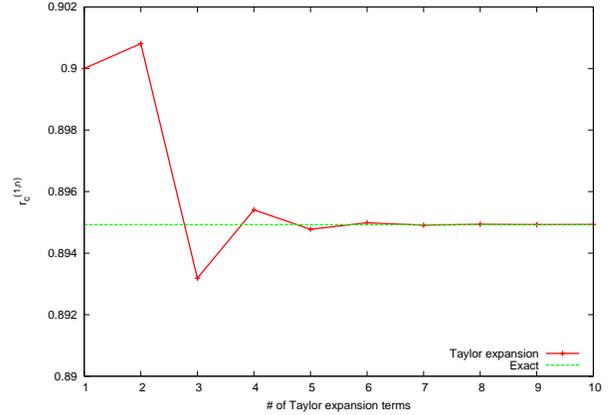}
  \end{center}
  \caption{$r_c^{(1,n)}$ vs. the \# of Taylor expansion terms.}
  \label{fig:r1}
\end{figure}

\subsection{Asymptotic behavior of $\pi_{i,n}/\rho^n$}
Figure~\ref{fig:pi/rho} shows $\pi_{i,n}/\rho^n \ (n \in \mathbb{Z}_+)$ against $n$ for some $i$. Parameters are given by
$c=100,\ N=1000,\ \mu=1,\ \nu=1/70, \lambda_1=1/25$ and $\lambda_2=24/25$. 
We observe that the five curves for $i=100, 75, 50, 25$ and 0 have negative slope. 
This implies that there should exist positive $C_1$, $C_2$ and $b$ such that
\[
C_1 \rho^n n^{-b} \leq \pi_{i,n} \leq C_2 \rho^n n^{-b},\qquad n\to \infty.
\]
Thus, the asymptotic results obtained in this paper can be further refined to be tighter. 
\begin{figure}[htbp]
 \begin{center}
  \includegraphics[width=80mm]{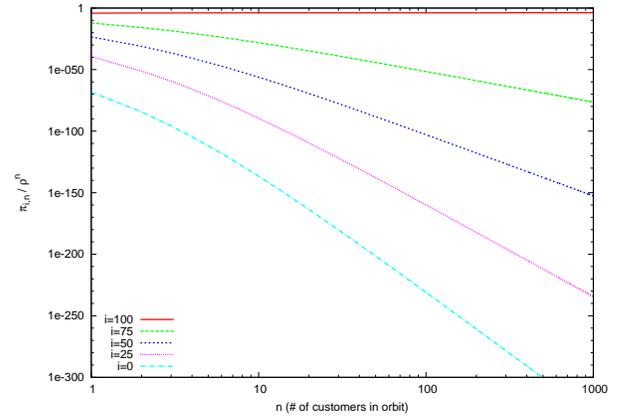}
 \end{center}
 \caption{$\pi_{i,n}/ \rho^n$ vs. the \# of customers in orbit\ ($n$).}
 \label{fig:pi/rho}
\end{figure}

\subsection{Blocking probability vs. number of servers}
We use the following parameters: $c = 100,\ \nu=1,\  \rho=0.7$ and $\lambda_2 / \lambda_1=24$. 
The truncation point $N$ is determined using the method in Section~\ref{sec:N0} for $\epsilon = 10^{-10}$. 
Blocking probabilities are $\pi_c$ and $\pi_{c-1} +\pi_c$  for high and low priority customers, respectively. Figure~\ref{fig:PB vs c} represents the blocking probabilities of two types of customers for three values of $\mu$ (0.1, 1 and 10). 
Obviously, for the same $\mu$, the the blocking probability for low priority customers is higher than that of high priority customers. 
Furthermore, the blocking probabilities increase with $\mu$ since customers who retry in a short interval may suffer from the same congested situation. 
An important observation is that all the curves are asymptotically linear when the number of servers is large. An asymptotic analysis for the case of large number of servers may be the topic of any future research.

\begin{figure}[htbp]
 \begin{center}
  \includegraphics[width=80mm]{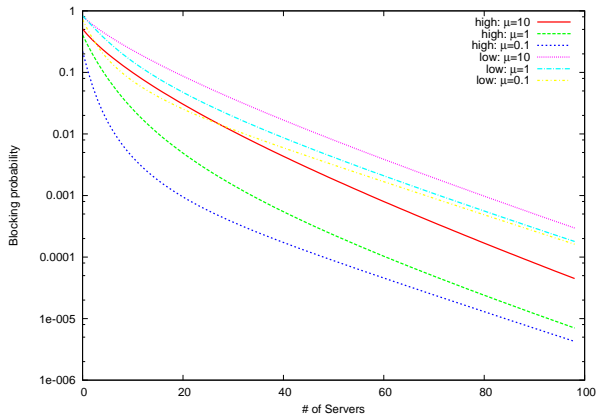}
 \end{center}
 \caption{Blocking probability vs. the \# of servers.}
 \label{fig:PB vs c}
\end{figure}

\subsection{Effect of the truncation point}
In this section, we investigate the effect of the truncation point. 
To this end, we define the absolute error $e_N$ for the number of busy servers as follows.
\[
e_N = \left| \frac{\lambda}{\nu} - \widehat{{\rm E}[C]} \right|,
\]
where $\widehat{{\rm E}[C]}$ is numerically calculated from our algorithms and $\lambda/\nu$ is its theoretical value due to Little law.

Figure~\ref{fig:aberror} shows the absolute error against the traffic intensity. 
Parameters are given by $c=25,50,100,200,\ \mu=1,\ \nu =1,\ 0.2 \leq \rho \leq 0.8$ and $\lambda_2 /\lambda_1=24$. 
Truncation point $N$ is determined using the method in Section~\ref{sec:N0} with $\epsilon = 10^{-10}$. We observe that the absolute error is small for any case. 

%Figure~\ref{fig:abs vs N} shows the absolute error against the truncation point $N$ where parameters are given as 
%$c=10,\ \mu=1,\ \nu =1,\  \rho=0.3,\ 0.5,\ 0.7,\ 0.9$ and $\lambda_2 /\lambda_1=24$. 
%We observe that the absolute error decreases with $N$ as expected. 

\begin{figure}[htbp]
 \begin{center}
  \includegraphics[width=80mm]{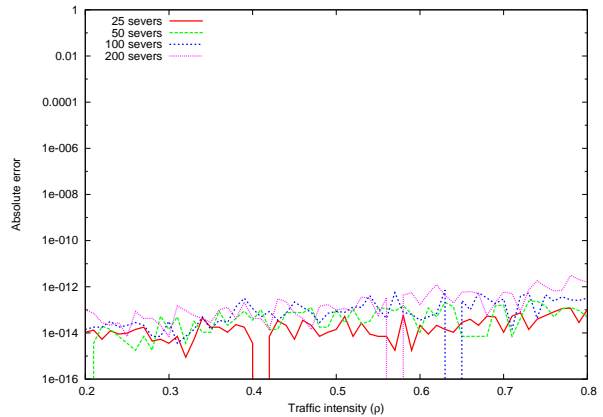}
 \end{center}
 \caption{Absolute error vs. Traffic intensity($\rho$).}
 \label{fig:aberror}
\end{figure}

%\begin{figure}[htbp]
% \begin{center}
  %\includegraphics[width=70mm]{absvsN.eps}
% \end{center}
% \caption{Absolute error vs. Truncation point.}
% \label{fig:abs vs N}
%\end{figure}

\section{Concluding Remarks}
In this paper, we have introduced a new queueing model with guard channel for retrial and priority customers for cellular networks. 
The new queueing model is formulated using a QBD process which possesses a sparse structure allowing an efficient numerical algorithm 
and Taylor series expansion for all the nonzero elements of the rate matrices. We have also derived an asymptotic upper bound for the joint stationary 
distribution. Numerical results have revealed that the upper bound can be further improved. Future work includes finding the exact asymptotic formulae for the 
joint stationary distribution.

\appendix
%%%%%%%%%%%%%%%%%%%%%%%%%%%%%%%%%%%%%%%%%%%%%%%%%%%%%%%%%%%
%%%%%%%%%%%%%%%%%%%%%%%%%%%%%%%%%%%%%%%%%%%%%%%%%%%%%%%%%%%%
\section{Proof of Lemma 2.1}\label{app:A0}

We prove Lemma 2.1 using Proposition~\ref{lem:twee}.
\begin{prop}\label{lem:twee}
{\rm (Tweedie \cite{tweedie} or Statement 8, p. 97 in~\cite{Retrial_queues})
Let $\{ \chi(t);t\geq 0 \}$ denote a Markov chain with the infinitesimal generator $\{ q_{s,p};s,p \in S\}$ on the state space $S$  $\sum_{p \in S} q_{s,p} = 0$. 
Furthermore, if the following conditions (i) and (ii) are satisfied, $\{\chi (t)\}$ is positive recurrent. 
\begin{description}
\item[(i)]  $\psi(s)$ ($s \in S$) is bounded from below. 
\item[(ii)] $y_s := \sum_{p \neq s} q_{sp} (\psi(p) - \psi(s))$. For any $s \in S$, $y_s < \infty$ and 
for any $s \in S$ except for a finite number of states, there exists a positive $\epsilon$ such that 
$y_s \leq - \epsilon$.
\end{description}
}
\end{prop}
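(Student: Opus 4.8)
The plan is to prove positive recurrence by a Foster--Lyapunov (drift) argument, treating $\psi$ as a Lyapunov test function. The starting observation is that, because the generator is conservative (the hypothesis $\sum_{p}q_{s,p}=0$ gives $q_{s,s}=-\sum_{p\neq s}q_{s,p}$), the drift quantity of condition (ii) is exactly the generator applied to $\psi$,
\[
y_s=\sum_{p\neq s}q_{s,p}\bigl(\psi(p)-\psi(s)\bigr)=\sum_{p}q_{s,p}\psi(p)=:(Q\psi)(s).
\]
Let $F=\{s\in S:\,y_s>-\ep\}$ be the exceptional set, which is finite by hypothesis, so that $y_s\le-\ep$ for all $s\notin F$ with a single $\ep>0$; and let $\kappa\ge0$ be such that $\psi\ge-\kappa$ on $S$, using condition (i). The scheme is: (a) bound the expected time to reach $F$ from an arbitrary state; (b) upgrade this to a finite mean return time to $F$; (c) conclude positive recurrence by regeneration on the finite set $F$.

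For step (a) I would invoke Dynkin's formula: the process
\[
M(t)=\psi(\chi(t))-\psi(\chi(0))-\int_0^t(Q\psi)(\chi(u))\,du
\]
is a local martingale. With $\tau_F=\inf\{t\ge0:\chi(t)\in F\}$, optional stopping at $t\wedge\tau_F$ (taken along a localizing sequence, see below) gives
\[
\mathbb{E}_s\bigl[\psi(\chi(t\wedge\tau_F))\bigr]=\psi(s)+\mathbb{E}_s\!\left[\int_0^{t\wedge\tau_F}y_{\chi(u)}\,du\right].
\]
For $u<\tau_F$ the state $\chi(u)$ lies outside $F$, hence $y_{\chi(u)}\le-\ep$; together with $\psi\ge-\kappa$ this yields $-\kappa\le\psi(s)-\ep\,\mathbb{E}_s[t\wedge\tau_F]$, so $\mathbb{E}_s[t\wedge\tau_F]\le(\psi(s)+\kappa)/\ep$. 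Monotone convergence as $t\to\infty$ then gives the key bound $\mathbb{E}_s[\tau_F]\le(\psi(s)+\kappa)/\ep<\infty$ for every $s\in S$.

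For steps (b)--(c) the finiteness of $F$ and the hypothesis $y_s<\infty$ do the work. From any $s\in F$ the mean return time to $F$ is finite: the holding time has finite mean $-1/q_{s,s}$ (finite since $y_s<\infty$ forces $-q_{s,s}=\sum_{p\neq s}q_{s,p}<\infty$), and after the first jump the chain sits at some state $p$ whose expected hitting time of $F$ is at most $(\psi(p)+\kappa)/\ep$; the requirement $y_s<\infty$ is exactly what makes $\sum_{p}q_{s,p}\psi(p)<\infty$, so this post-jump contribution averages to a finite quantity. Thus each state of the finite set $F$ has finite mean return time, the successive visits to $F$ form a regenerative structure with finite mean inter-visit time, and standard renewal/ergodic theory for Markov chains produces a stationary distribution; that is, $\{\chi(t)\}$ is positive recurrent.

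The main obstacle I anticipate is the rigorous justification of Dynkin's formula and the optional-stopping step when $\psi$ is unbounded above and the chain is a priori possibly explosive or of unbounded jump rate. One must exhibit an explicit localizing sequence --- for instance the exit times $\sigma_n=\inf\{t:\chi(t)\notin B_n\}$ of an increasing exhaustion $B_n\uparrow S$ by finite subsets, on each of which $\psi$ and the rates are bounded so that $M(t\wedge\sigma_n)$ is a genuine martingale --- and then verify the integrability needed to pass to the limit $n\to\infty$, which is where non-explosivity (itself a consequence of the drift condition) enters. The remaining delicate point is the transition from ``finite mean hitting time of the finite set $F$'' to ``finite mean return time,'' which rests squarely on $F$ being finite and on $y_s<\infty$ keeping the expected post-jump value of $\psi$ integrable.
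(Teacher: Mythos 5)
Two preliminary remarks on context. The paper offers no proof of this proposition at all: it is imported as a known result from Tweedie~\cite{tweedie} (Statement 8, p.~97 of~\cite{Retrial_queues}) and used as a black box in Appendix~A, so there is no in-paper argument to compare yours against. On its own terms, your Foster--Lyapunov scheme --- identify $y_s$ with $(Q\psi)(s)$, use Dynkin's formula and optional stopping to get $\mathbb{E}_s[\tau_F]\leq(\psi(s)+\kappa)/\varepsilon$, then a one-jump decomposition from the finite set $F$ and regeneration --- is the standard route, and most steps are sound. (Two small points: you read the quantifier in (ii) as a single uniform $\varepsilon$, which is the correct reading of Tweedie's theorem, though the paper's wording literally allows $\varepsilon$ to depend on $s$; and $-q_{s,s}<\infty$ is part of the definition of the generator, not a consequence of $y_s<\infty$ --- since $\psi(p)-\psi(s)$ need not be bounded away from zero, a divergent total rate is compatible with a finite $y_s$. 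Your use of $y_s<\infty$ to keep the post-jump expectation $\sum_{p\neq s}q_{s,p}\psi(p)$ finite is, however, exactly right.)

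The genuine gap is your parenthetical claim that non-explosivity is ``itself a consequence of the drift condition.'' With $\psi$ only bounded below this is false, and your localization step breaks precisely there: with $\sigma_n$ the exit times of finite sets and $\zeta=\lim_n\sigma_n$ the explosion time, optional stopping yields only $\mathbb{E}_s[\tau_F\wedge\zeta]\leq(\psi(s)+\kappa)/\varepsilon$, which says nothing about $\tau_F$ if $\zeta<\tau_F$ with positive probability. Concretely, take the irreducible birth--death chain on $\mathbb{Z}_+$ with $q_{n,n+1}=2^n$, $q_{n,n-1}=1$, and $\psi(n)=\varepsilon\sum_{k\geq n}2^{-k}$, which is decreasing and bounded below by $0$; then $y_n=-\varepsilon+\varepsilon 2^{-(n-1)}\leq-\varepsilon/2$ for $n\geq 2$ and $y_s<\infty$ everywhere, so (i) and (ii) hold, yet the chain is transient and explodes in finite time almost surely --- certainly not positive recurrent. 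So the proposition as literally quoted does not suffice, and no proof of it can close this hole: both cited sources presuppose an irreducible \emph{regular} (non-explosive) chain. Your argument must either add regularity as a hypothesis or establish it separately, e.g., by requiring a norm-like $\psi$ with $\psi(s)\to\infty$ outside finite sets --- which the paper's actual application does have, since Appendix~A uses $\phi(i,j)=ai+j$ on $\{0,1,\dots,c\}\times\mathbb{Z}_+$, but which your hypothesis (i) does not provide.
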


\begin{proof}(Lemma 2.1)

\underline{$\bullet$ $\{X(t)\}$ is positive recurrent $\Rightarrow$ $\lambda/(c\nu)<1$}\\

Let $C$ denote the number of busy servers in the steady state. It follows from Little law that 
\[
\frac{\lambda}{\nu} = {\rm E} \left[ C \right].
\]
Thus, in order for ${X(t)}$ to be positive recurrent we must have $ {\rm E}  \left[ C \right] < c$ or equivalently $\lambda/(c\nu) < 1$.

\underline{$\bullet$ $\{ X(t) \}$ is positive recurrent $\Leftarrow$ $\lambda/(c\nu)<1$}\\

The transition rate of $\{ {X(t)} ; t \geq 0 \}$ is given by 
\[
q_{(i,j),(n,m)},\qquad (i,j),   (n,m) \in \mathcal{S}, 
\]
where $\mathcal{S} = \{ 0,1,\dots ,c \} \times  \mathbb{Z}_{+}$.

First, for $i=0,1,\dots,c-2$, 
\begin{eqnarray}
\lefteqn{q_{(i,j),(n,m)}=} \nonumber \\
& & \left\{ \begin{array}{ll}
\lambda, & (n,m)=(i+1,j), \\
i \nu, & (n,m)=(i-1,j), \\
j \mu,& (n,m)=(i+1,j-1), \\
-(\lambda +j \mu + i \nu), & (n,m)=(i,j), \\
0, & otherwise. \\
\end{array} \right.
\label{eq:A-1}
\end{eqnarray}
For $i=c-1$, 
\begin{eqnarray}
\lefteqn{q_{(c-1,j),(n,m)}=} \nonumber \\ 
& & \left\{ \begin{array}{ll}
\lambda_1, & (n,m)=(c,j), \\
(c-1) \nu, & (n,m)=(c-2,j), \\
\lambda_2, & (n,m)=(c-1,j+1), \\
j \mu,& (n,m)=(c,j-1), \\
-\left\{\lambda +j \mu + (c-1) \nu \right\}, & (n,m)=(c-1,j), \\
0, & otherwise. \\
\end{array} \right. \qquad
\label{eq:A-2}
\end{eqnarray}

For $i=c$, 
\begin{eqnarray}
q_{(c,j),(n,m)}=\left\{ \begin{array}{ll}
\lambda, & (n,m)=(c,j+1), \\
c \nu, & (n,m)=(c-1,j), \\
-(\lambda + c \nu), & (n,m)=(c,j), \\
0, & otherwise. \\
\end{array} \right.
\label{eq:A-3}
\end{eqnarray}

For $0<a<1$, we consider the test function $\phi(i,j)=ai +j$. 
For any $(i,j)$, we have $\phi(i,j) \geq 0$. 
Furthermore, $h(i,j)$ is defined as follows. 
\begin{eqnarray}
\lefteqn{h(i,j) =} \nonumber \\
& & \sum_{(n,m) \in \mathcal{S},\ (n,m) \neq (i,j)}  q_{(i,j),(n,m)} (\phi(n,m) - \phi(i,j)).\qquad
\end{eqnarray}

%\balancecolumns

It follows from (\ref{eq:A-1}),\ (\ref{eq:A-2}) and (\ref{eq:A-3}) that
\begin{eqnarray*}
\lefteqn{h(i,j) =} \\ 
& & \left\{ \begin{array}{ll}
\lambda a - i\nu a +j \mu (a-1), &i=0,1,\dots,c-2,  \\
\lambda_1 a -(c-1) \nu a + j \mu (a-1) +\lambda_2, & i=c-1, \\
\lambda -c \nu a, & i=c. \\
\end{array} \right.
\end{eqnarray*}

Since $a<1$, for any $(i,j)$ we have $h(i,j) < \lambda$.
Furthermore, since $a<1$ for $i=0,1,\dots,c-1$ we have $\lim_{j \to \infty} h(i,j) = - \infty$. 
Thus, for any positive $\epsilon$, there exists $J(\epsilon)$ such that for $j > J(\epsilon)$ and $i=0,1,\dots,c-1$, we have $h(i,j) < - \epsilon$.

Next, we prove that except for a finite number of states, there exists $\epsilon > 0$ such that $h(i,j) < -\epsilon$. 
In order for $\lambda -c \nu a < 0$ except for a finite number of states, we choose $a$ such that
\[
\lambda -c \nu a < 0 \quad \Leftrightarrow \quad \rho =\lambda / (c\nu) < a<1.
\]

Thus, from the above formula and Lemma~\ref{lem:twee}, if $\lambda / (c\nu) < 1$ then $\{ X(t) \}$ is positive recurrent. 
\end{proof}

%%%%%%%%%%%%%%%%%%%%%%%%%%%%%%%%%%%%%%%%%%%%%%%%%%%%%%
\section{Proof of Lemma 3.1} \label{app:A}

%\section{•â'è\ref{lem:3-1}'̏ؖ¾} \label{app:A}

\begin{proof}
We prove that for $k = 0,1,\dots,c$, 
\begin{eqnarray*}
r_{c-k}^{(0,n)} &=& \theta_{0}^{(0,k)} \frac{1}{n^{k}} + o(\frac{1}{n^{k}}),\qquad
n \in \mathbb{N}, \\
r_{c-k}^{(1,n)} &=& \theta_{0}^{(1,k)} \frac{1}{n^{k}} + o(\frac{1}{n^{k}}),\qquad
n \in \mathbb{N},\\
r_{i}^{(0,n)} &=& o(\frac{1}{n^{k}}),\qquad i=0,1,\dots, c-k-1,\ \\
r_{i}^{(1,n)} &=& o(\frac{1}{n^{k}}),\qquad i=0,1,\dots, c-k-1 ,\ 
\end{eqnarray*}
by mathematical induction, where $i \in \emptyset$ if $k=c$. 
%%%%%%%%%%%%%%%%%%%%%%%%%%%%%%%%%%%%%%%%%%%%%%%%%

$\bullet$ {\bf The case $k=1$}

According to Lemma~\ref{lem:2-3}, for $i=0,1,2,\dots,c-1$
\begin{eqnarray}
r_i^{(0,n)} &=& o(1)  ,\qquad
r_i^{(1,n)} = o(1) \label{eq:12},\  \\
r_i^{(0,n)} &\leq& \frac{\lambda_2}{n\mu} ,\qquad
r_i^{(1,n)} \leq \frac{\lambda}{n\mu}. \label{eq:14}
\end{eqnarray}
Furthermore, it follows from (\ref{eq:r00}) and (\ref{eq:r10}) that
\begin{eqnarray}
r_0^{(0,n)} = \frac{1}{n\mu} \left( -\lambda r_0^{(0,n)} + \nu r_1^{(0,n)}\right), \label{eq:B3} \\
r_0^{(1,n)} = \frac{1}{n\mu} \left( -\lambda r_0^{(1,n)} + \nu r_1^{(1,n)}\right). \label{eq:B4}
\end{eqnarray}
From (\ref{eq:12}), (\ref{eq:B3}) and (\ref{eq:B4}), we obtain 
\begin{eqnarray}
r_0^{(0,n)} = o(\frac{1}{n}),\qquad r_0^{(1,n)} = o(\frac{1}{n}) \label{eq:578}.
\end{eqnarray}
In addition, it follows from (\ref{eq:r03}) and (\ref{eq:r12}) that
\begin{eqnarray*}
\lefteqn{(\lambda+c\nu) r_{c}^{(0,n)} = } \\ 
& &  \lambda_1 r_{c-1}^{(0,n)} +(n+1)\mu(r_{c-1}^{(0,n)} r_{c-1}^{(0,n+1)} + r_{c}^{(0,n)} r_{c-1}^{(1,n+1)}),\\
\lefteqn{(\lambda+c\nu) r_{c}^{(1,n)} = } \\
& &  \lambda_1 r_{c-1}^{(1,n)} +(n+1)\mu(r_{c-1}^{(1,n)} r_{c-1}^{(0,n+1)} + r_{c}^{(1,n)} r_{c-1}^{(1,n+1)}) +\lambda. 
\end{eqnarray*}
From (\ref{eq:14}), we obtain 
\begin{eqnarray*}
(\lambda+c\nu) r_{c}^{(0,n)} &\leq &  \lambda_1 r_{c-1}^{(0,n)} +\lambda_2 r_{c-1}^{(0,n)} + \lambda r_{c}^{(0,n)} ,\\
(\lambda+c\nu) r_{c}^{(1,n)} &\leq&   \lambda_1 r_{c-1}^{(1,n)} + \lambda_2  r_{c-1}^{(1,n)} +
\lambda r_{c}^{(1,n)}+\lambda. 
\end{eqnarray*}
Deleting $\lambda r_{c}^{(0,n)}$ and $\lambda r_{c}^{(1,n)}$ from both sides yields
\begin{eqnarray*}
c\nu r_{c}^{(0,n)} &\leq &  \lambda_1 r_{c-1}^{(0,n)} +\lambda_2 r_{c-1}^{(0,n)},\\
c\nu r_{c}^{(1,n)} &\leq&   \lambda_1 r_{c-1}^{(1,n)} + \lambda_2  r_{c-1}^{(1,n)} +\lambda. 
\end{eqnarray*}
 From (\ref{eq:12}), we obtain 
\begin{eqnarray}
r_c^{(0,n)} = o(1),\qquad \mbox{$r_c^{(1,n)} = O(1)$} \label{eq:347}  
\end{eqnarray}

From (\ref{eq:r01}) and (\ref{eq:14}), we have
\begin{eqnarray*}
r_{i}^{(0,n)} & =  &   \frac{\lambda r_{i-1}^{(0,n)} -(\lambda + i\nu) r_{i}^{(0,n)} +(i+1)\nu  r_{i+1}^{(0,n)} }{n\mu} \\ 
 & &        + \frac{(n+1)\mu  (r_{c-1}^{(0,n)}r_{i-1}^{(0,n+1)}+r_{c}^{(0,n)}r_{i-1}^{(1,n+1)})}{n\mu} \\
&\leq& \frac{\lambda r_{i-1}^{(0,n)} -(\lambda + i\nu) r_{i}^{(0,n)} +(i+1)\nu  r_{i+1}^{(0,n)}}{n\mu}  \\
& & \mbox{} + \frac{\lambda_2 r_{c-1}^{(0,n)}+\lambda r_c^{(0,n)}}{n\mu}.
\end{eqnarray*}
It follows from (\ref{eq:12}) and (\ref{eq:347}) that
\begin{eqnarray}
r_{i}^{(0,n)}= o(\frac{1}{n}),\qquad i=1,2,\dots,c-2. \label{eq:20}
\end{eqnarray}
From Lemma~\ref{lem:2-3}, (\ref{eq:578}) and (\ref{eq:20}), we obtain 
\begin{eqnarray}
r_{c-1}^{(0,n)}= \frac{\lambda_2 }{n\mu} +o(\frac{1}{n}).\   \label{eq:21}
\end{eqnarray}
Thus, we obtain $r_{i}^{(0,n)}= o(1/n),\ i=0,1,\dots,c-2,\  r_{c-1}^{(0,n)}= \theta_{0}^{(0,1)}/n + o(1/n)$.

Arranging (\ref{eq:r11}) yields
\begin{eqnarray}
\lefteqn{n\mu r_{i}^{(1,n)} = } \nonumber \\
& &  \lambda r_{i-1}^{(1,n)} -(\lambda + i\nu) r_{i}^{(1,n)} +(i+1)\nu  r_{i+1}^{(1,n)} 
+ \tilde{r}_i^{(1,n)}. \label{eq:r1nhe} \qquad 
\end{eqnarray}
It follows from (\ref{eq:12}), (\ref{eq:578}) and (\ref{eq:347}) that
\begin{eqnarray*}
r_{0}^{(1,n)}&=& o(\frac{1}{n}),\qquad r_{1}^{(1,n)}= o(1),\qquad r_{2}^{(1,n)}= o(1),\\
r_{0}^{(0,n+1)}&=& o(\frac{1}{n+1}),\qquad r_{0}^{(1,n+1)}= o(\frac{1}{n+1}),\\
r_{c-1}^{(1,n)}&=& o(1),\qquad r_{c}^{(1,n)}= O(1).
\end{eqnarray*}

Substituting the above formulae into (\ref{eq:r1nhe}) with $i=1$ yields $r_{1}^{(1,n)}= o(1/n)$. 
We assume that Lemma~\ref{lem:3-1} is true for $i=j-1$, i.e.,  $r_{j-1}^{(1,n)}= o(1/n)$.

From the preceding assumption, (\ref{eq:12}), (\ref{eq:578}) and (\ref{eq:347}), we have
\begin{eqnarray*}
r_{j-1}^{(1,n)}&=& o(\frac{1}{n}),\qquad r_{j}^{(1,n)}= o(1),\qquad r_{j+1}^{(1,n)}= o(1),\\
r_{j-1}^{(0,n+1)}&=& o(\frac{1}{n+1}),\qquad r_{j-1}^{(1,n+1)}= o(\frac{1}{n+1}),\\
r_{c-1}^{(1,n)}&=& o(1),\qquad r_{c}^{(1,n)}= O(1).
\end{eqnarray*}
Substituting these formulae into (\ref{eq:r1nhe}) with $i=j$, we obtain $r_{j}^{(1,n)}= o(1/n)$.

Using mathematical induction we have $r_{i}^{(1,n)}= o(1/n)$ for $i=1,2,\dots,c-2$, which together with Lemma~\ref{lem:2-3} and (\ref{eq:578}) yield
\begin{eqnarray}
r_{c-1}^{(1,n)}= \frac{\lambda}{n\mu} +o(\frac{1}{n}). \label{eq:k=1,c-1}
\end{eqnarray}
Thus, we obtain $  r_{i}^{(1,n)}= o(1/n),\ i=0,1,\dots,c-2,\  r_{c-1}^{(0,n)}= \theta_{0}^{(1,1)}/n + o(1/n)$.  
%%%%%%%%%%%%%%%%%%%%%%%%%%%%%%%%%%%%%%%%%%%%%%%%%%%%%%%%%%%%%%%

$\bullet$ {\bf The case $k=2,3,\dots, c-1$}\ 

It should be noted that the derivations for $r^{(0,n)}_{c-k}$ and $r^{(1,n)}_{c-k}$ are the same. Thus, we show for $r^{(0,n)}_{c-k}$ only. 
For $k=1,2,\dots,j$, we assume that
\begin{eqnarray}
 r_{c-k}^{(0,n)} &=& \theta_{0}^{(0,k)} \frac{1}{n^{k}} + o(\frac{1}{n^{k}}),\qquad n \in \mathbb{N},\label{eq:780}\\ 
 r_{c-k}^{(1,n)} &=& \theta_{0}^{(1,k)} \frac{1}{n^{k}} + o(\frac{1}{n^{k}}),\qquad n \in \mathbb{N},\
 \label{eq:781} \\
 r_{i}^{(0,n)} &=& o(\frac{1}{n^k}),\qquad i=0,1,\dots, c-k-1,\label{eq:782} \\
 r_{i}^{(1,n)} &=& o(\frac{1}{n^k}),\qquad i=0,1,\dots, c-k-1. \label{eq:783} 
\end{eqnarray}
We prove that the same expression is obtainable for the case $k=j+1$. Indeed, it follows from (\ref{eq:B3}),\ (\ref{eq:B4}),\ (\ref{eq:782}) and (\ref{eq:783}) that
\[
r_{0}^{(0,n)} =o(\frac{1}{n^{j+1}}),\qquad  r_{0}^{(1,n)} =o(\frac{1}{n^{j+1}}).
\]

For $i=1,2,\dots,c-j-2$, assuming that $ r_{i-1}^{(0,n)}= o(1/n^{j+1})$ and $r_{i-1}^{(1,n)}= o(1/n^{j+1})$,
we prove that $ r_{i}^{(0,n)}= o(1/n^{j+1})$ and  $r_{i}^{(1,n)}= o(1/n^{j+1})$. 

Indeed, arranging (\ref{eq:r01}) and (\ref{eq:r11}) yields
\begin{eqnarray}
\lefteqn{r_{i}^{(0,n)}  =} \nonumber \\ 
&& \frac{\lambda r_{i-1}^{(0,n)} -(\lambda + i\nu) r_{i}^{(0,n)} +(i+1)\nu  r_{i+1}^{(0,n)} +\tilde{r}_{i}^{(0,n)}}{n\mu}. \qquad \quad 
\label{eq:71}\\
\lefteqn{r_{i}^{(1,n)}  = } \nonumber \\
&&  \frac{\lambda r_{i-1}^{(1,n)} -(\lambda + i\nu) r_{i}^{(1,n)} +(i+1)\nu  r_{i+1}^{(1,n)} +\tilde{r}_{i}^{(1,n)}}{n\mu}. \qquad  \quad
\label{eq:72}
\end{eqnarray}
Applying the preceding assumption, (\ref{eq:347}), (\ref{eq:21}), (\ref{eq:782}) and (\ref{eq:783}) to (\ref{eq:71}) yields

\begin{eqnarray}
r_{i}^{(0,n)}= o(\frac{1}{n^{j+1}}),\qquad i=1,2,\dots,c-j-2.
\label{eq:3.16}
\end{eqnarray}
Similarly, substituting the preceding assumption, (\ref{eq:347}), (\ref{eq:k=1,c-1}), (\ref{eq:782}) and (\ref{eq:783}) to (\ref{eq:72}), we obtain  
\begin{eqnarray}
r_{i}^{(1,n)}= o(\frac{1}{n^{j+1}}),\qquad i=1,2,\dots,c-j-2.
\label{eq:3.17}
\end{eqnarray}
%(\ref{eq:r01})Ž®'É

It follows from (\ref{eq:780}),\ (\ref{eq:781}),\ (\ref{eq:71}),\ (\ref{eq:72}),\ (\ref{eq:3.16}) and (\ref{eq:3.17}) that 
\begin{eqnarray*}
 r_{c-j-1}^{(0,n)} &=& \theta_{0}^{(0,j+1)}  \frac{1}{n^{j+1}} + o(\frac{1}{n^{j+1}}),\qquad n \in \mathbb{N}, \\
 r_{c-j-1}^{(1,n)} &=& \theta_{0}^{(1,j+1)}  \frac{1}{n^{j+1}} + o(\frac{1}{n^{j+1}}),\qquad n \in \mathbb{N}.
\end{eqnarray*}
Thus, we have proven the case $k=j+1$. As a result, we have proven for $k=2,3,\dots,c-1$. 

%%%%%%%%%%%%%%%%%%%%%%%%%%%%%%%%%%%%%%%%%%%%%%%%%%%

$\bullet$ {\bf The case $k=c$}

Substituting (\ref{eq:780}),\ (\ref{eq:781}),\ (\ref{eq:782}),\ (\ref{eq:783}) with $k=c-1$ into (\ref{eq:B3}) and (\ref{eq:B4}), we obtain 
\begin{eqnarray*}
r_{0}^{(0,n)} &=& \theta_{0}^{(0,c)} \frac{1}{n^{c}} + o(\frac{1}{n^{c}}),\qquad
n \in \mathbb{N}, \\
r_{0}^{(1,n)} &=& \theta_{0}^{(1,c)} \frac{1}{n^{c}} + o(\frac{1}{n^{c}}),\qquad
n \in \mathbb{N}.
\end{eqnarray*}
%%%%%%%%%%%%%%%%%%%%%%%%%%%%%%%%%%%%%%%%%%%%%%%%%%

$\bullet$ {\bf The case $k=0$}

Arranging (\ref{eq:r03}) and (\ref{eq:r12}), we obtain 
\begin{eqnarray}
\lefteqn{(\lambda+c\nu) r_{c}^{(0,n)} = \lambda_1 r_{c-1}^{(0,n)}} \nonumber \\
& &  +(n+1)\mu(r_{c-1}^{(0,n)} r_{c-1}^{(0,n+1)} + r_{c}^{(0,n)} r_{c-1}^{(1,n+1)}), \label{eq:k=0,3}  \qquad  \\ 
\lefteqn{(\lambda+c\nu) r_{c}^{(1,n)} = \lambda_1 r_{c-1}^{(1,n)} } \nonumber \\
& & +(n+1)\mu(r_{c-1}^{(1,n)} r_{c-1}^{(0,n+1)} + r_{c}^{(1,n)} r_{c-1}^{(1,n+1)}) +\lambda. \label{eq:k=0,4} \qquad
\end{eqnarray}

From (\ref{eq:780}) and (\ref{eq:781}) with $k=1$, we obtain 
\begin{eqnarray*}
(n+1) r_{c-1}^{(0,n+1)} &=& \frac{\lambda_2}{\mu}  + o(1),\qquad n \in \mathbb{N},
\label{eq:k=0,1} \\
(n+1) r_{c-1}^{(1,n+1)} &=& \frac{\lambda}{\mu}  + o(1),\qquad n \in \mathbb{N}.
\label{eq:k=0,2}
\end{eqnarray*}

Substituting the above two formulae into (\ref{eq:k=0,3}) and (\ref{eq:k=0,4}) yields
\begin{eqnarray*}
(\lambda+c\nu) r_{c}^{(0,n)} &= &  \lambda_1 r_{c-1}^{(0,n)} + \lambda_2 r_{c-1}^{(0,n)}  +\lambda r_{c}^{(0,n)} + o(1),\qquad \\
(\lambda+c\nu) r_{c}^{(1,n)} &=&   \lambda_1 r_{c-1}^{(1,n)}+ \lambda_2 r_{c-1}^{(1,n)}  +\lambda r_{c}^{(1,n)} + o(1) +\lambda.\qquad
\end{eqnarray*}

Deleting $\lambda r_c^{(0,n)}$ and $\lambda r_c^{(1,n)}$ from both sides yields 
\begin{eqnarray*}
c\nu r_{c}^{(0,n)} &= &  \lambda_1 r_{c-1}^{(0,n)} + \lambda_2 r_{c-1}^{(0,n)}   + o(1), \\
c\nu r_{c}^{(1,n)} &=&   \lambda_1 r_{c-1}^{(1,n)}+ \lambda_2 r_{c-1}^{(1,n)}  + o(1) +\lambda.
\end{eqnarray*}
From these two formulae and the result for $k=1$, we obtain 
\begin{eqnarray*}
r_{c}^{(0,n)} &=& \theta_{0}^{(0,0)} + o(1),\qquad n \in \mathbb{N}, \\
r_{c}^{(1,n)} &=& \theta_{0}^{(1,0)} +  o(1),\qquad n \in \mathbb{N},
\end{eqnarray*}
where 
\begin{eqnarray*}
\theta_{0}^{(0,0)} = 0,\qquad \theta_{0}^{(1,0)} = \frac{\lambda}{c\nu}. 
\end{eqnarray*}
\end{proof}

%%%%%%%%%%%%%%%%%%%%%%%%%%%%%%%%%%%%%%%%%%%%%%%%%%%%%%%%%%%

%%%%%%%%%%%%%%%%%%%%%%%%%%%%%%%%%%%%%%%%%%%%%%%%
\section{Proof of Lemma 3.2} \label{app:B}

\begin{proof}
We prove for $k=0,1,\dots,c$ using mathematical induction. 
%%%%%%%%%%%%%%%%%%%%%%%%%%%%%%%%%%%%%%%%%%%%%%

$\bullet$ {\bf The case $k=1$}\\
From Lemma~\ref{lem:2-3}, we have 
\begin{eqnarray*}
r_{c-1}^{(0,n)} &=& \frac{ \lambda_2}{n\mu} - r_{c-2}^{(0,n)} - \sum_{k=3}^{c} r_{c-k}^{(0,n)}, \\
r_{c-1}^{(1,n)} &=& \frac{\lambda}{n\mu} -r_{c-2}^{(1,n)}- \sum_{k=3}^{c} r_{c-k}^{(1,n)}. 
\end{eqnarray*}
Furthermore, from Lemma~\ref{lem:3-1}, we have 
\begin{eqnarray*}
r_{c-2}^{(0,n)} = \theta_{0}^{(0,2)} \frac{1}{n^{2}} + o(\frac{1}{n^{2}}) = O(\frac{1}{n^{2}}), \\
r_{c-2}^{(1,n)} = \theta_{0}^{(1,2)} \frac{1}{n^{2}} + o(\frac{1}{n^{2}}) = O(\frac{1}{n^{2}}), \\
\sum_{k=3}^{c} r_{c-k}^{(0,n)} = O(\frac{1}{n^{2}}), \\
\sum_{k=3}^{c} r_{c-k}^{(1,n)} = O(\frac{1}{n^{2}}). 
\end{eqnarray*}
Thus, we obtain
\begin{eqnarray*}
 r_{c-1}^{(0,n)} &=& \theta_{0}^{(0,1)} \frac{1}{n} + O(\frac{1}{n^{2}}), \\
 r_{c-1}^{(1,n)} &=& \theta_{0}^{(1,1)} \frac{1}{n} + O(\frac{1}{n^{2}}). 
\end{eqnarray*}

%%%%%%%%%%%%%%%%%%%%%%%%%%%%%%%%%%%%%%%%%%%

$\bullet$ {\bf The case $k=2,3,\dots,c-1$}

We assume (\ref{eq:55}) and (\ref{eq:56}) are true for $r^{(0,n)}_{c-j}$ with $j = 1,2,\dots,k-1$, we prove that they are also true for $j = k$. 
Arranging (\ref{eq:r01}) and (\ref{eq:r11}) with $i=c-k$ yields
\begin{eqnarray}
\lefteqn{ r_{c-k}^{(0,n)} =  \frac{  \lambda r_{c-k-1}^{(0,n)}- \left\{\lambda + (c-k)\nu \right\} r_{c-k}^{(0,n)} }{n\mu} } \nonumber \\
& &\mbox{} + \frac{ (c-k+1) \nu  r_{c-k+1}^{(0,n)} +\tilde{r}_{c-k}^{(0,n)} }{n\mu},   \qquad  \label{eq:41}\\
\lefteqn{ r_{c-k}^{(1,n)} = \frac{ \lambda r_{c-k-1}^{(1,n)}-\{\lambda + (c-k)\nu \} r_{c-k}^{(1,n)}}{n\mu}}   \nonumber \\ 
& & \mbox{} + \frac{(c-k+1)\nu  r_{c-k+1}^{(1,n)} +\tilde{r}_{c-k}^{(1,n)}}{n\mu}.\qquad  \label{eq:42}
\end{eqnarray}

Applying the assumption of mathematical induction, Lemma~\ref{lem:3-1} and (\ref{eq:55}) with $k=1$, we obtain 
\begin{eqnarray*}
r_{c-k-1}^{(0,n)} &=&  \theta_{0}^{(0,k+1)} \frac{1}{n^{k+1}} + o(\frac{1}{n^{k+1}}), \\
r_{c-k}^{(0,n)} &=&  \theta_{0}^{(0,k)} \frac{1}{n^{k}} + o(\frac{1}{n^{k}}), \\
r_{c-k+1}^{(0,n)} &=&  \theta_{0}^{(0,k-1)} \frac{1}{n^{k-1}}  + O(\frac{1}{n^{k}}), \\
(n+1)r_{c-k-1}^{(0,n+1)} &=&  \theta_{0}^{(0,k+1)} \frac{1}{(n+1)^{k}} + o(\frac{1}{n^{k}}) \\ 
                                  & = & \theta_{0}^{(0,k+1)} \frac{1}{n^{k}} + o(\frac{1}{n^{k}}), \\
(n+1)r_{c-k-1}^{(1,n+1)} &=&  \theta_{0}^{(1,k+1)} \frac{1}{(n+1)^{k}} + o(\frac{1}{n^{k}}) \\ 
                                  & = & \theta_{0}^{(1,k+1)} \frac{1}{n^{k}} + o(\frac{1}{n^{k}}), \\
r_{c-1}^{(0,n)} &=& \theta_{0}^{(0,1)} \frac{1}{n} +  O( \frac{1}{n^2} ). \\
r_{c}^{(0,n)} &=&  0  + o(1),
\end{eqnarray*}
Thus, substituting the above formulae to (\ref{eq:41}) yields
\begin{eqnarray*}
 r_{c-k}^{(0,n)} &=&\frac{(c-k+1)\nu}{\mu}  \theta_{0}^{(0,k-1)} \frac{1}{n^{k}} + 
 O(\frac{1}{n^{k+1}}) \\
&=& \theta_{0}^{(0,k)} \frac{1}{n^{k}} + O(\frac{1}{n^{k+1}}). 
 \end{eqnarray*}

Similarly, it follows from the assumption of mathematical induction, Lemma~\ref{lem:3-1} and (\ref{eq:56}) with $k=1$ that
\begin{eqnarray*}
r_{c-k-1}^{(1,n)} &=&  \theta_{0}^{(1,k+1)} \frac{1}{n^{k+1}} + o(\frac{1}{n^{k+1}}), \\
r_{c-k}^{(1,n)} &=&  \theta_{0}^{(1,k)} \frac{1}{n^{k}} + o(\frac{1}{n^{k}}), \\
r_{c-k+1}^{(1,n)} &=&  \theta_{0}^{(1,k-1)} \frac{1}{n^{k-1}} + O(\frac{1}{n^{k}}), \\
(n+1)r_{c-k-1}^{(0,n+1)} &=&  \theta_{0}^{(1,k+1)} \frac{1}{(n+1)^{k}} + o(\frac{1}{n^{k}}) \\
                                 & = & \theta_{0}^{(0,k+1)} \frac{1}{n^{k}} + o(\frac{1}{n^{k}}), \\
(n+1)r_{c-k-1}^{(1,n+1)} &=&  \theta_{0}^{(1,k+1)} \frac{1}{(n+1)^{k}} + o(\frac{1}{n^{k}}) \\ 
                                 & = & \theta_{0}^{(1,k+1)} \frac{1}{n^{k}} + o(\frac{1}{n^{k}}), \\
r_{c-1}^{(1,n)} &=& \theta_{0}^{(1,1)} \frac{1}{n} +  O( \frac{1}{n^2} ), \\
r_{c}^{(1,n)} &=&  \theta_{0}^{(1,0)}  + o(1).
\end{eqnarray*}
Thus,  substituting these formulae into (\ref{eq:42}) yields
\begin{eqnarray*}
 r_{c-k}^{(1,n)} &=&\frac{(c-k+1)\nu}{\mu}  \theta_{0}^{(1,k-1)} \frac{1}{n^{k}}
+ O(\frac{1}{n^{k+1}}) \\
&=& \theta_{0}^{(1,k)} \frac{1}{n^{k}}+ O(\frac{1}{n^{k+1}}).
 \end{eqnarray*}
 Therefore, it follows from mathematical induction that (\ref{eq:55}) and (\ref{eq:56}) are true for $k=2,3,\dots,c-1$.
%%%%%%%%%%%%%%%%%%%%%%%%%%%%%%%%%%%%%%%%%%%%%%%%%%%

$\bullet$ {\bf The case $k=c$}

Lemma~\ref{lem:3-1} and (\ref{eq:55}) with $k=c-1$ and (\ref{eq:56}) yield
\begin{eqnarray*}
r_{1}^{(0,n)} &=& \theta_{0}^{(0,c-1)}\frac{1}{n^{c-1}} + O(\frac{1}{n^{c}}), \\
r_{1}^{(1,n)} &=& \theta_{0}^{(0,c-1)}\frac{1}{n^{c-1}} + O(\frac{1}{n^{c}}),\\
r_{0}^{(0,n)} &=&   \theta_{0}^{(0,c)}\frac{1}{n^{c}} + o(\frac{1}{n^{c}}) \\ 
                & = & O(\frac{1}{n^{c}}),\\
r_{0}^{(1,n)} &=&  \theta_{0}^{(1,c)}\frac{1}{n^{c}} + o(\frac{1}{n^{c}}) \\
                & = & O(\frac{1}{n^{c}}).
\end{eqnarray*}
Substituting the above formulae into (\ref{eq:B3}) and (\ref{eq:B4}), we obtain
 \begin{eqnarray*}
r_{0}^{(0,n)} &=& \frac{\nu}{\mu} \theta_{0}^{(0,c-1)} \frac{1}{n^c}  \\
                &  & \mbox{} + O(\frac{1}{n^{c+1}})
=\theta_{0}^{(0,c)} \frac{1}{n^c}  + O(\frac{1}{n^{c+1}}),\\
r_{0}^{(1,n)} &=& \frac{\nu}{\mu} \theta_{0}^{(1,c-1)} \frac{1}{n^c} + O(\frac{1}{n^{c+1}}) \\
                & = & \theta_{0}^{(1,c)} \frac{1}{n^c} + O(\frac{1}{n^{c+1}}).
\end{eqnarray*}
%%%%%%%%%%%%%%%%%%%%%%%%%%%%%%%%%%%%%%%%%%%%%%%%%%%

$\bullet$ {\bf The case $k=0$}

From (\ref{eq:55}) with $k=1$ and (\ref{eq:56}), we obtain 
\begin{eqnarray*}
(n+1) r_{c-1}^{(0,n)} &=& \theta_{0}^{(0,1)} + O(\frac{1}{n}),  \\
(n+1) r_{c-1}^{(1,n)} &=& \theta_{0}^{(1,1)} + O(\frac{1}{n}).
\end{eqnarray*}
Substituting the above two formulae into (\ref{eq:k=0,3}) and (\ref{eq:k=0,4}) yields, 
\begin{eqnarray*}
(\lambda+c\nu) r_{c}^{(0,n)} &= &  \lambda_1 r_{c-1}^{(0,n)} + \lambda_2 r_{c-1}^{(0,n)}  +\lambda r_{c}^{(0,n)} + O(\frac{1}{n}),\qquad \\
(\lambda+c\nu) r_{c}^{(1,n)} &=&   \lambda_1 r_{c-1}^{(1,n)}+ \lambda_2 r_{c-1}^{(1,n)}  +\lambda r_{c}^{(1,n)} +O(\frac{1}{n})+\lambda.\qquad
\end{eqnarray*}
Deleting $\lambda r_c^{(0,n)}$ and $\lambda r_c^{(1,n)}$ from both sides of the above formulae, we obtain 
\begin{eqnarray*}
c\nu r_{c}^{(0,n)} &= &  \lambda_1 r_{c-1}^{(0,n)} + \lambda_2 r_{c-1}^{(0,n)}   +  O(\frac{1}{n}), \\
c\nu r_{c}^{(1,n)} &=&   \lambda_1 r_{c-1}^{(1,n)}+ \lambda_2 r_{c-1}^{(1,n)}  +  O(\frac{1}{n}) +\lambda.
\end{eqnarray*}
From the result for $k=1$, we obtain
\begin{eqnarray*}
r_{c}^{(0,n)} &=& \theta_{0}^{(0,0)} + O(\frac{1}{n}),\qquad n \in \mathbb{N}, \\
r_{c}^{(1,n)} &=& \theta_{0}^{(1,0)} +  O(\frac{1}{n}),\qquad n \in \mathbb{N}.
\end{eqnarray*}
\end{proof}

%%%%%%%%%%%%%%%%%%%%%%%%%%%%%%%%%%%%%%%%%%%%%%%%%
\section{Proof of Theorem~3.1} \label{app:C}

\begin{proof}
We prove Theorem~\ref{theo:5} using mathematical induction. 
First, we show that Theorem~\ref{theo:5} is true for $m=1$. 

$\bullet$ {\bf The case $k=1$}

From Lemma~\ref{lem:2-3}, we have 
\begin{eqnarray*}
r_{c-1}^{(0,n)} &=&  \frac{ \lambda_2}{n\mu} - \sum_{i=0}^{c-2} r_{i}^{(0,n)}, \\
r_{c-1}^{(1,n)} &=&  \frac{ \lambda}{n\mu} - \sum_{i=0}^{c-2} r_{i}^{(1,n)}. 
\end{eqnarray*}

Lemma~\ref{lem:3-2} yields 
\begin{eqnarray*}
r_{c-2}^{(0,n)} &=& \theta_{0}^{(0,2)} \frac{1}{n^2}+ O( \frac{1}{n^3} ), \\
r_{c-2}^{(1,n)} &=& \theta_{0}^{(1,2)} \frac{1}{n^2}+ O( \frac{1}{n^3} ), \\
\sum_{i=0}^{c-3} r_{i}^{(0,n)} &=& O( \frac{1}{n^3} ), \\
\sum_{i=0}^{c-3} r_{i}^{(1,n)} &=& O( \frac{1}{n^3} ). 
\end{eqnarray*}
Thus, 
\begin{eqnarray*}
r_{c-1}^{(0,n)} &=&  \frac{\theta_{0}^{(0,1)}}{n} - \frac{\theta_{1}^{(0,1)}}{n^2} +  O( \frac{1}{n^3} ), \\
r_{c-1}^{(1,n)} &=&  \frac{\theta_{0}^{(1,1)}}{n} - \frac{\theta_{1}^{(1,1)}}{n^2} +  O( \frac{1}{n^3} ),
\end{eqnarray*}
where $ \theta_{1}^{(0,1)}=\theta_{0}^{(0,2)}$ and $\theta_{1}^{(1,1)}=\theta_{0}^{(1,2)}.$

$\bullet$ {\bf The case $k=2,3,\dots,c-1$}

Assuming that (\ref{eq:theo3.1}) and (\ref{eq:theo3.2}) in Theorem~\ref{theo:5} are true for 
$r^{(0,n)}_{c-j}$ and $r^{(1,n)}_{c-j}$ with $j = 1,2,\dots,k-1$, we prove that they are also true for $j = k$. 

Using the assumption of mathematical induction and Lemma~\ref{lem:3-2}, we obtain
\begin{eqnarray*}
r_{c-k-1}^{(0,n)} &=&  \theta_{0}^{(0,k+1)} \frac{1}{n^{k+1}} + O(\frac{1}{n^{k+2}}), \\
r_{c-k}^{(0,n)} &=&  \theta_{0}^{(0,k)} \frac{1}{n^{k}} + O(\frac{1}{n^{k+1}}), \\
r_{c-k+1}^{(0,n)} &=&  \theta_{0}^{(0,k-1)} \frac{1}{n^{k-1}} -\theta_{1}^{(0,k-1)} \frac{1}{n^{k}}  + O(\frac{1}{n^{k+1}}), \\
(n+1)r_{c-k-1}^{(0,n+1)} &=&  \theta_{0}^{(0,k+1)} \frac{1}{(n+1)^{k}} + O(\frac{1}{n^{k+1}})  \\ 
                                 & = & \theta_{0}^{(0,k+1)} \frac{1}{n^{k}} + O(\frac{1}{n^{k+1}}), \\
(n+1)r_{c-k-1}^{(1,n+1)} &=&  \theta_{0}^{(1,k+1)} \frac{1}{(n+1)^{k}} + O(\frac{1}{n^{k+1}}) \\ 
                                 & = & \theta_{0}^{(1,k+1)} \frac{1}{n^{k}} + O(\frac{1}{n^{k+1}}), \\
r_{c-1}^{(0,n)} &=& \theta_{0}^{(0,1)} \frac{1}{n} - \theta_{1}^{(0,1)} \frac{1}{n^2} +  O( \frac{1}{n^3} ), \\
r_{c}^{(0,n)} &=&  0  + O(\frac{1}{n}).
\end{eqnarray*}
Substituting these formulae into (\ref{eq:41}), we obtain
\begin{eqnarray*}
 r_{c-k}^{(0,n)} &=&\frac{(c-k+1)\nu}{\mu}  \theta_{0}^{(0,k-1)} \frac{1}{n^{k}}  \\ 
                    &  & \mbox{} - \{   \frac{\lambda + (c-k)\nu}{\mu}  \theta_{0}^{(0,k)} + \frac{(c-k+1)\nu}{\mu}  \theta_{1}^{(0,k-1)}   \} \frac{1}{n^{k+1}}  \\ 
                    &  & \mbox{} + O(\frac{1}{n^{k+2}}) \\
&=& \theta_{0}^{(0,k)} \frac{1}{n^{k}} - \theta_{1}^{(0,k)}  \frac{1}{n^{k+1}} + O(\frac{1}{n^{k+2}}), 
 \end{eqnarray*}
where 
\begin{eqnarray*}
\theta_{1}^{(0,k)} =    \frac{\lambda + (c-k)\nu}{\mu}  \theta_{0}^{(0,k)} +
 \frac{(c-k+1)\nu}{\mu}  \theta_{1}^{(0,k-1)}.   
 \end{eqnarray*}

%Formula (\ref{eq:42}) is shown using the same methodology and we have
Similarly, using the same methodology, we obtain
\begin{eqnarray*}
 r_{c-k}^{(1,n)}
= \theta_{0}^{(1,k)} \frac{1}{n^{k}} - \theta_{1}^{(1,k)}  \frac{1}{n^{k+1}} + O(\frac{1}{n^{k+2}}),
 \end{eqnarray*}
where
\begin{eqnarray*}
\theta_{1}^{(1,k)}  & = &    \frac{\lambda + (c-k)\nu}{\mu}  \theta_{0}^{(1,k)} +  \frac{(c-k+1)\nu}{\mu}  \theta_{1}^{(1,k-1)} \\ 
                         &   & \mbox{} - \theta_{0}^{(1,0)} \theta_{0}^{(1,k+1)}.  
 \end{eqnarray*}

%%%%%%%%%%%%%%%%%%%%%%%%%%%%%%%%%%%%%%%%%%%%%%%%%% 
$\bullet$ {\bf The case $k=c$}

Equations (\ref{eq:theo3.1}) and (\ref{eq:theo3.2}) with $k=c-1$ and Lemma~\ref{lem:3-2} yield
\begin{eqnarray*}
r_{1}^{(0,n)} &=& \theta_{0}^{(0,c-1)}\frac{1}{n^{c-1}} 
-\theta_{1}^{(0,c-1)}\frac{1}{n^{c}}
+ O(\frac{1}{n^{c+1}}), \\
r_{1}^{(1,n)} &=& \theta_{0}^{(0,c-1)}\frac{1}{n^{c-1}} 
-\theta_{1}^{(1,c-1)}\frac{1}{n^{c}}
+ O(\frac{1}{n^{c+1}}),\\
r_{0}^{(0,n)} &=&   \theta_{0}^{(0,c)}\frac{1}{n^{c}} + o(\frac{1}{n^{c}})
=O(\frac{1}{n^{c}}),\\
r_{0}^{(1,n)} &=&  \theta_{0}^{(1,c)}\frac{1}{n^{c}} + o(\frac{1}{n^{c}})
=O(\frac{1}{n^{c}}).
\end{eqnarray*}
Thus, (\ref{eq:B3}) and (\ref{eq:B4}) are written as follows.
\begin{eqnarray*}
r_{0}^{(0,n)} &=& \frac{\nu}{\mu} \theta_{0}^{(0,c-1)} \frac{1}{n^c}
- \left(  \frac{\lambda \theta_{0}^{(0,c)} + \nu \theta_{1}^{(0,c-1)}}{\mu} \right) \frac{1}{n^{c+1}}  \\ 
                  && \mbox{} + O(\frac{1}{n^{c+2}}),  \\
&=& \theta_{0}^{(0,c)} \frac{1}{n^c}  -\theta_{1}^{(0,c)} \frac{1}{n^{c+1}}    + O(\frac{1}{n^{c+2}}),\\
r_{0}^{(1,n)} &=& \frac{\nu}{\mu} \theta_{0}^{(1,c-1)} \frac{1}{n^c} 
- \left(  \frac{\lambda \theta_{0}^{(1,c)} + \nu \theta_{1}^{(1,c-1)}}{\mu} \right)  \frac{1}{n^{c+1}} \\ 
               &  & \mbox{} + O(\frac{1}{n^{c+2}}) \\
&=& \theta_{0}^{(1,c)} \frac{1}{n^c} -\theta_{1}^{(1,c)} \frac{1}{n^{c+1}}  + O(\frac{1}{n^{c+2}}),
\end{eqnarray*}
where
\begin{eqnarray*}
\theta_{1}^{(0,c)} = \frac{\lambda \theta_{0}^{(0,c)} + \nu \theta_{1}^{(0,c-1)}}{\mu},\qquad
\theta_{1}^{(1,c)} = \frac{\lambda \theta_{0}^{(1,c)} + \nu \theta_{1}^{(1,c-1)}}{\mu}.
\end{eqnarray*}

%%%%%%%%%%%%%%%%%%%%%%%%%%%%%%%%%%%%%%%%%%%%%%%%%%%
$\bullet$ {\bf The case $k=0$}

We use the same methodology as in Lemma~\ref{lem:3-2}. 
Equations (\ref{eq:theo3.1}) and (\ref{eq:theo3.2}) with $k=1$ and Lemma~\ref{lem:3-2} yield
\begin{eqnarray*}
r_{c-1}^{(0,n)} &=& \theta_{0}^{(0,1)} \frac{1}{n} -  \theta_{1}^{(0,1)} \frac{1}{n^2} + O(\frac{1}{n^3}),  \\
r_{c-1}^{(1,n)} &=& \theta_{0}^{(1,1)} \frac{1}{n} -  \theta_{1}^{(1,1)} \frac{1}{n^2} + O(\frac{1}{n^3}),  \\
 r_{c}^{(0,n)} &=& 0 +O(\frac{1}{n}),  \\
r_{c}^{(1,n)} &=& \frac{\lambda}{c\nu} +O(\frac{1}{n}).
\end{eqnarray*}
Thus, (\ref{eq:r03}) and (\ref{eq:r12}) are written as follows.
\begin{eqnarray*}
r_{c}^{(0,n)} &=& 0  + \frac{\lambda_1 \theta_0^{(0,1)} + \mu \theta_0^{(0,1)} \theta_0^{(0,1)}}{c\nu} \frac{1}{n} +O(\frac{1}{n^2})  \\
&=& \theta_{0}^{(0,0)} -\theta_{1}^{(0,0)}  \frac{1}{n} +O(\frac{1}{n^2}), \\ 
r_{c}^{(1,n)} &=& \frac{\lambda}{c\nu} 
+ \frac{\lambda_1 \theta_0^{(1,1)} + \mu \theta_0^{(1,1)}\theta_0^{(0,1)} 
- \mu \theta_0^{(1,0)} \theta_1^{(1,1)}}{c\nu} \frac{1}{n}
+O(\frac{1}{n^2}) \\
&=& \theta_{0}^{(1,0)}-\theta_{1}^{(1,0)}  \frac{1}{n} +O(\frac{1}{n^2}).
\end{eqnarray*}
Therefore, Theorem~\ref{theo:5} is established for $m=1$ and $k=0,1,\dots,c$.

Next, assuming that Theorem~\ref{theo:5} is true for $m-1$ ($m$ terms expansion), we prove that it is also true for $m$ ($m+1$ terms expansion). 

%%%%%%%%%%%%%%%%%%%%%%%%%%%%%%%%%%%%%%%%%%%%%%%%%%%%
$\bullet$ {\bf The case $k=1$}\\
Lemma~\ref{lem:2-3} and mathematical induction yield
\begin{eqnarray*}
r_{c-1}^{(0,n)} &=&   \frac{\lambda_2 }{n\mu}  - \sum_{k=2}^{c} r_{c-k}^{(0,n)} \\
&=& \frac{\lambda_2 }{n\mu}   - \sum_{k=2}^{c}\left\{ \sum_{j=0}^{m-1} \theta_{j}^{(0,k)} (-1)^j \frac{1}{n^{k+j}} + O(\frac{1}{n^{k+m}}) \right\} \\
&=& \frac{\lambda_2 }{n\mu} -  \sum_{k=2}^{c}  \sum_{j=0}^{m-1} \theta_{j}^{(0,k)} (-1)^j \frac{1}{n^{k+j}} + O(\frac{1}{n^{m+2}}) \\
&=& \sum_{i=0}^{m} \theta_{i}^{(0,1)} (-1)^i \frac{1}{n^{1+i}} + O(\frac{1}{n^{m+2}}),
\end{eqnarray*}
where
\begin{eqnarray*}
\theta_{i}^{(0,1)} =\sum_{j=2}^{\min(c,i+1)} \theta_{i+1-j}^{(0,j)} (-1)^j.
\end{eqnarray*}

Similarly, we have

\begin{eqnarray*}
r_{c-1}^{(1,n)} &=&   \frac{\lambda}{n\mu}  - \sum_{k=2}^{c} r_{c-k}^{(1,n)} \\
&=& \frac{\lambda}{n\mu}   - \sum_{k=2}^{c}\left\{ \sum_{j=0}^{m-1} \theta_{j}^{(1,k)} (-1)^j \frac{1}{n^{k+j}} + O(\frac{1}{n^{k+m}}) \right\} \\
&=& \frac{\lambda}{n\mu} -  \sum_{k=2}^{c}  \sum_{j=0}^{m-1} \theta_{j}^{(1,k)} (-1)^j \frac{1}{n^{k+j}} + O(\frac{1}{n^{m+2}}) \\
&=& \sum_{i=0}^{m} \theta_{i}^{(1,1)} (-1)^i \frac{1}{n^{1+i}} + O(\frac{1}{n^{m+2}}),
\end{eqnarray*}
where
\begin{eqnarray*}
\theta_{i}^{(1,1)} =\sum_{j=2}^{\min(c,i+1)} \theta_{i+1-j}^{(1,j)} (-1)^j.
\end{eqnarray*}
%%%%%%%%%%%%%%%%%%%%%%%%%%%%%%%%%%%%%%%%%%

$\bullet$ {\bf The case $k=2,3,\dots,c-1$}\\

%Assuming that Theorem~\ref{theo:5} is true for $2,3,\dots,k-1$, we prove that it is also true for $k$. 
Assuming that (\ref{eq:theo3.1}) and (\ref{eq:theo3.2}) in Theorem~\ref{theo:5} are true for 
$r^{(0,n)}_{c-j}$ and $r^{(1,n)}_{c-j}$ with $j = 1,2,\dots,k-1$, we prove that they are also true for $j = k$. 
Applying the assumption of mathematical induction and (\ref{eq:theo3.1}) for $k=1$ yields
\begin{eqnarray*}
r_{c-k-1}^{(0,n)} &=& \sum_{i=0}^{m-1} \theta_{i}^{(0,k+1)} (-1)^i \frac{1}{n^{k+i+1}} + O(\frac{1}{n^{k+m+1}}), \\
r_{c-k}^{(0,n)} &=& \sum_{i=0}^{m-1} \theta_{i}^{(0,k)} (-1)^i \frac{1}{n^{k+i}} + O(\frac{1}{n^{k+m}}), \\
r_{c-k+1}^{(0,n)} &=& \sum_{i=0}^{m} \theta_{i}^{(0,k-1)} (-1)^i \frac{1}{n^{k+i-1}} + O(\frac{1}{n^{k+m}}), \\
r_{c-1}^{(0,n)} &=& \sum_{i=0}^{m} \theta_{i}^{(0,1)} (-1)^i \frac{1}{n^{k+i}} + O(\frac{1}{n^{m+2}}), \\
r_{c}^{(0,n)} &=& \sum_{i=0}^{m-1} \theta_{i}^{(0,0)} (-1)^i \frac{1}{n^{k+i}} + O(\frac{1}{n^{m}}), \\
(n+1)r_{c-k-1}^{(0,n)} &=& \sum_{i=0}^{m-1} \theta_{i}^{(0,k+1)} (-1)^i \frac{1}{(n+1)^{k+i}} + O(\frac{1}{n^{k+m}})\\
&=& \sum_{j=0}^{m-1} \Phi_{j}^{(0,k)} \frac{1}{n^{k+j}} + O(\frac{1}{n^{k+m}}), \\
(n+1)r_{c-k-1}^{(1,n)} &=& \sum_{i=0}^{m-1} \theta_{i}^{(1,k+1)} (-1)^i \frac{1}{(n+1)^{k+i}} + O(\frac{1}{n^{k+m}})\\
&=& \sum_{j=0}^{m-1} \Phi_{j}^{(1,k)} \frac{1}{n^{k+j}} + O(\frac{1}{n^{k+m}}). \\
\end{eqnarray*}

Substituting these formulae into (\ref{eq:41}) and attracting the coefficient of $1/n^{k+m}$ of (\ref{eq:41}) and arranging the result, we obtain
\begin{eqnarray*}
 \theta_{m}^{(0,k)} := 
\frac{\lambda}{\mu} \theta_{m-2}^{(0,k+1)} +\frac{\lambda +(c-k)\nu}{\mu} \theta_{m-1}^{(0,k)}
+\frac{(c-k+1)\nu}{\mu}\theta_{m}^{(0,k-1)} \\
+\sum_{j=0}^{m-1} \Phi_{j}^{(0,k)} \theta_{m-j-2}^{(0,1)} (-1)^j 
+\sum_{j=0}^{m-1} \Phi_{j}^{(1,k)} \theta_{m-j-1}^{(0,0)} (-1)^{j+1}.
\end{eqnarray*}

%We can obtain (\ref{eq:42}) using the same method. 
Similarly, using the assumption of mathematical induction and (\ref{eq:theo3.2}) with $k=1$, we obtain
\begin{eqnarray*}
r_{c-k-1}^{(1,n)} &=& \sum_{i=0}^{m-1} \theta_{i}^{(1,k+1)} (-1)^i \frac{1}{n^{k+i+1}} + O(\frac{1}{n^{k+m+1}}), \\
r_{c-k}^{(1,n)} &=& \sum_{i=0}^{m-1} \theta_{i}^{(1,k)} (-1)^i \frac{1}{n^{k+i}} + O(\frac{1}{n^{k+m}}) ,\\
r_{c-k+1}^{(1,n)} &=& \sum_{i=0}^{m} \theta_{i}^{(1,k-1)} (-1)^i \frac{1}{n^{k+i-1}} + O(\frac{1}{n^{k+m}}), \\
r_{c-1}^{(1,n)} &=& \sum_{i=0}^{m} \theta_{i}^{(1,1)} (-1)^i \frac{1}{n^{k+i}} + O(\frac{1}{n^{m+2}}), \\
r_{c}^{(1,n)} &=& \sum_{i=0}^{m-1} \theta_{i}^{(1,0)} (-1)^i \frac{1}{n^{k+i}} + O(\frac{1}{n^{m}}), \\
(n+1)r_{c-k-1}^{(0,n)} &=& \sum_{i=0}^{m-1} \theta_{i}^{(0,k+1)} (-1)^i \frac{1}{(n+1)^{k+i}} + O(\frac{1}{n^{k+m}}), \\
&=& \sum_{j=0}^{m-1} \Phi_{j}^{(0,k)} \frac{1}{n^{k+j}} + O(\frac{1}{n^{k+m}}), \\
(n+1)r_{c-k-1}^{(1,n)} &=& \sum_{i=0}^{m-1} \theta_{i}^{(1,k+1)} (-1)^i \frac{1}{(n+1)^{k+i}} + O(\frac{1}{n^{k+m}}), \\
&=& \sum_{j=0}^{m-1} \Phi_{j}^{(1,k)} \frac{1}{n^{k+j}} + O(\frac{1}{n^{k+m}}). \\
\end{eqnarray*}

Substituting these formulae into (\ref{eq:42}) and extracting the coefficient of $1/n^{k+m}$ in (\ref{eq:42}) and arranging the result yields, 
\begin{eqnarray*}
\theta_{m}^{(1,k)} :=\frac{\lambda}{\mu} \theta_{m-2}^{(1,k+1)} +\frac{\lambda +(c-k)\nu}{\mu} \theta_{m-1}^{(1,k)}
+\frac{(c-k+1)\nu}{\mu}\theta_{m}^{(1,k-1)} \\
+\sum_{j=0}^{m-1} \Phi_{j}^{(0,k)} \theta_{m-j-2}^{(1,1)} (-1)^j 
+\sum_{j=0}^{m-1} \Phi_{j}^{(1,k)} \theta_{m-j-1}^{(1,0)} (-1)^{j+1}.
\end{eqnarray*}
Thus, we obtain the result for the case $k=2,3,\dots,c-1$. 

$\bullet$ {\bf The case $k=c$}\\

Using Lemma~\ref{lem:3-2}, (\ref{eq:theo3.1}) and (\ref{eq:theo3.2}) with $k=c-1$, we obtain 
\begin{eqnarray*}
r_{1}^{(0,n)} &=&  \sum_{i=0}^{m} \theta_{i}^{(0,c-1)} (-1)^i \frac{1}{n^{c+i-1}} + O(\frac{1}{n^{c+m}}), \\
r_{1}^{(1,n)} &=& \sum_{i=0}^{m} \theta_{i}^{(1,c-1)} (-1)^i \frac{1}{n^{c+i-1}} + O(\frac{1}{n^{c+m}}),\\
r_{0}^{(0,n)} &=&    \sum_{i=0}^{m-1} \theta_{i}^{(0,c)} (-1)^i \frac{1}{n^{c+i}} + O(\frac{1}{n^{c+m}}),\\
r_{0}^{(1,n)} &=&   \sum_{i=0}^{m-1} \theta_{i}^{(1,c)} (-1)^i \frac{1}{n^{c+i}} + O(\frac{1}{n^{c+m}}).
\end{eqnarray*}

Attracting the coefficient of $1/n^{c+m}$ in (\ref{eq:B3}) and (\ref{eq:B4}) and arranging the result yields 
\begin{eqnarray*}
\theta_{m}^{(0,c)} :=
\frac{\lambda}{\mu} \theta_{m-1}^{(0,c)}  + \frac{\nu}{\mu}\theta_{m}^{(0,c-1)}, \\
\theta_{m}^{(1,c)} :=
\frac{\lambda}{\mu} \theta_{m-1}^{(1,c)} + \frac{\nu}{\mu}\theta_{m}^{(1,c-1)}. 
\end{eqnarray*}
Thus, we obtain the desired result for the case $k=c$. 

%%%%%%%%%%%%%%%%%%%%%%%%%%%%%%%%%%%%%%%%%%%%%%%%%%
$\bullet$ {\bf The case $k=0$}

We can prove Lemma~\ref{lem:3-2} using the same methodology. Equations (\ref{eq:theo3.1}) and (\ref{eq:theo3.2}) with $k=1$ 
and Lemma~\ref{lem:3-2} yield
\begin{eqnarray*}
r_{c-1}^{(0,n)} &=& \sum_{i=0}^{m} \theta_{i}^{(0,1)} (-1)^i \frac{1}{n^{1+i}} + O(\frac{1}{n^{m+2}}), \\
r_{c-1}^{(1,n)} &=& \sum_{i=0}^{m} \theta_{i}^{(1,1)} (-1)^i \frac{1}{n^{1+i}} + O(\frac{1}{n^{m+2}}),  \\
r_{c}^{(0,n)}& =& \sum_{i=0}^{m-1} \theta_{i}^{(0,0)} (-1)^i \frac{1}{n^{i}} + O(\frac{1}{n^{m}}), \\
r_{c}^{(1,n)} &=& \sum_{i=0}^{m-1} \theta_{i}^{(1,0)} (-1)^i \frac{1}{n^{i}} + O(\frac{1}{n^{m}}), 
\\
(n+1)r_{c-1}^{(0,n+1)} &=&   \sum_{i=0}^{m} \theta_{i}^{(0,1)} (-1)^i \frac{1}{(n+1)^{i}} + O(\frac{1}{n^{m+1}}) \\
&=& \sum_{j=0}^{m} \Phi_{j}^{(0,1)} (-1)^i \frac{1}{n^{j}} + O(\frac{1}{n^{m+1}}),\\
(n+1)r_{c-1}^{(1,n+1)} &=& \sum_{i=0}^{m} \theta_{i}^{(1,1)} (-1)^i \frac{1}{(n+1)^{i}} + O(\frac{1}{n^{m+1}})\\
&=& \sum_{j=1}^{m} \widetilde{\Phi}_{j}^{(1,1)} (-1)^i \frac{1}{n^{j}} + O(\frac{1}{n^{m+1}}). \\
\end{eqnarray*}

Using these formulae and attracting the coefficient of $1/n^{m}$ in (\ref{eq:k=0,3}) and (\ref{eq:k=0,4}), we obtain 
\begin{eqnarray*}
\theta_{m}^{(0,0)} &:= & - \frac{\lambda_1}{c\nu} \theta_{m}^{(0,1)} 
+\frac{\mu}{c\nu} \sum_{j=0}^{m} \Phi_{j}^{(0,0)} \theta_{m-j-1}^{(0,1)} (-1)^{j+1}  \\ 
  & & +\frac{\mu}{c\nu} \sum_{j=1}^{m} \widetilde{\Phi}_{j}^{(1,0)} \theta_{m-j}^{(0,0)} (-1)^{j}, 
\\
\theta_{m}^{(1,0)} & := & - \frac{\lambda_1}{c\nu} \theta_{m}^{(1,1)} 
+\frac{\mu}{c\nu} \sum_{j=0}^{m} \Phi_{j}^{(0,0)} \theta_{m-j-1}^{(1,1)} (-1)^{j+1} \\
& &  \mbox{} +\frac{\mu}{c\nu} \sum_{j=1}^{m} \widetilde{\Phi}_{j}^{(1,0)} \theta_{m-j}^{(1,0)} (-1)^{j}. 
\end{eqnarray*}
\end{proof}

%%%%%%%%%%%%%%%%%%%%%%%%%%%%%%%%%%%%%%%%%%%%%%%%%%%%%%%%%%%%

%%%%%%%%%%%%%%%%%%%%%%%%%%%%%%%%%%%%%%%%%%%%%%%%
\section{Proof of Lemma 5.1}\label{app:D} 
\begin{proof}
Let 
\[
\vc{U}_{k}^{(n)} = \vc{Q}_1^{(n)} + \vc{R}_{k-1}^{(n+1)} \vc{Q}_{2}^{(n+1)} ,\ \ n,k \in \mathbb{N}.  
\]
From~\ref{prop:1}, we have $\vc{R}_{k}^{(n)} \vc{U}_{k}^{(n)} = - \vc{Q}_{0}^{(n-1)},\ \ n,k \in \mathbb{N}$. Because the first $c-1$ rows in both sides are zeros, we obtain 
\begin{eqnarray}
\left( 
\begin{array}{c}
\vc{r}^{(0,n)}_{k} \\
\vc{r}^{(1,n)}_{k} \\
\end{array} 
\right)  \vc{U}_{k}^{(n)}  = 
\left( 
\begin{array}{c}
0,0,\dots, - \lambda_2,\ 0 \\
0,0,\dots,0,-\lambda \\
\end{array} 
\right). \label{eq:U}
\end{eqnarray} 
Since {\rm rank}($ \vc{U}_{k}^{(n)} $) $=c+1$, $\vc{r}^{(0,n)}_{k}$ and $\vc{r}^{(1,n)}_{k} $ are uniquely determined. For simplicity, let $\vc{r}^{(0,n)}_{k} = (x_0,x_1,\dots,x_c)$ and $\vc{r}^{(1,n)}_{k} = (y_0,y_1,\dots,y_c)$. 
Comparing both sides of (\ref{eq:U}) yields
\begin{eqnarray}
b_{0}^{(n)}x_0 + \nu x_1& = & 0, \quad i=0 \label{eq:rk00},\  \qquad \\
\lambda x_{i-1} + b_{i}^{(n)} x_{i} +(i+1)\nu  x_{i+1} + \tilde{x}_{i} & = & 0,  \label{eq:rk01}  \\ 
                            i=1,2,\ldots,c-2                                  &     & \nonumber \\
\lambda x_{c-2} + b_{c-1}^{(n)} x_{c-1} +c\nu  x_{c} +  \tilde{x}_{c-1} & = & -\lambda_{2}, \label{eq:rk02} \\ 
              i =c-1  && \nonumber \\ 
\lambda_{1}x_{c-1}+ b_{c}^{(n)}x_{c}+ \tilde{x}_{c} & = & 0,\quad i=c \label{eq:rk03},
\end{eqnarray}
where
$\tilde{x}_{i} =(n+1)\mu \left( x_{c-1} r_{k-1,i-1}^{(0,n+1)}+x_{c} r_{k-1,i-1}^{(1,n+1)}\right)$. 
Furthermore, 
\begin{eqnarray*}
b_{0}^{(n)}y_{0} + \nu y_{1}&=&0, \qquad i=0  \label{eq:rk10}  ,\ \\
\lambda y_{i-1} + b_{i}^{(n)} y_{i} +(i+1)\nu  y_{i+1} + \tilde{y}_{i}&=&0, \label{eq:rk11} \\ 
 i=1,2,\ldots,c-1 &&  \\
\lambda_{1}y_{c-1}+ b_{c}^{(n)} y_{c}+ \tilde{y}_{c}&=& -\lambda, \qquad i=c, \label{eq:rk12} 
\end{eqnarray*}
where 
$\tilde{y}_{i} =(n+1)\mu \left( y_{c-1}r_{k-1,i-1}^{(0,n+1)}+y_{c}r_{k-1,i-1}^{(1,n+1)}\right)$.\\
For arbitrary $n$ and $k$, we express $x_{i}$ as follows.
\begin{eqnarray*}
x_{i} = \alpha_{i} + \beta_{i} x_{c},\ \ i=0,1,\dots,c.
\end{eqnarray*} 

It is obvious that for $i=c$ we have $\alpha_{c}=0$ and $\beta_{c}=1$. For the case $i=c-1$, substituting $x_{c-1}^{(0,n)} = \alpha_{c-1} + \beta_{c-1} x_{c}$ into (\ref{eq:rk03}) yields
\begin{eqnarray*}
\lefteqn{- b_{c}^{(n)} x_{c}  = \lambda_1 ( \alpha_{c-1} + \beta_{c-1} x_{c}) } \\ 
                      & & \mbox{} +(n+1)  \left\{ ( \alpha_{c-1} + \beta_{c-1} x_{c}) r_{k-1,c-1}^{(0,n+1)} + x_{c} r_{k-1,c-1}^{(1,n+1)}\right\}.
 \end{eqnarray*} 
The above formula is rewritten as follows. 
\begin{eqnarray*}
0 &=& \left\{\lambda_1+ (n+1)\mu r_{k-1,c-1}^{(0,n+1)} \right\} \alpha_{c-1}, \\
- b_{c}^{(n)}   -  (n+1)   r_{k-1,c-1}^{(1,n+1)} &=& \left\{  \lambda_1   +(n+1) r_{k-1,c-1}^{(0,n+1)}  \right\}  \beta_{c-1}. 
 \end{eqnarray*} 
Therefore, 
\begin{eqnarray*}
\alpha_{c-1} &=&0, \\
\beta_{c-1}  &=& - \frac{b_{c}^{(n)}  +  (n+1)   r_{k-1,c-1}^{(1,n+1)}}{\lambda_1  +(n+1)  r_{k-1,c-1}^{(0,n+1)}}. 
 \end{eqnarray*} 

For the case $i=c-2$, substituting $x_{c-2} = \alpha_{c-2} + \beta_{c-2} x_{c}$ and $x_{c-1} = \alpha_{c-1} + \beta_{c-1} x_{c}$ into (\ref{eq:rk02}) yields
\begin{eqnarray*}
- b_{c-1}^{(n)}( \alpha_{c-1} + \beta_{c-1} x_{c}) &=& \lambda ( \alpha_{c-2} + \beta_{c-2} x_{c}) +c\nu  x_{c}  \\ 
                                                                  & & \mbox{} + \tilde{x}_{c-1} + \lambda_2.  
\end{eqnarray*}

Rewriting this equation we obtain
\[
\lambda \alpha_{c-2} + \lambda_2 =0
\]
and
\begin{eqnarray*}
\lefteqn{\lambda  \beta_{c-2} + b_{c}^{(n)} \beta_{c-1} + c\nu}  \\ 
&& \mbox{} + (n+1)\mu r_{c-2}^{(0,n+1)} \beta_{c-1} +(n+1)\mu r_{c-2}^{(1,n+1)} = 0.
\end{eqnarray*}

Thus, 
\[
\alpha_{c-2} =- \frac{\lambda_2}{\lambda},
\]
and
\begin{eqnarray*}
\lefteqn{ \beta_{c-2} = } \\
 & & - \frac{b_{c-1}^{(n)}\beta_{c-1} +c\nu+(n+1)\mu r_{k-1,c-2}^{(0,n+1)}\beta_{c-1} }{\lambda} \\ 
 & & - \frac{(n+1)\mu r_{k-1,c-2}^{(1,n+1)}}{\lambda}.
\end{eqnarray*}

The case $i=0,1,\dots,c-3$ is also obtained by transforming (\ref{eq:rk01}) using the same manner. 
\begin{eqnarray*}
\alpha_{i-1} & = & - \frac{b_i^{(n)}\alpha_i +(i+1)\nu\alpha_{i+1}}{\lambda}, \qquad i=c-2,c-3, \dots,1, \\
\beta_{i-1} & = & -  \frac{ b_i^{(n)} \beta_i + (i+1)\nu\beta_{i+1}}{\lambda} \\ 
                &   & \mbox{} - \frac{(n+1)\mu r_{k-1,i-1}^{(0,n+1)} \beta_{c-1} +(n+1)\mu r_{k-1,i-1}^{(1,n+1)}}{\lambda}, \\
               &  & i=c-2,c-3,\dots,1.
\end{eqnarray*}

Furthermore, substituting $x_{0} = \alpha_{0} + \beta_{0} x_{c}$ and $x_{1} = \alpha_{1} + \beta_{1} x_{c}$ into (\ref{eq:rk00}) and arranging the result, we obtain
\[
x_{c} =- \frac{b_0^{(n)}\alpha_0+\nu\alpha_1}{b_0^{(n)}\beta_0+\nu\beta_1}.
\]
\end{proof}

%%%%%%%%%%%%%%%%%%%%%%%%%%%%%%%%%%%%%%%%%%%%%%%%%%%%%%%%%%%%%%

%%%%%%%%%%%%%%%%%%%%%%%%%%%%%%%%%%%%%%%%%%%%%%%%%%%

\end{document}